\pdfoutput=1
\documentclass[11pt]{article}
\usepackage{arxiv}
\usepackage{booktabs,longtable,array,tabularx}
\newcolumntype{L}[1]{>{\raggedright\arraybackslash}p{#1}}
\newcolumntype{Y}{>{\raggedright\arraybackslash}X}
\usepackage{graphicx}
\usepackage{subcaption}
\usepackage{float}
\usepackage{xurl}
\usepackage{csquotes}
\usepackage[hidelinks]{hyperref}
\usepackage[nameinlink,noabbrev]{cleveref}
\usepackage[backend=biber,style=numeric-comp,sorting=none,maxbibnames=99,giveninits=true]{biblatex}
\usepackage{multicol}

\hypersetup{
  pdftitle={A Mathematical Theory of Interpretation: Rational Entropy, Spectral Readout, and Confusability as a Resource},
  pdfauthor={Blake Reynolds},
  pdfsubject={Abridged core theory of the 2026 dissertation A Mathematical Theory of Interpretation},
  pdfkeywords={interpretation, Rational Entropy, information geometry, spectral theory, zero-error coding, mechanistic interpretability, method design}
}

\newtheorem{theorem}{Theorem}[section]
\newtheorem{proposition}[theorem]{Proposition}
\newtheorem{lemma}[theorem]{Lemma}
\newtheorem{corollary}[theorem]{Corollary}
\newtheorem{assumption}[theorem]{Assumption}
\newtheorem{axiom}[theorem]{Axiom}
\theoremstyle{definition}
\newtheorem{definition}[theorem]{Definition}

\newcommand{\RR}{\mathbb{R}}
\newcommand{\CC}{\mathbb{C}}
\newcommand{\HH}{\mathcal{H}}
\newcommand{\HV}{\mathcal{H}_V}
\newcommand{\HO}{\mathcal{H}_O}
\newcommand{\HR}{H_{\mathrm R}}

\newcommand{\FK}{\mathcal{F}_{K}}
\newcommand{\FU}{\mathcal{F}_{U}}
\newcommand{\FM}{\mathcal{F}_{M}}
\newcommand{\lM}{\ell(M)}
\newcommand{\lO}{\ell(O)}
\newcommand{\CO}{\mathcal{C}_O}
\newcommand{\Iobs}{I_O^{\mathrm{obs}}}
\newcommand{\lobs}{\lambda_O^{\mathrm{obs}}}
\newcommand{\PiLH}{\Pi_H^L}
\newcommand{\PiLhk}{\Pi_{\mathrm{hk}}^L}
\newcommand{\PO}{P_O}
\newcommand{\WO}{W_O}
\newcommand{\SO}{\mathbb{S}_O}
\newcommand{\SV}{\mathbb{S}_V}
\newcommand{\Fix}{\operatorname{Fix}}
\newcommand{\Ran}{\operatorname{Ran}}
\newcommand{\Dom}{\operatorname{Dom}}

\newcommand{\ip}[2]{\left\langle #1,#2\right\rangle}
\newcommand{\norm}[1]{\left\lVert #1\right\rVert}
\newcommand{\abs}[1]{\left\lvert #1\right\rvert}
\newcommand{\ind}[1]{\mathbf{1}_{#1}}
\DeclareMathOperator*{\argmin}{arg\,min}
\DeclareMathOperator*{\argmax}{arg\,max}
\DeclareMathOperator{\supp}{supp}
\DeclareMathOperator{\dist}{dist}

\renewcommand{\headeright}{Abridged Core Theory}
\renewcommand{\undertitle}{Abridged Core Theory}
\renewcommand{\shorttitle}{A Mathematical Theory of Interpretation}
\renewcommand{\authorname}{Reynolds}

\title{A Mathematical Theory of Interpretation}
\subtitle{Rational Entropy, Spectral Readout, and Confusability as a Resource}
\author{Blake Reynolds\thanks{Blake Reynolds holds a Ph.D. from The University of Wisconsin-Madison and is the founder of Conjecture Labs.}\\Conjecture Labs\\\texttt{blake@conjecturelabs.com}}
\date{August 2026}

\begin{document}
\maketitle

\begin{abstract}
This article presents the abridged core of \emph{A Mathematical Theory of Interpretation} (MTI), which treats interpretation as observer-relative spectral measurement under an access structure. MTI makes interpretation a method-design problem: access, query, utility, and medium determine what an observer can select, identify, communicate, or refuse. On a learning-invariant Hilbert realization, Rational Entropy measures residual uncertainty across knowledge, utility, and medium. In the finite-effective regime, we classify its zero set. Pairwise confusability is equivalent to uniform atomic collapse, while a unique utility maximum can select one atom even when other zero-cost states remain non-atomic. This reverses the usual zero-error role of confusability: agreement in at least one observer direction excludes unresolved multi-atom readings, while the joint label preserves identification. The corresponding free-design capacity is the product of all but the smallest direction budget. A four-condition certificate characterizes sharp, decodable, medium-faithful, and order-independent readout on a finite commuting code sector and returns typed obstructions when those guarantees fail. Together, these results establish MTI as a theoretical basis for constructing interpretation methods with explicit access assumptions, guarantees, and failure modes.
\end{abstract}
\vspace{0.35em}

\noindent\textbf{Keywords:} \textit{interpretation; Rational Entropy; information geometry; spectral measurement; observer-relative access; method design; zero-error coding; mechanistic interpretability; partial identification.}

\begin{articlecontext}
\textbf{Relation to the dissertation and article scope.}
This article abridges the core theory of the published dissertation \emph{A Mathematical Theory of Interpretation}~\cite{reynolds2026dissertation}. It retains the access-structured setup, fiber-first method-design protocol, static Rational Entropy law, compatibility boundary, and finite coding/readout results. The dissertation remains the source of record for theorem provenance and for the two-temporal, multi-observer, full-access, aggregation-mediated, and empirical developments omitted here. The theorem statements below carry their local hypotheses; in particular, finite-capacity and zero-error results use a finite atom-separated commuting realization. Full-access GEB and aggregation-mediated GAS--GNN/ECB realizations are reserved for companion papers and interactive materials. No empirical validation claim is made in this article.
\end{articlecontext}

\newpage

\section{Introduction}\label{sec:introduction}

Interpretation usually follows a model or institution producing an output. A method identifies a feature, traces a circuit, estimates a latent state, or narrates a reason. These procedures can be useful without resolving the prior mathematical question: what object has been selected when an observer reports one reading rather than another? The difficulty persists even under complete causal access. For example, a white-box learned system can expose weights, activations, gradients, and training records while leaving the relevant semantic partition, query, and readout unspecified. On the other hand, under systems with aggregation-mediated access, the problem is stricter: an institutional or statistical map may merge latent states before the observer reaches them.

The theory developed here begins from that distinction. Interpretation is an observer-relative measurement problem. Its mathematical data are not only a system state but an access structure, a query, a utility used to select among admissible readings, and a medium through which the result must be expressed. The learning mechanism supplies an invariant sector on which stable observation is posed. Rational Entropy then measures residual uncertainty in three observer directions as knowledge, utility, and medium, on the spectral outcome law induced by the query. The access structure also supplies a practical method-design rule: write the readout, characterize its fibers, name (explicitly) and validate the information that narrows those fibers, and return the strongest typed result the construction supports.

\begin{figure}[H]
\centering
\includegraphics[width=0.85\textwidth]{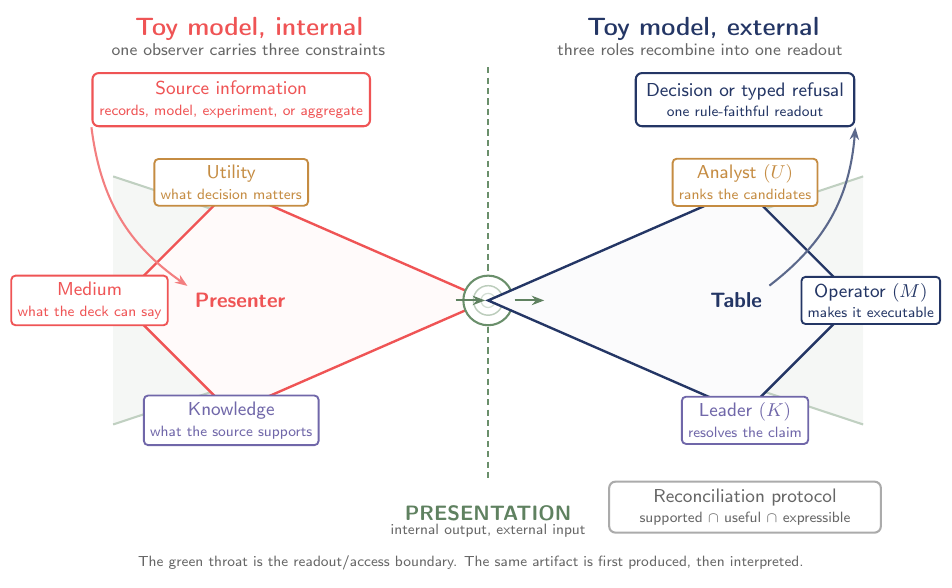}
\caption{\emph{The Meeting} toy model. The same knowledge--utility--medium structure appears first inside one presenter and then as a distributed reconciliation problem. The green throat is the access/readout boundary. The diagram is a dictionary for the construction and is pedagogical, not a claim that every organization is literally one observer or is structured in an observer-complementary manner.}
\label{fig:meeting_toy}
\rule{.75\textwidth}{0.75pt}
\end{figure}
\vspace{-1em}

\subsection{\emph{The Meeting} toy model}\label{subsec:meeting_toy}

A finite meeting provides the smallest useful dictionary for the theory. A presenter must recommend one action from
\[
\mathcal A=\{\textsf{Launch},\textsf{Pilot},\textsf{Hold}\}.
\]
The recommendation is constrained by three maps. Knowledge records which actions are supported by the source material. Utility ranks the supported actions under the declared objective. Medium records which distinctions can be stated faithfully in the presentation and converted into an executable decision. The rule is:
\vspace{-0.5em}
\begin{quote}
Choose the highest-utility action that remains supported by the accessible evidence and expressible in the operating medium; otherwise return a typed refusal.
\end{quote}
\vspace{-0.5em}
Internally, one presenter maintains all three constraints. Externally, the presentation crosses an access boundary, and the same functions may be distributed among a leader, analyst, and operator. The object entering the room is thus fixed, while the observer realization changes.

The toy model makes five distinctions easy to see and offers a mental map before the operator theory begins. First, source information and observed readout are different objects. Second, support and preference are different operations. Third, a selected internal state and a communicable label need not coincide. Fourth, a failure can be typed, such that the obstruction may be inaccessible evidence, unresolved selection, or an unfaithful medium. Fifth, the same target can occupy different access regimes for different observers.

The following corresponding design questions form a checklist that is converted into a fiber-first protocol in Section~\ref{sec:method_design}:
\vspace{-0.5em}
\begin{multicols}{2}
\begin{itemize}
    \item[$\square$] Who is the observer?
    \item[$\square$] What is the observer's position?
    \item[$\square$] What target is being claimed?
    \item[$\square$] What map mediates access to that target?
    \item[$\square$] Which query defines the measurement?
    \item[$\square$] Which utility selects among admissible outputs?
    \item[$\square$] Which medium carries the interpretation?
    \item[$\square$] Which certificate or refusal is returned?
\end{itemize}
\end{multicols}
\vspace{-1em}

\subsection{Interpretation as spectral measurement}\label{subsec:interpretation_measurement}

Let a learned system be represented by a regular statistical manifold $\lM$ with Fisher--Rao metric $g$ and Hilbert lift $\HH=L^2(\lM,d\mu_g)$. An observer $O$ supplies a hard access projector $\PO$, a graded weight $\WO$, a utility structure, a medium, and a context. A self-adjoint query realization $Q$ on the observer space induces the projection-valued spectral measure $E_Q$ and the outcome law
\begin{equation}\label{eq:intro_spectral_law}
\mu_I(B)=\ip{I}{E_Q(B)I}.
\end{equation}
The observer realization supplies measurable coarse-grainings $\pi_K,\pi_U,\pi_M$ of that outcome. Rational Entropy is
\begin{equation}\label{eq:intro_re}
\HR(I\mid O,Q)
=
H(\Lambda_Q\mid\FK)+H(\Lambda_Q\mid\FU)+H(\Lambda_Q\mid\FM).
\end{equation}
The theory evaluates stable observation on the learning-invariant sector and separates the following claims:
\begin{enumerate}[label=(\roman*)]
\item constrained minimizers satisfy a variational stationarity condition;
\item in a finite pure-point regime, the zero set of \eqref{eq:intro_re} can be classified exactly and utility selects among it;
\item in the general self-adjoint regime, a full-mass selected Borel component transfers to a Hilbert-space support statement;
\item only a selected singleton pure-point component licenses eigenvector shorthand.
\end{enumerate}
This separation is essential, noting that a Lagrange multiplier in the stationarity equation is not a spectral value of $Q$. Likewise, the implication $\mu_I(A_*)=1\Rightarrow E_Q(A_*)I=I$ does not derive the set $A_*$, but rather it translates a supplied support statement into operator language.

\subsection{From interpretability tools to a certified readout}\label{subsec:mi_positioning}

Modern mechanistic-interpretability methods make learned structure legible by choosing concrete measurement components. For example, a probe defines a restricted readout, a sparse dictionary proposes a representation basis, an activation patch specifies an intervention, and circuit analysis proposes a candidate causal support~\cite{elhage2021transformerCircuits,bricken2023monosemantic,conmy2023,belinkov2022}. These are substantive contributions to basis discovery and causal localization. However, each method has already chosen some combination of representation, query, intervention, and readout before it reports an explanation, which is why recent work emphasizes formal claim standards and causal support as well as tool construction~\cite{haufe2026,joshi2026}.

MTI treats these methods as components of an access-structured observation protocol rather than as competitors to one universal interpreter. In the MTI protocol: the target must be stated before the decoder; the readout must be typed before its fibers can be characterized; the query must admit a valid realization on the accessible invariant sector; and utility and medium must be declared before a selected state becomes a communicable reading. Therefore, while basis discovery is useful, the observation law begins only after the observer, query, admissible sector, selection rule, and medium have been specified. The full-access companion program instantiates this contract computationally, while this article provides the mathematical event-level guarantee that such implementations must satisfy to produce interpretations as certified readouts.

\subsection{Confusability as a resource}\label{subsec:confusability_resource}

Shannon's theory organizes reliable transmission around probability laws, entropy, conditioning, and channel capacity \cite{shannon1948,coverthomas2006}. In zero-error communication, a code avoids confusable messages \cite{shannon1956zeroerror}. The finite interpretation problem reverses that local requirement. Suppose a finite spectral atom set $A$ is viewed through three label maps
\[
\pi_K:A\to Y_K,
\qquad
\pi_U:A\to Y_U,
\qquad
\pi_M:A\to Y_M.
\]
For every pair of atoms, agreement in at least one direction prevents that pair from forming a zero-Rational-Entropy multi-atom support. At the same time, the joint label
\[
\mathsf s(a)=(\pi_K(a),\pi_U(a),\pi_M(a))
\]
must remain injective if the selected atom is to be identified after collapse. Controlled confusability therefore supplies sharpness, while joint distinguishability supplies readout.

The finite coding result is precise. With $d$ observer directions and finite label budgets $y_1,\ldots,y_d$, the largest injective family in which every pair agrees in at least one coordinate has size
\begin{equation}\label{eq:intro_cylinder}
N_d^*(y_1,\ldots,y_d)
=
\max_{1\le i\le d}\prod_{j\ne i}y_j.
\end{equation}
The extremal graph theorem underlying \eqref{eq:intro_cylinder} is a standard Hoffman-ratio consequence for a tensor product of complete graphs \cite{hoffman1970,delsarte1973,haemers2021hoffman}. However, here we give it a different role, such that it becomes the free-design envelope for collapse-compatible, jointly identifiable observer labels.

\subsection{Results retained in the abridged theory}\label{subsec:result_spine}

\begin{table}[H]
\centering
\small
\caption{Result spine of the abridged article.}
\label{tab:result_spine}
\begin{tabularx}{\textwidth}{@{}p{0.24\textwidth}X p{0.26\textwidth}@{}}
\toprule
\textbf{Result family} & \textbf{Conclusion} & \textbf{Regime} \\
\midrule
Access-structured method design & Compiles a target--readout pair into its fibers, added sources of resolution, validation obligations, and strongest licensed output. & General set-theoretic \\
Invariant-sector construction & The mean-ergodic projector selects the learning-invariant sector; $L$-covariance makes query restriction measurement-consistent. & Unitary/covariant \\
Finite collapse classification & Classifies every zero minimizer; pairwise confusability is equivalent to uniform atomicity; unique utility maximization gives selected atomicity. & Finite pure point \\
Static observation law & Returns selected spectral support; eigenvector language is reserved for a selected pure-point atom. & Finite derived or general conditional \\
Compatibility boundary & Commuting selected projectors remove order effects; isolated-projector order error is bounded by the query commutator. & Finite dimensional \\
Interpretive coding capacity & The free-design capacity is the product of all but the smallest direction budget. & Finite labels \\
Zero-error readout certificate & Conditions (C1)--(C4) are necessary and sufficient on the declared finite code window; failures are typed as (O1)--(O4). & Finite, separated, commuting \\
Perturbation certificate & Spectral and label margins preserve a realized code under bounded operator error. & Finite empirical approximation \\
\bottomrule
\end{tabularx}
\end{table}
\vspace{-1em}

The result families answer three different methodology questions. (1) The fiber criterion determines whether the readout identifies the target. (2) The static observation law determines whether the declared query and observer select supported spectral structure. (3) The zero-error certificate determines whether that selected structure survives query order and medium encoding as one decodable readout. Therefore, a method can stop at any earlier layer and still return a useful set, conditional distribution, sensitivity surface, or typed refusal. The method simply cannot borrow the guarantee of a later layer.

\subsection{Article structure}\label{subsec:article_structure}

Section~\ref{sec:method_design} develops the fiber-first protocol and access data. Sections~\ref{sec:invariant_spectral_setup} and~\ref{sec:static_law} give the invariant spectral setup and static observation law. Section~\ref{sec:compatibility} isolates the order boundary required by the finite certificate. Sections~\ref{sec:finite_coding} and~\ref{sec:zero_error_readout} prove the finite capacity and readout results. The appendices collect the dependency ledgers, expanded proofs, and notation. The main text retains proof maps only where they expose the logical seam between results.

\section{Access-structured method design}\label{sec:method_design}

The access structure is both a theorem boundary and a method-design grammar. It specifies where the observed object lives, which latent distinctions survive the readout, and which additional assumptions are responsible for any sharper claim. The construction is target-relative: the same observed system can provide full access to one functional and aggregation-mediated access to another. In operational form, the protocol is: write the readout, characterize its fiber, name and validate each added source of resolution, and return a typed result.

\subsection{Readout fibers and identification}\label{subsec:fibers}

Let $\mathcal X$ be a latent state space, $\mathcal Y$ an observed space, and
\[
R:\mathcal X\to\mathcal Y
\]
a readout map. For $y\in R(\mathcal X)$, define the readout fiber
\begin{equation}\label{eq:readout_fiber}
\mathcal F_R(y):=R^{-1}(y).
\end{equation}
Let $\varphi:\mathcal X\to\mathcal Z$ be the target of inference or interpretation.

\begin{proposition}[Fiber criterion]\label{prop:fiber_criterion}
The target $\varphi$ is identified from $R$ on a subset $D\subseteq\mathcal X$ if and only if $\varphi$ is constant on every nonempty fiber $\mathcal F_R(y)\cap D$. Equivalently, there exists a unique map $\widetilde\varphi:R(D)\to\mathcal Z$ satisfying
\begin{equation}\label{eq:fiber_factorization}
\varphi|_D=\widetilde\varphi\circ R|_D
\end{equation}
if and only if $R(x)=R(x')$ with $x,x'\in D$ implies $\varphi(x)=\varphi(x')$.
\end{proposition}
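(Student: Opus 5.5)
The plan is to treat ``identified from $R$ on $D$'' as meaning the existence of a map $\widetilde\varphi:R(D)\to\mathcal Z$ with $\varphi|_D=\widetilde\varphi\circ R|_D$. I will prove three equivalences: factorization, constancy on the fibers $\mathcal F_R(y)\cap D$, and the pairwise implication $R(x)=R(x')\Rightarrow\varphi(x)=\varphi(x')$. The proof is purely set-theoretic and uses only the definition \eqref{eq:readout_fiber}.

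First I show that constancy on fibers is the same as the pairwise condition. Take $x,x'\in D$ with $R(x)=R(x')=y$. Then both points lie in $\mathcal F_R(y)\cap D$, so constancy gives $\varphi(x)=\varphi(x')$. Conversely, any two points of a nonempty $\mathcal F_R(y)\cap D$ share the readout $y$, so the pairwise condition makes $\varphi$ constant there.

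Next, necessity of the pairwise condition for factorization. If \eqref{eq:fiber_factorization} holds and $R(x)=R(x')$ with $x,x'\in D$, then
\[
\varphi(x)=\widetilde\varphi(R(x))=\widetilde\varphi(R(x'))=\varphi(x').
\]

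For sufficiency, I construct $\widetilde\varphi$ directly. Each $y\in R(D)$ has a nonempty fiber $\mathcal F_R(y)\cap D$, because $y$ lies in the image. Choose any $x$ in that fiber and set $\widetilde\varphi(y):=\varphi(x)$. The main point is that this is well defined: the value does not depend on the chosen $x$, precisely because $\varphi$ is constant on the fiber. The factorization then holds by construction, since every $x\in D$ lies in the fiber over $R(x)$.

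Uniqueness comes from taking the codomain of $\widetilde\varphi$'s domain to be exactly $R(D)$. Every $y\in R(D)$ equals $R(x)$ for some $x\in D$. Any two factorizing maps $\widetilde\varphi_1,\widetilde\varphi_2$ therefore satisfy
\[
\widetilde\varphi_1(y)=\varphi(x)=\widetilde\varphi_2(y).
\]
So $R|_D$ is surjective onto $R(D)$, hence right-cancellable. I do not expect any real obstacle. The only care needed is to restrict to nonempty fibers and to define $\widetilde\varphi$ on $R(D)$ rather than on all of $\mathcal Y$, where uniqueness would fail.
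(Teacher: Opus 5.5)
Your proof is correct and follows the paper's argument: necessity by evaluating the factorization at two points of one fiber, sufficiency by choosing a representative $x\in\mathcal F_R(y)\cap D$ with well-definedness coming from fiber constancy, and uniqueness because $R|_D$ maps onto $R(D)$. You also check explicitly that fiber constancy is equivalent to the pairwise condition, which the paper leaves implicit, and your phrase ``codomain of $\widetilde\varphi$'s domain'' should simply read ``domain.''
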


\begin{resultmap}
The forward implication follows from factorization. For the converse, define $\widetilde\varphi(y)$ by evaluating $\varphi$ at any $x\in\mathcal F_R(y)\cap D$; fiber constancy makes the definition independent of the representative. The full proof appears in Appendix~\ref{app:proof_fiber}.
\end{resultmap}

Let $\mathcal A$ denote declared assumptions or side information and let $\mathcal C_{\mathcal A}\subseteq\mathcal X$ be the compatible latent class. The assumption-indexed identified set is
\begin{equation}\label{eq:identified_set}
\mathcal I_\varphi(y;\mathcal A)
:=
\left\{\varphi(x):x\in\mathcal F_R(y)\cap\mathcal C_{\mathcal A}\right\}.
\end{equation}
The data enter through $\mathcal F_R(y)$. Added restrictions enter through $\mathcal C_{\mathcal A}$. A probabilistic decoder introduces a conditional law or weighting on the intersection. An auxiliary channel may further restrict or calibrate the selected section. Equation~\eqref{eq:identified_set} keeps these sources of resolution separate.

This distinction applies at both ends of the access spectrum. Under a faithful internal medium, a learned model can preserve the relevant state while leaving the appropriate query or semantic basis unresolved. Under a many-to-one institutional medium, some distinctions have already been removed. In the latter case, no change of estimator can make $\varphi$ a function of $R$ unless the target is constant on the admissible fiber or additional information changes the admissible class.

\subsection{Fiber-first protocol}\label{subsec:fiber_protocol}

For a target-map pair $(\varphi,R)$, the method-design protocol is:
\begin{enumerate}[label=\arabic*.,leftmargin=2em,itemsep=0.2em]
\item \textbf{Declare the medium and readout.} Specify $\mathcal X$, $\mathcal Y$, and the map $R$.
\item \textbf{Declare the target.} Specify $\varphi$ before choosing the decoder or estimator.
\item \textbf{Characterize the unassisted fiber.} Compute or bound $\varphi(\mathcal F_R(y))$ under support and accounting constraints alone.
\item \textbf{Classify each added source of resolution.} Distinguish exact side information, substantive restrictions, priors, decoders, anchor channels, and calibration bridges.
\item \textbf{Return the strongest licensed object.} Return a point only when $\varphi$ is constant on the admissible fiber; otherwise return a set, a decoder-conditional distribution or band, a sensitivity surface, or a typed refusal.
\end{enumerate}

The protocol does not prescribe one estimator. It compiles the observation regime into objects that any estimator must declare. In a learned-system audit, $R$ may be a layer, probe, dictionary, or intervention readout. In aggregate inference, $R$ is the institutional or statistical aggregation rule. In both cases, the readout defines the fiber on which later interpretation is attempted. Any restriction, decoder, auxiliary channel, or calibration bridge used in step 4 must be validated on the object it contributes before it can narrow the output in step 5. The report then pairs the returned object with the applicable measurement, identification, or readout certificate (or with the corresponding typed refusal).

\subsection{Interpretability methods as protocol components}\label{subsec:mi_protocol_components}

The protocol places familiar interpretability methods in a common dependency order. A sparse dictionary or representation decomposition proposes coordinates in which candidate structure can be expressed. A probe supplies a readout and a prediction task. Activation patching and circuit analysis supply interventions and candidate causal supports. Each method can improve access to the target, but none alone determines whether the target is constant on the induced fiber, whether the query has selected supported spectral structure, or whether the selected label survives the medium. These are separate obligations.

This ordering is especially useful under full causal access. Complete access to weights, activations, and trajectories can make the internal state addressable while leaving the semantic map and query unresolved. In that regime, the method-design problem is not to reconstruct a latent state that has already been lost; it is to specify which state functional is being measured, what observer distinctions define the outcome, and what certificate turns the selected support into a communicable reading. Under aggregation-mediated access, the same protocol begins one step earlier because the readout map has already merged latent states.

The method contribution of MTI can therefore be summarized as a four-layer dependency. The declaration layer fixes a partial order for the observation problem: the observer and access regime are stated; the medium and readout map are fixed before the readout fiber is analyzed; and the target, query, and utility rule are declared before selection. The fiber criterion then licenses an identification claim from the declared readout. The static observation law licenses a query-conditioned support claim on the accessible invariant sector. The finite certificate licenses a zero-error readout claim by requiring spectral sharpness, unique selection, medium faithfulness, and order independence on the declared code sector. Methods that supply only one component or complete only part of this dependency remain useful; however, they return the object established at that stage rather than a stronger interpretation claim made possible by MTI.

\subsection{Observer and access data}\label{subsec:observer_access_data}

Let $M$ be an ambient space of informational configurations and let $L_t:M\to M$ be a measurable learning mechanism. Fix an observation horizon or realized training regime and write
\begin{equation}\label{eq:learned_manifold}
\lM:=L_{t_*}(M).
\end{equation}
Assume $\lM$ is a regular statistical manifold with model family $\{p_\theta\}$ and Fisher--Rao metric
\begin{equation}\label{eq:fisher_metric}
g_{ij}(\theta)
=
\mathbb E_\theta\!\left[\partial_i\log p_\theta\,\partial_j\log p_\theta\right].
\end{equation}
In a Hessian or dually flat chart, one may additionally write $g_{ij}=\partial_i\partial_j\Psi$ for a convex potential $\Psi$; this local representation is not assumed globally \cite{rao1945,cencov1982,amari2000,amari2016}. Let
\begin{equation}\label{eq:hilbert_lift}
\HH:=L^2(\lM,d\mu_g)
\end{equation}
be the Hilbert lift.

\begin{definition}[Observer]\label{def:observer}
An observer is a tuple
\begin{equation}\label{eq:observer_tuple}
O=(\lO,m_O,U_O,c_O).
\end{equation}
Here $\lO\subseteq\lM$ is the observer's learned state, $m_O$ is the readout medium, and $U_O$ is an observer-relative utility structure. The context $c_O$ selects the active admissible query class.
\end{definition}

Let $\kappa_O:\lM\to[0,1]$ be a measurable accessibility kernel. Define the hard admissible cone and its orthogonal support projector by
\begin{equation}\label{eq:hard_access}
\CO=\{x\in\lM:\kappa_O(x)>0\},
\qquad
(\PO f)(x)=\ind{\CO}(x)f(x),
\end{equation}
and define the soft accessibility weight
\begin{equation}\label{eq:soft_access}
(\WO f)(x)=\kappa_O(x)f(x).
\end{equation}
The observer space is
\begin{equation}\label{eq:observer_space}
\HO=\PO\HH=L^2(\CO,d\mu_g).
\end{equation}
Hard support and graded accessibility are not interchangeable. $\PO$ decides which states are admissible. $\WO$ changes weighting or quality inside that domain but cannot create support where $\PO=0$.

\begin{definition}[Access regime]\label{def:access_regime}
An access regime is the specialization of $\PO$, $\WO$, and any readout map through which the observer receives the system. The canonical cases are:
\begin{center}
\small
\begin{tabularx}{0.95\textwidth}{@{}L{0.18\textwidth}L{0.3\textwidth}Y@{}}
\toprule
\textbf{Regime} & \textbf{Signature} & \textbf{Consequence} \\
\midrule
Full hard access & $\PO=I$ on the target domain & The full admissible sector is measurable. \\
Partial hard access & $0\ne\PO\ne I$ & Observation is posed on a proper compressed domain. \\
No hard access & $\PO=0$ on the target region & No nonzero normalized observed state is supported there. \\
Graded access & $\WO$ nontrivial on $\Ran\PO$ & Resolvability changes without enlarging hard support. \\
Aggregation-mediated & observation factors through $R:\mathcal X\to\mathcal Y$ & The primary observed law lives on the codomain of $R$; latent pullback requires additional structure. \\
\bottomrule
\end{tabularx}
\end{center}
\end{definition}

\begin{proposition}[Access specialization]\label{prop:access_specialization}
The static observation problem is posed on $\PO\HH$. If $\PO=0$ on a target region, no normalized observed state can be supported there. The soft operator $\WO$ changes weighting only inside $\PO\HH$. If observation factors through $R$, then the observer's primary query and outcome law are defined on the observed space; a latent-space interpretation requires a justified pullback.
\end{proposition}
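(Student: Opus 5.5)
The plan is to treat the proposition as four separate claims, each following directly from the definitions in \eqref{eq:hard_access}--\eqref{eq:observer_space} together with the fiber criterion (Proposition~\ref{prop:fiber_criterion}).

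\textbf{First claim.} This is definitional. An observed state for $O$ is by construction an element of $\HO=\PO\HH$. A query realization $Q$ is self-adjoint on $\HO$, so its spectral measure $E_Q$ and the law $\mu_I$ in \eqref{eq:intro_spectral_law} are defined only for $I\in\HO$. I will record that $\PO$ is an orthogonal projector: multiplication by the indicator $\ind{\CO}$ is idempotent and self-adjoint on $L^2(\lM,d\mu_g)$. Hence $\HO$ is a closed subspace, canonically isomorphic to $L^2(\CO,d\mu_g)$, and the static problem is well posed there.

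\textbf{Second claim.} Read ``$\PO=0$ on a target region $T$'' as $\mu_g(T\cap\CO)=0$, or equivalently $\PO\ind{T}=0$. Take any normalized $I\in\HO$, so $I=\PO I$. Then
\[
\norm{\ind{T}I}^2=\int_{T\cap\CO}\abs{I}^2\,d\mu_g=0.
\]
So $I$ vanishes $\mu_g$-a.e.\ on $T$, and $T$ carries no mass of any observed state. Normalization $\norm{I}=1$ then forces all mass onto $\CO\setminus T$, which rules out support on $T$.

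\textbf{Third claim.} Here I show that $\WO$ commutes with $\PO$ and satisfies $\WO=\PO\WO\PO$. Since $\kappa_O$ vanishes outside $\CO$, we have $\kappa_O=\kappa_O\ind{\CO}$ pointwise, so $\WO f=\WO\PO f=\PO\WO f$. Two consequences follow:
\begin{itemize}
\item $\Ran\WO\subseteq\Ran\PO$, so $\WO$ cannot create support where $\PO=0$;
\item $\ker\PO\subseteq\ker\WO$, so states outside the admissible domain stay invisible under the weight.
\end{itemize}
Because $0\le\kappa_O\le1$, $\WO$ is a bounded positive contraction. It therefore only reweights inside $\HO$, which gives the assertion.

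\textbf{Fourth claim (aggregation).} Suppose observation factors through $R:\mathcal X\to\mathcal Y$. Then every observed quantity is a function of $R(x)$, so the query, spectral measure and outcome law are constructed on the Hilbert space over $\mathcal Y$ with pushforward measure $R_*\mu$. A latent functional $\varphi$ is recoverable from the observed law exactly when it factors through $R$ on the admissible domain. By Proposition~\ref{prop:fiber_criterion}, this holds if and only if $\varphi$ is constant on the fibers $\mathcal F_R(y)\cap D$.

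When fiber constancy fails, I will exhibit the obstruction explicitly. Choose $x\neq x'$ in one fiber with $\varphi(x)\neq\varphi(x')$; they produce identical observations, so no map on $\mathcal Y$ separates them. This shows that a latent interpretation needs added structure — side information shrinking $\mathcal C_{\mathcal A}$, a decoder, or a calibration bridge — which is what ``justified pullback'' means via \eqref{eq:identified_set}.

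\textbf{Main obstacle.} I expect this to be the fourth claim, because ``justified pullback'' is informal. I will make it precise as the fiber-constancy condition on $\mathcal F_R(y)\cap\mathcal C_{\mathcal A}$. I will also handle the measure-theoretic point that $L^2(\mathcal Y,R_*\mu)$ embeds isometrically into $L^2(\mathcal X,\mu)$ via $g\mapsto g\circ R$, with range the $R$-measurable functions. That embedding is what makes the observed law primary on $\mathcal Y$.
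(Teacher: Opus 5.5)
Your proposal is correct and follows the paper's proof essentially step for step: $\PO$ as multiplication by $\ind{\CO}$ (self-adjoint, idempotent, closed range), vanishing of every admissible vector on a region where $\PO=0$, the identity $\kappa_O=\kappa_O\ind{\CO}$ so that $\WO$ cannot create support outside $\Ran\PO$, and the observed law living on the codomain of $R$. Your use of Proposition~\ref{prop:fiber_criterion} to make ``justified pullback'' precise goes slightly beyond the paper, which treats that clause purely as a type requirement, and it matches how the paper later argues the readout-only converse; for an observed law $R_\#\mu$ rather than a point readout, apply the criterion to the map $\mu\mapsto R_\#\mu$ instead of to $R$.
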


The proof is a direct consequence of $\PO$ being multiplication by an indicator and of the type of the observed object. It is given in Appendix~\ref{app:proof_access}.

\subsection{Queries and primitive axioms}\label{subsec:queries_axioms}

\begin{definition}[Admissible query realization]\label{def:query_realization}
An admissible query realization is a tuple
\begin{equation}\label{eq:query_realization}
(q,\Lambda(q),D_q,Q_{O,q}),
\end{equation}
where $\Lambda(q)$ is a symmetric pre-operator or closed quadratic form on $\HH$, $D_q$ is its domain, and $Q_{O,q}$ is a specified self-adjoint operator on $\HO$. In the bounded case, one may take
\begin{equation}\label{eq:bounded_query_compression}
Q_{O,q}=\PO\Lambda(q)\PO|_{\HO}.
\end{equation}
In unbounded or nonreducing cases, the self-adjoint realization is part of the query data.
\end{definition}

The article uses four primitive axioms.
\begin{axiom}[Admissible domain]\label{ax:domain}
Interpretation is defined only on $\CO$. Information outside the hard cone is inadmissible for observer $O$.
\end{axiom}
\begin{axiom}[Observable realization]\label{ax:observable}
Queries are evaluated through the Hilbert lift of accessible information and require a specified self-adjoint realization on the relevant observer or invariant-sector domain.
\end{axiom}
\begin{axiom}[Learning invariance]\label{ax:learning}
Stable interpretation is insensitive to nuisance variation along the declared learning symmetry. It is evaluated on the $L$-invariant, recoverable sector.
\end{axiom}
\begin{axiom}[Variational selection]\label{ax:selection}
Observed interpretation is selected by minimizing Rational Entropy relative to the observer and query, followed by the declared utility and medium rules.
\end{axiom}

These axioms do not assert that every query has a unique answer. They type the domain, operator, invariant restriction, and selection functional. Existence, atomicity, uniqueness, and decodability enter through separate hypotheses and theorems. Appendix~\ref{app:ledgers} records the complete article-level dependency ledger.

\section{Learning-invariant spectral setup}\label{sec:invariant_spectral_setup}

The static theory distinguishes the state produced by learning from variation along the learning path. It does so operator-theoretically: a Koopman representation of the declared learning flow determines a fixed-point sector, and admissible queries must be covariant with that representation. This section states the minimal construction needed by the observation law.

\subsection{Koopman lift and mean-ergodic projection}\label{subsec:koopman_projection}

Assume the learning mechanism induces a measurable flow
\begin{equation}\label{eq:learning_flow}
\Phi_t:\lM\to\lM,
\end{equation}
with associated Koopman operators
\begin{equation}\label{eq:koopman}
(U_tf)(x)=f(\Phi_t(x)).
\end{equation}
For the static regime, assume $\Phi_t$ preserves $\CO$ modulo null sets, is invertible and measure preserving on that domain, and induces a strongly continuous unitary representation on $\HO$. The invariant sector is
\begin{equation}\label{eq:invariant_sector}
\HV:=\Fix(U)=\{f\in\HO:U_tf=f\text{ for all }t\}.
\end{equation}

\begin{proposition}[Mean-ergodic invariant projection]\label{prop:mean_ergodic}
Under the standing unitary hypothesis, $\HV$ is closed and the strong limit
\begin{equation}\label{eq:mean_ergodic}
\PiLH f
=
\lim_{T\to\infty}\frac1T\int_0^T U_tf\,dt
\end{equation}
exists as the orthogonal projector from $\HO$ onto $\HV$.
\end{proposition}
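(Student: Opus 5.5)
This is von Neumann's mean ergodic theorem for a strongly continuous unitary group, and the plan is the standard Hilbert-space splitting argument. Throughout I use the standing hypotheses of this section. $(U_t)_{t\in\RR}$ is a strongly continuous unitary group on $\HO$, so $\norm{U_t}=1$. Strong continuity makes $t\mapsto U_tf$ Bochner (equivalently, Riemann) integrable on $[0,T]$, so the Cesàro average
\[
A_T f:=\frac1T\int_0^T U_tf\,dt
\]
is well defined and satisfies $\norm{A_T}\le 1$.

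\textbf{Closedness.} We have $\HV=\bigcap_t\ker(U_t-I)$. Each $\ker(U_t-I)$ is a closed subspace because $U_t$ is bounded. An intersection of closed subspaces is closed, so $\HV$ is a closed subspace.

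\textbf{Orthogonal decomposition.} I will show that
\[
\HO=\HV\oplus\overline{\operatorname{span}}\{g-U_sg:\ g\in\HO,\ s\in\RR\}.
\]
The key identity is that, for unitary $U_s$, the conditions $U_sf=f$ and $U_s^*f=f$ are equivalent. Indeed, applying $U_s^*=U_s^{-1}$ to $U_sf=f$ gives $f=U_s^*f$, and the converse is symmetric.

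Now suppose $h\perp(g-U_sg)$ for all $g$ and $s$. Then $\ip{h-U_s^*h}{g}=0$ for every $g$, so $U_s^*h=h$, and hence $U_sh=h$. Thus $h\in\HV$. The same computation run in reverse shows that every element of $\HV$ is orthogonal to all vectors $g-U_sg$. Together these give the decomposition.

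\textbf{Limit on each summand.} On $\HV$ the average acts trivially: $A_Tf=f$ for every $T$.

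On a generator $g-U_sg$, fix $s\ge0$ (the case $s<0$ follows using $U_{-s}=U_s^*$). The group law gives
\[
A_T(g-U_sg)=\frac1T\Big(\int_0^s U_tg\,dt-\int_T^{T+s}U_tg\,dt\Big),
\]
whose norm is at most $2s\norm g/T\to0$.

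By linearity, $A_T\to 0$ on the span of the generators. The uniform bound $\norm{A_T}\le1$ and an $\varepsilon/3$ argument extend this to the closure. It follows that $A_Tf\to Pf$ strongly for every $f$, where $P$ is the orthogonal projector onto $\HV$. This identifies $\PiLH=P$.

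The only step needing care is the orthogonal decomposition, specifically the equivalence $U_sf=f\iff U_s^*f=f$, which is where unitarity is used. The remaining steps are routine.
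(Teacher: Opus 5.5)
Your proof is correct and follows the same route as the paper: $\HV$ is closed as the intersection of the kernels of the bounded operators $U_t-I$, and the Ces\`aro averages converge strongly to the orthogonal projector onto $\HV$ by the continuous-time mean ergodic theorem. The only difference is that the paper cites that theorem from standard references, whereas you reprove it with the von Neumann splitting $\HO=\HV\oplus\overline{\operatorname{span}}\{g-U_sg\}$; your argument is self-contained and valid under the standing unitary hypothesis.
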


This is the standard mean-ergodic construction for unitary representations on Hilbert space \cite{conway1990,reed_simon1980}. We write
\begin{equation}\label{eq:housekeeping_decomposition}
\PiLhk=I_{\HO}-\PiLH,
\qquad
\HO=\HV\oplus\HV^\perp.
\end{equation}
The term \emph{housekeeping} refers only to the orthogonal complement of the declared learning-invariant structure. It does not imply that entropy minimization by itself removes that complement. Stable observation is restricted to $\HV$ by Axiom~\ref{ax:learning}; the finite theorem below shows that this restriction is variationally lossless when the invariant sector contains an admissible atom.

\subsection{\texorpdfstring{$L$}{L}-covariant queries and safe compression}\label{subsec:covariant_queries}

\begin{definition}[$L$-covariant query]\label{def:L_covariant}
An admissible self-adjoint query $Q$ on $\HO$ is $L$-covariant if its spectral measure commutes with the learning representation:
\begin{equation}\label{eq:L_covariance}
U_tE_Q(B)=E_Q(B)U_t
\end{equation}
for every Borel $B\subseteq\sigma(Q)$ and every $t$.
\end{definition}

The covariance condition ensures that the query concerns learned invariant structure rather than an arbitrary learning-path coordinate. It also supplies the operator consistency needed to restrict the query.

\begin{lemma}[Safe self-adjoint compression]\label{lem:safe_compression}
Let $T$ be self-adjoint on a Hilbert space $\HH$ and let $P$ be an orthogonal projector.
\begin{enumerate}[label=(\roman*)]
\item If $T$ is bounded, then $PTP|_{P\HH}$ is bounded self-adjoint on $P\HH$.
\item If $P$ reduces $T$, equivalently $P$ commutes with every spectral projection of $T$, then $T|_{P\HH\cap\Dom(T)}$ is self-adjoint on $P\HH$ and
\begin{equation}\label{eq:restricted_pvm}
E_{T|_{P\HH}}(B)=E_T(B)|_{P\HH}.
\end{equation}
\item In the unbounded nonreducing case, a self-adjoint compression must be supplied by a closed quadratic form or another explicit realization and is part of the query data.
\end{enumerate}
\end{lemma}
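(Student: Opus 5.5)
Part (i) needs only boundedness and symmetry. I will check directly that for $f,g\in P\HH$ we have $\ip{PTPf}{g}=\ip{f}{PTPg}$. This follows because $P=P^*$ and $T=T^*$ for bounded $T$. Moreover $\norm{PTP}\le\norm{T}$, and $PTP$ maps $P\HH$ into itself. A bounded symmetric operator defined everywhere on a Hilbert space is self-adjoint, which settles (i).

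For part (ii), the plan is the standard reducing-subspace argument. First I will record the equivalence in the hypothesis. If $P$ commutes with every $E_T(B)$, then $P$ commutes with $f(T)$ for all bounded Borel $f$, in particular with the resolvent $(T-i)^{-1}$. Conversely, commuting with the resolvents gives commuting with the spectral projections via Stone's formula or the functional calculus. From $P(T-i)^{-1}=(T-i)^{-1}P$ I will derive three facts:
\begin{enumerate}[label=(\alph*)]
\item $P\Dom(T)\subseteq\Dom(T)$;
\item $TPx=PTx$ for $x\in\Dom(T)$;
\item $\Dom(T)\cap P\HH=P\Dom(T)$.
\end{enumerate}
Hence $T_P:=T|_{P\HH\cap\Dom(T)}$ maps into $P\HH$ and is symmetric.

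For self-adjointness of $T_P$ I will use the basic criterion $\Ran(T_P\pm i)=P\HH$. Given $y\in P\HH$, set $x=(T\pm i)^{-1}y$. Then $Px=(T\pm i)^{-1}Py=x$, so $x\in P\HH\cap\Dom(T)$ and $(T_P\pm i)x=y$.

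The spectral identity \eqref{eq:restricted_pvm} comes next. Define $F(B):=E_T(B)|_{P\HH}$. Commutation makes $F$ a projection-valued measure on $P\HH$, and $F(\RR)=I_{P\HH}$. For $x\in P\HH\cap\Dom(T)$ the spectral measures satisfy $\ip{x}{F(B)x}=\ip{x}{E_T(B)x}$, and so
\[
\int\lambda\,dF=T_P
\]
holds on the natural domain $\{x\in P\HH:\int\lambda^2d\ip{x}{E_T x}<\infty\}=P\HH\cap\Dom(T)$. Uniqueness of the spectral measure of a self-adjoint operator then yields $E_{T_P}=F$.

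Part (iii) is definitional rather than a theorem: absent reduction, $PTP$ may fail even to be densely defined or closable on $P\HH$. The standard remedy is a closed semibounded form $q(f)=\ip{T^{1/2}f}{T^{1/2}f}$ restricted to $P\HH$. Such a form has a unique associated self-adjoint operator by the representation theorem, but that operator need not agree with any naive restriction, which is why it must be declared. I would cite Kato's representation theorem and note that a short example (e.g.\ $T=-d^2/dx^2$ compressed by a multiplication projector with discontinuous indicator) shows the domain issue.

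The main obstacle is the domain bookkeeping in (ii), namely establishing $P\Dom(T)\subseteq\Dom(T)$ from spectral commutation. I would handle it through the resolvent identity as above rather than through graph-norm arguments.
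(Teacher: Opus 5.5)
Your proposal is correct. Parts (i) and (iii) follow the paper. In (i) you use the same computation $(PTP)^*=PT^*P=PTP$, together with invariance of $P\HH$. In (iii), both you and the paper treat the statement as a remark rather than a theorem, pointing to the form representation theorem or another declared realization.

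Part (ii) is where you take a different route. The paper takes reduction in its domain sense and writes $T=T_1\oplus T_2$ along $P\HH\oplus(I-P)\HH$. It then reads off self-adjointness of $T_1=T|_{P\HH\cap\Dom(T)}$ from that of $T$; the fact behind its sentence is $(T_1\oplus T_2)^*=T_1^*\oplus T_2^*$. It then asserts \eqref{eq:restricted_pvm} as the statement that the functional calculus commutes with $P$. Your argument runs as follows:
\begin{enumerate}
\item You start from the spectral-projection form of the hypothesis and pass to commutation with $(T\pm i)^{-1}$.
\item You derive the domain facts (a)--(c), which the paper's decomposition presupposes.
\item You check $\Ran(T_P\pm i)=P\HH$ directly.
\item You prove \eqref{eq:restricted_pvm} by restricting $E_T$ to a projection-valued measure on $P\HH$ and invoking uniqueness in the spectral theorem.
\end{enumerate}

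The paper's route is shorter and handles both summands at once. Yours is self-contained and actually proves the spectral identity rather than asserting it. It also begins from exactly the hypothesis the covariance argument later supplies, namely commutation of $\PiLH$ with every $E_Q(C)$. So you need no separate appeal to the equivalence of the two notions of reduction.

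Two small remarks. First, state explicitly that $T_P$ is densely defined in $P\HH$; this is immediate from (c), because $P$ maps the dense set $\Dom(T)$ onto a dense subset of $P\HH$. Second, in (iii) the form $\norm{T^{1/2}f}^2$ presupposes $T$ semibounded. It yields an operator on all of $P\HH$ only if $\Dom(T^{1/2})\cap P\HH$ is dense there. Your half-line example does not show failure of dense definedness. It exhibits a densely defined closed symmetric compression that is not self-adjoint, with domain $\{g\in H^2(0,\infty):g(0)=g'(0)=0\}$, and the form selects the Dirichlet extension. That nicely illustrates why the realization is part of the query data.
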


Case (i) guarantees self-adjointness but not equality of outcome laws. For example, take $\HO=\CC^2$,
\[
U_t=\operatorname{diag}(1,e^{it}),
\qquad
\HV=\operatorname{span}(e_1),
\qquad
Q=\begin{pmatrix}0&1\\1&0\end{pmatrix}.
\]
The compression $\PiLH Q\PiLH|_{\HV}=0$ is self-adjoint. Nevertheless, the $Q$-law of $e_1$ is $\tfrac12\delta_{1}+\tfrac12\delta_{-1}$, while the compressed-law is $\delta_0$. The seam is not self-adjointness but failure of reduction. Under $L$-covariance, each spectral projector of $Q$ commutes with every $U_t$ and hence with $\PiLH$; therefore $\PiLH$ reduces $Q$, and the restriction
\begin{equation}\label{eq:QL}
Q_L:=Q|_{\HV\cap\Dom(Q)}
\end{equation}
is self-adjoint with the restricted spectral measure. The proof is in Appendix~\ref{app:proof_compression_covariance}.

\subsection{Spectral outcome law}\label{subsec:spectral_law}

Let $Q_L$ be a self-adjoint admissible invariant-sector query with projection-valued spectral measure $E_{Q_L}$. For a normalized state $I\in\SV:=\{I\in\HV:\norm I=1\}$, define
\begin{equation}\label{eq:spectral_law}
\mu_I(B)=\ip{I}{E_{Q_L}(B)I},
\qquad
B\in\mathcal B(\sigma(Q_L)).
\end{equation}
The canonical outcome random variable is the identity map $\Lambda_{O,Q}(\lambda)=\lambda$ on the spectral probability space
\[
(\sigma(Q_L),\mathcal B(\sigma(Q_L)),\mu_I).
\]

\begin{lemma}[Spectral law and support projection]\label{lem:support_projection}
For every Borel set $A\subseteq\sigma(Q_L)$,
\begin{equation}\label{eq:support_projection}
\mu_I(A)=1
\quad\Longleftrightarrow\quad
E_{Q_L}(A)I=I.
\end{equation}
If $A=\{\lambda\}$ is a pure-point atom, then $E_{Q_L}(\{\lambda\})I=I$ is equivalent to $Q_LI=\lambda I$.
\end{lemma}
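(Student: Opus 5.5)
The plan is to reduce both equivalences to the Pythagorean identity for orthogonal projections. Write $P:=E_{Q_L}(A)$, which is an orthogonal projector on $\HV$ by the projection-valued measure axioms. For a unit vector $I\in\SV$,
\[
\mu_I(A)=\ip{I}{PI}=\ip{PI}{PI}=\norm{PI}^2 ,
\]
using $P=P^*=P^2$. The orthogonal decomposition $I=PI+(I-PI)$ then gives
\[
\norm{I-PI}^2=\norm{I}^2-\norm{PI}^2=1-\mu_I(A).
\]
This identity yields both directions of \eqref{eq:support_projection} at once. If $\mu_I(A)=1$, then $\norm{I-PI}=0$, so $PI=I$. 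Conversely, if $PI=I$, then $\mu_I(A)=\ip{I}{I}=1$. No domain issues arise here, because spectral projections are bounded even when $Q_L$ is unbounded.

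For the atomic statement, take $A=\{\lambda\}$ with $E_{Q_L}(\{\lambda\})\neq 0$. I would use the standard fact from the spectral theorem that $\Ran E_{Q_L}(\{\lambda\})=\ker(Q_L-\lambda)$.

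Suppose first that $E_{Q_L}(\{\lambda\})I=I$. Then $\mu_I$ is the point mass $\delta_\lambda$. Since $\int|t|^2\,d\mu_I<\infty$, the vector $I$ lies in $\Dom(Q_L)$. The functional calculus then gives
\[
Q_LI=\int t\,dE_{Q_L}(t)\,I=\lambda E_{Q_L}(\{\lambda\})I=\lambda I .
\]

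Suppose instead that $I\in\Dom(Q_L)$ and $Q_LI=\lambda I$. Then
\[
0=\norm{(Q_L-\lambda)I}^2=\int|t-\lambda|^2\,d\mu_I(t).
\]
So $\mu_I$ is concentrated on $\{\lambda\}$, which means $\mu_I(\{\lambda\})=1$. The first part of the lemma then gives $E_{Q_L}(\{\lambda\})I=I$.

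The only step needing care is the unbounded case of the atomic clause. There one must confirm domain membership through the finiteness of $\int|t|^2\,d\mu_I$, and evaluate $\norm{(Q_L-\lambda)I}^2$ as an integral against $\mu_I$. Both are standard consequences of the spectral theorem for self-adjoint operators \cite{reed_simon1980}, so I do not expect a genuine obstacle.

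Finally, I would note that this lemma only translates a given full-mass set $A$ into operator language. It does not produce such a set, which is consistent with the remark in Section~\ref{subsec:interpretation_measurement}.
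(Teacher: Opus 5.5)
Your proof is correct and follows the paper's argument. The identity $\norm{(I-P)I}^2=1-\mu_I(A)$ gives the first equivalence, and the atomic clause reduces to $\Ran E_{Q_L}(\{\lambda\})=\ker(Q_L-\lambda I)$; you also verify this identity directly through the functional calculus, including the domain bookkeeping that the paper leaves implicit.
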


The proof uses only the identity
\[
\norm{(I-E_{Q_L}(A))I}^2
=
1-\ip{I}{E_{Q_L}(A)I}.
\]
This lemma is the operator-theoretic support transfer used by the general observation law. It does not supply $A$.

\begin{figure}[ht]
\centering
\includegraphics[width=0.75\textwidth]{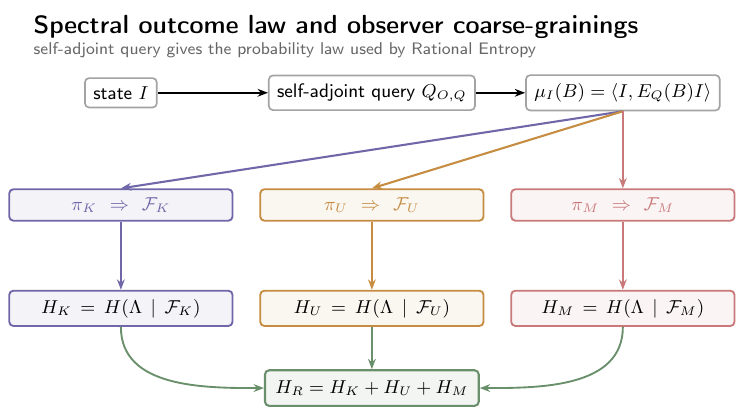}
\caption{Exact spectral outcome law. A self-adjoint query induces a projection-valued spectral measure and hence a probability law. The observer maps that one outcome law into knowledge, utility, and medium coarse-grainings.}
\label{fig:spectral_law}
\rule{.75\textwidth}{0.75pt}
\end{figure}

\subsection{Observer realization on the spectral outcome}\label{subsec:observer_realization}

\begin{definition}[Observer realization]\label{def:observer_realization}
An observer realization of $(O,Q)$ is a triple of measurable maps
\begin{equation}\label{eq:observer_maps}
\pi_K:\sigma(Q_L)\to Y_K,
\qquad
\pi_U:\sigma(Q_L)\to Y_U,
\qquad
\pi_M:\sigma(Q_L)\to Y_M,
\end{equation}
with generated sub-$\sigma$-algebras
\begin{equation}\label{eq:observer_sigma}
\FK=\sigma(\pi_K\circ\Lambda_{O,Q}),
\qquad
\FU=\sigma(\pi_U\circ\Lambda_{O,Q}),
\qquad
\FM=\sigma(\pi_M\circ\Lambda_{O,Q}).
\end{equation}
The maps represent distinctions resolved by knowledge/context, utility indifference classes, and the readout medium.
\end{definition}

The joint semantic map is
\begin{equation}\label{eq:semantic_map_general}
\mathsf s_{O,Q}(\lambda)
=
(\pi_K(\lambda),\pi_U(\lambda),\pi_M(\lambda)).
\end{equation}
Two spectral values are iso-semantic when their joint labels agree. A unique spectral atom and a unique communicable label are therefore separate claims: spectral simplicity is an operator property, while medium faithfulness and joint injectivity are properties of the observer realization.

On a finite pure-point window $S\subseteq\sigma_{\mathrm{pp}}(Q_L)$, call the realization \emph{pairwise confusable} if every distinct $\lambda,\lambda'\in S$ agrees in at least one direction:
\begin{equation}\label{eq:pairwise_confusable_setup}
\pi_K(\lambda)=\pi_K(\lambda')
\quad\text{or}\quad
\pi_U(\lambda)=\pi_U(\lambda')
\quad\text{or}\quad
\pi_M(\lambda)=\pi_M(\lambda').
\end{equation}
Pairwise confusability controls the entire zero-minimizer class. The utility rule, by contrast, may select one atom without pairwise confusability if one atom uniquely maximizes the declared utility. This distinction is the central content of Theorem~\ref{thm:finite_collapse}.

\section{Rational Entropy and the static observation law}\label{sec:static_law}

Rational Entropy is defined on the spectral probability space induced by the query. Its three terms charge unresolved outcome uncertainty separately under knowledge, utility, and medium. This per-direction redundancy is deliberate: zero cost requires the outcome to be resolved by each observer direction, not merely by their joint refinement.

\subsection{Definition and zero set}\label{subsec:re_definition}

\begin{definition}[Rational Entropy]\label{def:rational_entropy}
Provided the conditional entropies are defined,
\begin{equation}\label{eq:rational_entropy}
\boxed{
\HR(I\mid O,Q)
=
H(\Lambda_{O,Q}\mid\FK)
+
H(\Lambda_{O,Q}\mid\FU)
+
H(\Lambda_{O,Q}\mid\FM).
}
\end{equation}
\end{definition}

In the finite or pure-point regime, the terms in \eqref{eq:rational_entropy} are ordinary Shannon conditional entropies \cite{shannon1948,coverthomas2006}. For a finite law
\[
\mu_I=\sum_{j=1}^N p_j\delta_{\lambda_j},
\qquad
p_j=\norm{E_{Q_L}(\{\lambda_j\})I}^2,
\]
and a partition $\{B_k\}$ induced by one observer map $\pi$, the corresponding term is
\begin{equation}\label{eq:discrete_conditional_entropy}
H(\Lambda\mid\sigma(\pi\circ\Lambda))
=
-\sum_k\sum_{\lambda_j\in B_k}
p_j\log_2\!\left(
\frac{p_j}{\sum_{\lambda_i\in B_k}p_i}
\right).
\end{equation}
Each term is nonnegative. Under the standard continuity hypotheses on the induced law, $\HR$ is lower semicontinuous. If the admissible class is compact in the corresponding topology, a minimizer exists.

Strictly positive weights may be inserted in front of the three terms without changing the zero set. They alter the landscape away from zero but not the static minimizer class whenever a zero state is available.

\begin{proposition}[Finite zero criterion]\label{prop:zero_criterion}
Let $\mu_I$ have finite support $S$. Then
\begin{equation}\label{eq:zero_criterion_main}
\HR(I\mid O,Q)=0
\quad\Longleftrightarrow\quad
\pi_K,\pi_U,\pi_M
\text{ are each injective on }\supp\mu_I.
\end{equation}
\end{proposition}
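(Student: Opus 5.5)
The plan is to reduce the statement to a per-direction claim and then prove that claim from the explicit formula \eqref{eq:discrete_conditional_entropy}. By Definition~\ref{def:rational_entropy}, $\HR$ is a sum of three conditional entropies, and each is nonnegative, since every summand in \eqref{eq:discrete_conditional_entropy} has the form $-p_j\log_2(p_j/\sum_{\lambda_i\in B_k}p_i)$ with ratio in $(0,1]$. Hence $\HR(I\mid O,Q)=0$ holds exactly when all three terms vanish. It therefore suffices to prove the following for one measurable map $\pi$ with induced partition $\{B_k\}$: $H(\Lambda\mid\sigma(\pi\circ\Lambda))=0$ if and only if $\pi$ is injective on $S=\supp\mu_I$.

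First we fix conventions for the finite regime. We write $\mu_I=\sum_{\lambda_j\in S}p_j\delta_{\lambda_j}$ with every $p_j>0$, since atoms of zero mass lie outside $S$ and contribute nothing under the convention $0\log 0=0$. The partition cells relevant to the sum are $B_k\cap S$, and these are nonempty for finitely many $k$.

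For the direction injectivity $\Rightarrow$ zero: if $\pi$ is injective on $S$, each cell $B_k$ contains at most one support point $\lambda_j$. For that point the ratio $p_j/\sum_{\lambda_i\in B_k}p_i$ equals $1$, so its term vanishes, and the whole sum is zero.

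For the direction zero $\Rightarrow$ injectivity, we argue by contrapositive. Suppose distinct $\lambda_j,\lambda_{j'}\in S$ satisfy $\pi(\lambda_j)=\pi(\lambda_{j'})$, so they lie in one cell $B_k$. Then $\sum_{\lambda_i\in B_k}p_i\ge p_j+p_{j'}>p_j$, because $p_{j'}>0$. The term for $\lambda_j$ is therefore $-p_j\log_2(\text{ratio}<1)>0$. All other terms are nonnegative, so the conditional entropy is strictly positive.

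Applying this equivalence to each of $\pi_K,\pi_U,\pi_M$ and combining with the first reduction gives \eqref{eq:zero_criterion_main}. The only delicate step is the strict-positivity direction. There we must use that the relevant points lie in $\supp\mu_I$, i.e.\ have positive mass. Injectivity is needed only on the support, not on all of $\sigma(Q_L)$, and this is exactly why the statement is phrased on $\supp\mu_I$.
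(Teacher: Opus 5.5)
Your proof is correct and follows essentially the paper's argument. The paper writes each term as $\sum_C \mu_I(C)\,H(\mu_{I,C})$ and observes that it vanishes exactly when every positive-mass cell carries a point mass, which is your term-by-term reading of \eqref{eq:discrete_conditional_entropy} in different packaging, and then it likewise sums the three nonnegative terms.
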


\begin{resultmap}
For one partition, conditional entropy vanishes exactly when every positive-mass cell contains one supported outcome. The three-term criterion follows by nonnegativity. Appendix~\ref{app:proof_zero_criterion} gives the cellwise proof.
\end{resultmap}

In a finite pure-point sector, every probability vector on the admissible atoms is realized by a Hilbert state. If $v_j$ is a unit vector in the $\lambda_j$ eigenspace, then
\begin{equation}\label{eq:law_surjectivity}
I_p=\sum_{j=1}^N\sqrt{p_j}\,v_j
\end{equation}
has spectral law $p$. Thus point masses are available and the minimum value of $\HR$ is zero.

\subsection{Invariant restriction and utility selection}\label{subsec:invariant_selection}

$L$-covariance makes the spectral law invariant along the learning representation:
\begin{equation}\label{eq:law_invariance}
\mu_{U_tI}=\mu_I,
\qquad
\HR(U_tI\mid O,Q)=\HR(I\mid O,Q).
\end{equation}
The full minimizer set on $\SO$ is therefore invariant under $U_t$. Stable observation is evaluated on
\begin{equation}\label{eq:SV}
\SV=\{I\in\HV:\norm I=1\}.
\end{equation}

\begin{proposition}[Variational losslessness of invariant restriction]\label{prop:lossless_invariant}
Assume a finite-dimensional or atom-separated pure-point $L$-covariant regime and suppose $Q_L$ has at least one admissible eigenvalue. Then
\begin{equation}\label{eq:lossless_invariant}
\min_{I\in\SV}\HR(I\mid O,Q)
=
\min_{I\in\SO}\HR(I\mid O,Q)
=0.
\end{equation}
Hence the invariant-sector restriction required by Axiom~\ref{ax:learning} loses no variational value.
\end{proposition}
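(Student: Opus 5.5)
The plan is a sandwich of inequalities with an explicit zero-cost witness in the invariant sector. Since each term of \eqref{eq:rational_entropy} is a Shannon conditional entropy in the finite or pure-point regime, each is nonnegative. Hence $\HR\ge 0$ on $\SO$ and, because $\SV\subseteq\SO$ (as $\HV\subseteq\HO$), also
\[
0\le\inf_{\SO}\HR\le\inf_{\SV}\HR .
\]
It therefore suffices to exhibit one $I_*\in\SV$ with $\HR(I_*\mid O,Q)=0$. That gives both infima equal to zero, and each is attained, so both are minima.

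For the witness, let $\lambda$ be the admissible eigenvalue of $Q_L$ supplied by the hypothesis, and take $I_*$ to be a unit vector in $\ker(Q_L-\lambda)\subseteq\HV$. Lemma~\ref{lem:support_projection} gives $E_{Q_L}(\{\lambda\})I_*=I_*$, so $\mu_{I_*}=\delta_\lambda$ and $\supp\mu_{I_*}=\{\lambda\}$. Every map is injective on a singleton, so Proposition~\ref{prop:zero_criterion} yields $\HR(I_*\mid O,Q)=0$. This is the point-mass case of the law-surjectivity construction \eqref{eq:law_surjectivity} with $N=1$.

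The step I expect to need care is a consistency check: the value of $\HR$ at $I_*$ must be the same whether $I_*$ is regarded as an element of $\SV$, with law computed from $Q_L$, or of $\SO$, with law computed from $Q$. Otherwise the two minima would be taken over different functionals. Here I would use $L$-covariance together with Lemma~\ref{lem:safe_compression}(ii). Since $\PiLH$ reduces $Q$, we have $E_{Q_L}(B)=E_Q(B)|_{\HV}$. Hence for $I\in\HV$,
\[
\ip{I}{E_{Q_L}(B)I}=\ip{I}{E_Q(B)I},
\]
so the outcome laws coincide. The observer maps $\pi_K,\pi_U,\pi_M$ are defined on the spectral values, and $\sigma(Q_L)\subseteq\sigma(Q)$ under reduction. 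So the maps restrict consistently, and the three conditional entropies agree.

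In the atom-separated pure-point regime, I would also note that $\lambda$ remains an atom of $E_Q$, so that Proposition~\ref{prop:zero_criterion} applies equally to the law computed from $Q$. With this identification, the chain of inequalities closes and the equality \eqref{eq:lossless_invariant} follows. The concluding sentence of the proposition is then immediate: restricting to $\HV$, as Axiom~\ref{ax:learning} requires, attains the same minimal value $0$ as the full admissible sphere $\SO$.
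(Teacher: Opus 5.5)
Your proof is correct and matches the paper's route: a unit eigenvector $v\in\Ran E_{Q_L}(\{\lambda\})\subseteq\HV$ has law $\delta_\lambda$, hence $\HR(v)=0$, and nonnegativity of $\HR$ together with $\SV\subseteq\SO$ makes both minima zero and attained by $v$. Your consistency check, that reduction makes the $Q_L$-law and the $Q$-law coincide on $\HV$, spells out what the paper compresses into its opening remark that $\PiLH$ reduces the query.
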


Entropy minimization does not itself exclude housekeeping eigenvectors of an ambient covariant query. The architecture assigns distinct roles to the assumptions: learning invariance defines the stable domain; Rational Entropy classifies zero-cost states inside that domain; utility selects among the minimizers; and the medium determines whether the selected result is faithfully decoded.

Let
\begin{equation}\label{eq:minimizer_set}
\mathcal M_0=\argmin_{I\in\SV}\HR(I\mid O,Q).
\end{equation}
The utility selection rule returns a maximizer of expected query utility:
\begin{equation}\label{eq:utility_selection}
\Iobs
\in
\argmax_{I\in\mathcal M_0}
\mathbb E_{\mu_I}[u_Q\circ\Lambda_{O,Q}].
\end{equation}

\subsection{Variational stationarity}\label{subsec:stationarity}

\begin{theorem}[Variational stationarity on the invariant sector]\label{thm:stationarity}
Assume $\HR(\cdot\mid O,Q)$ is Fr\'echet differentiable on a neighborhood of $\SV$ and attains a minimum at $\Iobs\in\SV$. Then there exists $\eta_O^{\mathrm{norm}}\in\RR$ such that
\begin{equation}\label{eq:stationarity}
\nabla_I\HR(\Iobs\mid O,Q)
=
\eta_O^{\mathrm{norm}}\Iobs.
\end{equation}
The scalar $\eta_O^{\mathrm{norm}}$ is the multiplier for the normalization constraint. It is not the observed spectral readout $\lobs$ of $Q_L$.
\end{theorem}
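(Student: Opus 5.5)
This is a Lagrange multiplier argument on the unit sphere of the closed subspace $\HV$. The constraint is $G(I)=\norm{I}^2-1$, whose Fréchet derivative at $I$ is $h\mapsto 2\,\mathrm{Re}\ip{I}{h}$. Gradients are taken in the real Hilbert structure underlying $\HV$, namely $\mathrm{Re}\ip{\cdot}{\cdot}$. This matters if the field is $\CC$: the gradient is then the Riesz representer with respect to the real inner product, so the multiplier comes out real automatically.

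\textbf{Step 1 (tangent directions).} Since $\Iobs\neq 0$, the derivative of $G$ at $\Iobs$ is surjective onto $\RR$. Its kernel is $T:=\{h\in\HV:\mathrm{Re}\ip{\Iobs}{h}=0\}$, a closed real hyperplane.

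For each $h\in T$ we need a curve on $\SV$ through $\Iobs$ with velocity $h$. We use the explicit curve
\[
\gamma(s)=\frac{\Iobs+sh}{\norm{\Iobs+sh}},
\]
which lies in $\SV$ because $\HV$ is a linear subspace. Since $\norm{\Iobs+sh}^2=1+s^2\norm{h}^2$ for $h\in T$, the normalization is $1+O(s^2)$, so $\gamma(0)=\Iobs$ and $\gamma'(0)=h$.

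\textbf{Step 2 (first-order condition).} The function $s\mapsto\HR(\gamma(s)\mid O,Q)$ has a minimum at $s=0$. By the chain rule for Fréchet differentiable maps on the neighborhood where $\HR$ is differentiable, its derivative there vanishes:
\[
\mathrm{Re}\ip{\nabla_I\HR(\Iobs\mid O,Q)}{h}=0\quad\text{for all }h\in T.
\]
Hence the gradient $g$ lies in $T^{\perp}$ with respect to the real inner product.

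That orthogonal complement is the real span of $\Iobs$. To see this, decompose $g=c\,\Iobs+r$ with $c=\mathrm{Re}\ip{\Iobs}{g}\in\RR$. Then $r\in T$, so $0=\mathrm{Re}\ip{g}{r}=\norm{r}^2$, which gives $r=0$.

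Setting $\eta_O^{\mathrm{norm}}:=c=\mathrm{Re}\ip{\Iobs}{\nabla_I\HR(\Iobs)}$ yields \eqref{eq:stationarity}. This also gives an explicit formula for the multiplier.

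\textbf{Step 3 (interpretive remark).} The multiplier is only the normal component of the gradient. Nothing in Steps 1--2 involves $Q_L$ beyond its role inside $\HR$, so $\eta_O^{\mathrm{norm}}$ need not lie in $\sigma(Q_L)$ and is distinct from $\lobs$.

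A concrete check: at a zero minimizer in the finite regime, one expects $\nabla\HR=0$ where it exists, so $\eta_O^{\mathrm{norm}}=0$ regardless of the selected eigenvalue. Note, however, that at an atomic state entropy terms of the form $p\log p$ may be non-differentiable, which is why the theorem assumes differentiability rather than deriving it.

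\textbf{Main obstacle.} The only delicate point is bookkeeping rather than analysis. The gradient must be taken in $\HV$, i.e.\ projected by $\PiLH$ if $\HR$ is defined on all of $\HO$. Otherwise the conclusion would wrongly be stated in $\HO$, where extra gradient components along $\HV^\perp$ are not controlled by the constraint. So the plan is to restrict $\HR$ to $\HV$ first, or equivalently to replace $\nabla$ by $\PiLH\nabla$, before invoking Steps 1--2.
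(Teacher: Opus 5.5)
Your proposal is correct and follows the paper's proof. Both pass to the real Hilbert structure $\operatorname{Re}\ip{\cdot}{\cdot}$, take the tangent hyperplane $\{h:\operatorname{Re}\ip{\Iobs}{h}=0\}$, differentiate along curves in $\SV$ to place the gradient in its real orthogonal complement, and identify that complement as $\operatorname{span}_{\RR}\{\Iobs\}$. Your explicit normalized curve, the decomposition $g=c\,\Iobs+r$ yielding $\eta_O^{\mathrm{norm}}=\operatorname{Re}\ip{\Iobs}{\nabla_I\HR(\Iobs\mid O,Q)}$, and your insistence that $\nabla_I\HR$ be the $\HV$-gradient spell out steps the paper leaves implicit.
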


Equation~\eqref{eq:stationarity} is a first-order condition. The gradient map is generally nonlinear; it becomes an ordinary eigenvalue equation only when that map is represented by a linear self-adjoint Euler--Lagrange operator. The finite collapse theorem does not require Fr\'echet differentiability. It follows directly from the spectral-law zero criterion and utility selection.

\subsection{Finite collapse classification}\label{subsec:finite_collapse}

\begin{theorem}[Finite collapse classification and selected readout]\label{thm:finite_collapse}
Assume the discrete effective regime on $\HV$. Let $Q_L$ be self-adjoint and let
\begin{equation}\label{eq:finite_window}
S=\{\lambda_1,\ldots,\lambda_N\}\subseteq\sigma_{\mathrm{pp}}(Q_L)
\end{equation}
be a nonempty finite atom window exhaustive on the effective invariant sector:
\begin{equation}\label{eq:finite_exhaustive}
E_{Q_L}(S)=I_{\HV}.
\end{equation}
Let the observer maps and utility valuation be defined on $S$, and let $\mathcal M_0$ be given by \eqref{eq:minimizer_set}. Then:
\begin{enumerate}[label=(\roman*),leftmargin=2.2em]
\item The minimum of $\HR$ on $\SV$ is zero, and
\begin{equation}\label{eq:zero_classification}
\mathcal M_0
=
\left\{I\in\SV:
\pi_K,\pi_U,\pi_M
\text{ are each injective on }\supp\mu_I
\right\}.
\end{equation}
\item The following are equivalent:
\begin{enumerate}[label=(\alph*),leftmargin=2em]
\item the realization is pairwise confusable on $S$;
\item every state in $\mathcal M_0$ has singleton spectral support;
\item
\begin{equation}\label{eq:uniform_atomicity}
\mathcal M_0
=
\bigcup_{\lambda\in S}
\{I\in\Ran E_{Q_L}(\{\lambda\}):\norm I=1\}.
\end{equation}
\end{enumerate}
\item If
\begin{equation}\label{eq:utility_max_set}
\Lambda_{\max}=\argmax_{\lambda\in S}u_Q(\lambda),
\end{equation}
then the utility-selected set is
\begin{equation}\label{eq:selected_set}
\mathcal M_{\mathrm{sel}}
=
\left\{I\in\mathcal M_0:\supp\mu_I\subseteq\Lambda_{\max}\right\}.
\end{equation}
\item If $\Lambda_{\max}=\{\lambda^*\}$, every selected state satisfies
\begin{equation}\label{eq:selected_atom}
E_{Q_L}(\{\lambda^*\})\Iobs=\Iobs,
\qquad
Q_L\Iobs=\lambda^*\Iobs,
\end{equation}
whether or not the realization is pairwise confusable on all of $S$. If, additionally, $\dim\Ran E_{Q_L}(\{\lambda^*\})=1$, the selected interpretation is unique up to phase.
\end{enumerate}
\end{theorem}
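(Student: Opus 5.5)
The plan is to reduce everything to finite probability vectors on $S$, using the exhaustiveness hypothesis \eqref{eq:finite_exhaustive}. Since $E_{Q_L}(S)=I_{\HV}$, every $I\in\SV$ has a finitely supported law $\mu_I=\sum_j p_j\delta_{\lambda_j}$ with $p_j=\norm{E_{Q_L}(\{\lambda_j\})I}^2$ and $\sum_j p_j=1$. Conversely, the construction \eqref{eq:law_surjectivity} realizes every probability vector on $S$ by a unit vector of $\HV$.

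For (i), I would invoke Proposition~\ref{prop:zero_criterion} directly. A point mass $\delta_{\lambda_1}$ is realized by a unit eigenvector, which is possible because $S$ is nonempty, and each $\pi$ is trivially injective on a singleton. So the minimum of $\HR$ on $\SV$ is $0$. Hence $\mathcal M_0$ is exactly the zero set, and \eqref{eq:zero_classification} is the zero criterion.

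For (ii), I would prove (a)$\Rightarrow$(b) as follows. If $I\in\mathcal M_0$ had two distinct support atoms $\lambda\neq\lambda'$, pairwise confusability would give some $\pi$ with $\pi(\lambda)=\pi(\lambda')$. That contradicts injectivity of $\pi$ on $\supp\mu_I$.

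For (b)$\Rightarrow$(a), I would argue by contrapositive. If some pair $\lambda\neq\lambda'$ disagrees in all three directions, take $I=(v+v')/\sqrt2$ with unit eigenvectors $v,v'$ from the respective eigenspaces. This state has support $\{\lambda,\lambda'\}$ with all three maps injective there, so it lies in $\mathcal M_0$ without singleton support.

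For (b)$\Leftrightarrow$(c), I would use Lemma~\ref{lem:support_projection}: singleton support $\{\lambda\}$ means $E_{Q_L}(\{\lambda\})I=I$, i.e.\ $I\in\Ran E_{Q_L}(\{\lambda\})$. Also, every such unit eigenvector lies in $\mathcal M_0$ by (i). That gives the union \eqref{eq:uniform_atomicity}, and (c)$\Rightarrow$(b) is immediate.

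For (iii), I would write the utility as $\mathbb E_{\mu_I}[u_Q]=\sum_j p_j u_Q(\lambda_j)\le u^*:=\max_S u_Q$, with equality iff $\supp\mu_I\subseteq\Lambda_{\max}$. The key step is to check that $u^*$ is attained inside $\mathcal M_0$. Any unit eigenvector for some $\lambda\in\Lambda_{\max}$ lies in $\mathcal M_0$, so $\max_{\mathcal M_0}\mathbb E u=u^*$. The argmax is then exactly the set \eqref{eq:selected_set}. This attainment step is the one point where the argument could silently fail: it relies on point masses being both zero-cost and available, which is what the exhaustive finite window provides. I therefore expect it to be the main obstacle, though it is a mild one.

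For (iv), $\Lambda_{\max}=\{\lambda^*\}$ forces $\mu_{\Iobs}=\delta_{\lambda^*}$. Lemma~\ref{lem:support_projection} then gives both identities in \eqref{eq:selected_atom}. Pairwise confusability was never used here, since the selection constraint itself forces a singleton support. If the eigenspace is one-dimensional, unit vectors in it differ only by a phase, which gives uniqueness up to phase.
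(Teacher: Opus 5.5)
Your proposal is correct and follows the paper's proof essentially step for step: realize every probability vector on $S$ via $\sum_j\sqrt{p_j}\,v_j$, read off (i) from the finite zero criterion, use pairwise confusability to rule out two-atom zero states and the superposition $(v_a+v_b)/\sqrt2$ for the converse, and treat expected utility as a convex average whose maximum is attained by a point mass on a maximizing atom. If anything, you are more explicit than the paper on two points it leaves implicit: where exhaustiveness $E_{Q_L}(S)=I_{\HV}$ is used, and the (b)$\Leftrightarrow$(c) step via Lemma~\ref{lem:support_projection}.
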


\begin{resultmap}
Point masses give zero cost. Proposition~\ref{prop:zero_criterion} yields part (i). Pairwise confusability excludes every two-atom zero state. Conversely, failure of confusability produces a two-atom superposition whose support is injective in all three directions. Expected utility is a convex average of atom utilities, so equality with the maximum occurs exactly on laws supported in $\Lambda_{\max}$. Appendix~\ref{app:proof_finite_collapse} gives the full argument.
\end{resultmap}

The theorem distinguishes \emph{uniform atomic collapse} from \emph{selected atomic readout}. Pairwise confusability is necessary and sufficient for every zero minimizer to be atom-supported. It is not necessary for the final selected state to be atomic: a unique utility maximum selects one atom from the zero set even when other zero states are multi-atom. The coding theory in Section~\ref{sec:finite_coding} studies the stronger uniform condition because capacity concerns the whole declared code window.

\subsection{General support transfer and the observation law}\label{subsec:general_support_transfer}

\begin{proposition}[Selected-support transfer]\label{prop:selected_support_transfer}
Let $Q_L$ be self-adjoint on $\HV$ and let $I\in\SV$. If a theorem-local selection rule or support bridge supplies a Borel set $A_*\subseteq\sigma(Q_L)$ satisfying
\begin{equation}\label{eq:full_mass_support}
\mu_I(A_*)=1,
\end{equation}
then
\begin{equation}\label{eq:support_transfer_main}
E_{Q_L}(A_*)I=I.
\end{equation}
If $A_*=\{\lambda_*\}$ is a pure-point atom, then $Q_LI=\lambda_*I$.
\end{proposition}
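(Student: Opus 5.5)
The plan is to reduce the proposition to Lemma~\ref{lem:support_projection}, since the statement is exactly the operator-language form of a supplied full-mass support claim. I will not attempt to construct $A_*$; the hypothesis supplies it, and the only work is to translate \eqref{eq:full_mass_support} into \eqref{eq:support_transfer_main}.

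First, I compute $\norm{(I-E_{Q_L}(A_*))I}^2$. Write $E:=E_{Q_L}(A_*)$, which is an orthogonal projector because $E_{Q_L}$ is a projection-valued measure, so $E=E^*=E^2$. Using $\norm I=1$,
\[
\norm{I-EI}^2=\norm I^2-2\,\mathrm{Re}\ip{I}{EI}+\ip{EI}{EI}=1-\ip{I}{EI}=1-\mu_I(A_*).
\]
The middle equality uses $\ip{EI}{EI}=\ip{I}{E^2I}=\ip{I}{EI}$, and the last equality is the definition \eqref{eq:spectral_law}. Under \eqref{eq:full_mass_support} the right side vanishes, so $EI=I$. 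This proves \eqref{eq:support_transfer_main}.

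For the atom case, take $A_*=\{\lambda_*\}$ with $\lambda_*\in\sigma_{\mathrm{pp}}(Q_L)$. Then $I\in\Ran E_{Q_L}(\{\lambda_*\})$. By the functional calculus, this range is exactly $\ker(Q_L-\lambda_*)$, which lies in $\Dom(Q_L)$. Concretely, for $f=E_{Q_L}(\{\lambda_*\})f$ we have $\int\abs{\lambda}^2\,d\mu_f<\infty$ because $\mu_f=\norm f^2\delta_{\lambda_*}$, and
\[
Q_Lf=\int\lambda\,dE_{Q_L}(\lambda)f=\lambda_*f.
\]
Hence $Q_LI=\lambda_*I$. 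This second half is the only step that needs any care, and the care is about domain: for unbounded $Q_L$ one must confirm that $I\in\Dom(Q_L)$ before writing the eigenvalue equation. The spectral-measure computation above settles this.

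Both parts are therefore the content of Lemma~\ref{lem:support_projection} applied with $A=A_*$. I expect no real obstacle beyond that domain point. I would end the proof by noting that the argument uses no property of $A_*$ other than its full mass, which is the sense in which the proposition transfers a support claim rather than deriving one.
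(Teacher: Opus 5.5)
Your proposal is correct and matches the paper's route: the paper proves this by applying Lemma~\ref{lem:support_projection} to $A_*$. That lemma's proof uses the same identity $\norm{(I-E_{Q_L}(A))I}^2=1-\mu_I(A)$, together with $\Ran E_{Q_L}(\{\lambda\})=\ker(Q_L-\lambda I)$ for the atom case; your explicit check that $I\in\Dom(Q_L)$ via $\mu_I=\delta_{\lambda_*}$ is a useful detail the paper leaves implicit.
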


This proposition is exactly Lemma~\ref{lem:support_projection}. It translates a full-mass component into Hilbert-space support; it does not derive the component. Theorem~\ref{thm:finite_collapse} supplies it in the finite effective regime. Continuous-spectrum and noncommuting generalizations require additional mathematics at the point where $A_*$ is produced.

\begin{theorem}[Static observation law]\label{thm:static_observation_law}
Assume the invariant-sector construction and a self-adjoint admissible query $Q_L$ on $\HV$. Let
\begin{equation}\label{eq:selected_minimizer}
\Iobs\in\argmin_{I\in\SV}\HR(I\mid O,Q)
\end{equation}
be the state returned by the declared selection rule. Suppose either:
\begin{enumerate}[label=(\alph*),leftmargin=2em]
\item the finite hypotheses of Theorem~\ref{thm:finite_collapse} hold and $\Iobs\in\mathcal M_{\mathrm{sel}}$, in which case set $A_*=\Lambda_{\max}$; or
\item a theorem-local support bridge supplies $A_*$ with $\mu_{\Iobs}(A_*)=1$.
\end{enumerate}
Then
\begin{equation}\label{eq:static_observation_law}
\boxed{
\Iobs\in\argmin_{I\in\SV}\HR(I\mid O,Q)
\quad\Longrightarrow\quad
E_{Q_L}(A_*)\Iobs=\Iobs.
}
\end{equation}
If $A_*=\{\lobs\}$ is a singleton pure-point atom, the output may be written
\begin{equation}\label{eq:eigen_shorthand}
Q_L\Iobs=\lobs\Iobs.
\end{equation}
If $A_*$ is a finite set of pure-point eigenvalues, the selected range is the finite orthogonal sum of the corresponding eigenspaces.
\end{theorem}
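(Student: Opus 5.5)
The plan is to reduce both cases to a single full-mass statement $\mu_{\Iobs}(A_*)=1$. Once that holds, the conclusion follows from Proposition~\ref{prop:selected_support_transfer}, which is Lemma~\ref{lem:support_projection}. The minimization hypothesis \eqref{eq:selected_minimizer} matters only insofar as it places $\Iobs$ in $\mathcal M_0$, and hence in the domain where the selection rule acts. The whole argument therefore splits by which source supplies $A_*$.

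In case (b), the full-mass condition is given directly. I apply Proposition~\ref{prop:selected_support_transfer} with $I=\Iobs$ and obtain $E_{Q_L}(A_*)\Iobs=\Iobs$. No further work is needed.

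In case (a), I invoke Theorem~\ref{thm:finite_collapse}(iii). The hypothesis $\Iobs\in\mathcal M_{\mathrm{sel}}$ gives $\supp\mu_{\Iobs}\subseteq\Lambda_{\max}$. The exhaustiveness condition \eqref{eq:finite_exhaustive} makes $\mu_{\Iobs}$ a finite atomic law on $S$, so its support is exactly the set of atoms with positive mass. Hence $\mu_{\Iobs}(\Lambda_{\max})=\mu_{\Iobs}(\supp\mu_{\Iobs})=1$. The main point to check is that $\Lambda_{\max}$ is a Borel subset of $\sigma(Q_L)$. This is immediate, because it is a finite subset of $S\subseteq\sigma_{\mathrm{pp}}(Q_L)$. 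Support transfer then gives $E_{Q_L}(\Lambda_{\max})\Iobs=\Iobs$.

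For the refinements, the singleton case $A_*=\{\lobs\}$ is the atom clause of Lemma~\ref{lem:support_projection}. Its content is the identity $Q_LE_{Q_L}(\{\lambda\})=\lambda E_{Q_L}(\{\lambda\})$, which comes from the functional calculus. This yields $Q_L\Iobs=\lobs\Iobs$, and in case (a) with $\Lambda_{\max}=\{\lambda^*\}$ it agrees with Theorem~\ref{thm:finite_collapse}(iv).

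For a finite set $A_*=\{\lambda_1,\dots,\lambda_m\}$ of eigenvalues, I use countable additivity of the projection-valued measure. For distinct points, $E_{Q_L}(\{\lambda_i\})E_{Q_L}(\{\lambda_j\})=E_{Q_L}(\emptyset)=0$, so the point projections are mutually orthogonal. Additivity then gives $E_{Q_L}(A_*)=\sum_i E_{Q_L}(\{\lambda_i\})$. Consequently $\Ran E_{Q_L}(A_*)$ is the orthogonal direct sum of the ranges $\Ran E_{Q_L}(\{\lambda_i\})=\ker(Q_L-\lambda_i)$, that is, of the corresponding eigenspaces.

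I expect the only real obstacle to be bookkeeping rather than analysis: keeping the hypotheses typed. The implication in \eqref{eq:static_observation_law} is not derived from minimization alone. The set $A_*$ must come from (a) or (b), and the proof should say explicitly that minimization contributes nothing beyond membership in $\mathcal M_0$.
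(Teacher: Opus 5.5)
Your proposal is correct and matches the paper's proof. Both routes are reduced to the full-mass statement $\mu_{\Iobs}(A_*)=1$: in the finite route it comes from $\supp\mu_{\Iobs}\subseteq\Lambda_{\max}$, and in the bridge route it is assumed. Proposition~\ref{prop:selected_support_transfer} then gives the conclusion, and the eigenvector shorthand comes from the atom clause of Lemma~\ref{lem:support_projection}. Your additivity argument $E_{Q_L}(A_*)=\sum_i E_{Q_L}(\{\lambda_i\})$ for the finite-set clause fills in a step the paper states but does not prove.
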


\begin{figure}[H]
\centering
\includegraphics[width=0.98\textwidth]{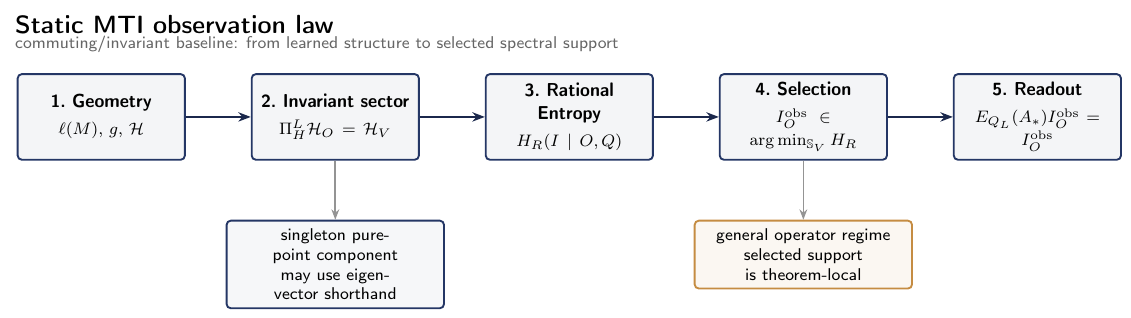}
\caption{Static observation law. The finite route derives the selected support from the Rational Entropy zero set and utility. The general route is conditional on a theorem-local support bridge. Output is projection-valued, with eigenvector shorthand only in the singleton pure-point case.}
\label{fig:static_observation}
\rule{.75\textwidth}{0.75pt}
\end{figure}
\vspace{-1em}

\subsection{Discrete and perturbative consequences}\label{subsec:static_consequences}

\begin{corollary}[Discrete selected support]\label{cor:discrete_support}
If $\HV$ is finite dimensional, every selected spectral range is finite dimensional. If $Q_L$ has compact resolvent, its spectrum is a finite or countably infinite set of isolated real eigenvalues of finite multiplicity with no finite accumulation point. In that regime, every isolated bounded selected component has finite-dimensional range. Compact resolvent does not imply that the full spectrum is finite.
\end{corollary}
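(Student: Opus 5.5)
The corollary has four assertions, and I would prove them in order, relying only on the spectral theorem and on the support statement of Theorem~\ref{thm:static_observation_law}.

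\emph{Finite-dimensional case.} Every selected spectral range has the form $\Ran E_{Q_L}(A_*)$ for a Borel set $A_*$. This range is a closed subspace of $\HV$. If $\dim\HV<\infty$, every subspace of $\HV$ is finite dimensional, so the first claim is immediate. In addition, $Q_L$ is then a Hermitian matrix, so $A_*\cap\sigma(Q_L)$ is a finite set of eigenvalues. This agrees with the last sentence of Theorem~\ref{thm:static_observation_law}: the selected range is a finite orthogonal sum of eigenspaces.

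\emph{Compact-resolvent case.} Fix a nonreal $z$, say $z=i$. The resolvent $(Q_L-z)^{-1}$ is compact and normal, so by the spectral theorem for compact normal operators its nonzero spectrum consists of eigenvalues $\mu_k$ of finite multiplicity, accumulating only at $0$. Zero is not an eigenvalue, since the resolvent is injective. Applying the spectral mapping $\mu=(\lambda-z)^{-1}$ gives $\sigma(Q_L)=\{z+\mu_k^{-1}\}$, with the same eigenspaces and hence finite multiplicities. The condition $\mu_k\to0$ becomes $|\lambda_k|\to\infty$, so there is no finite accumulation point. The spectrum is therefore discrete, and it is countable because distinct eigenvalues have orthogonal eigenspaces in $\HV$. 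I am implicitly assuming $\HV$ is separable here. If it is not, countability still holds, because the family $\{\mu_k\}$ is countable for a compact operator.

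\emph{Isolated bounded selected components.} The key step is to show that any bounded Borel set $A_*$ meets $\sigma(Q_L)$ in only finitely many points. This follows because a bounded subset of a discrete set with no finite accumulation point is finite. Hence
\[
E_{Q_L}(A_*)=\sum_{\lambda\in A_*\cap\sigma(Q_L)}E_{Q_L}(\{\lambda\}),
\]
which is a finite sum of finite-rank projectors, so the range is finite dimensional. The word ``isolated'' is automatic in this regime; I would remark that it is kept only to align with the general statement.

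\emph{Spectrum need not be finite.} For the final remark I would give a counterexample. A harmonic-oscillator-type operator, or simply the diagonal operator with entries $n$ on $\ell^2$, has compact resolvent $\operatorname{diag}((n-i)^{-1})$ but infinite spectrum $\mathbb N$.

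The main obstacle is the compact-resolvent spectral mapping. It requires care about domains: one must verify that $(Q_L-z)^{-1}v=\mu v$ if and only if $Q_Lv=(z+\mu^{-1})v$ with $v\in\Dom(Q_L)$. I would check this directly from the definition of the resolvent, noting that $\mu\neq0$ so that $v=\mu^{-1}(Q_L-z)^{-1}v$ lies in the range of the resolvent, which is $\Dom(Q_L)$.
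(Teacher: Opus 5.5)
Your proposal is correct and follows the same route as the paper's proof: finite-dimensional ranges are immediate, compact resolvent gives a discrete real spectrum of finite multiplicity with no finite accumulation point, and a bounded component then contains only finitely many eigenvalues, so its projector is a finite orthogonal sum of finite-dimensional eigenspaces. You add details the paper leaves to citation, namely deriving the discrete spectrum from the compact normal resolvent (the resolvent spectral-mapping theorem, not your eigenvector domain check, is what rules out spectrum that is not an eigenvalue), and you supply an explicit witness, $\operatorname{diag}(n)$ on $\ell^2$, for the final remark.
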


\begin{proposition}[Query perturbation bound]\label{prop:query_perturbation}
Let $Q_1$ and $Q_2$ be self-adjoint on the same invariant sector with a common domain, and suppose $V=Q_2-Q_1$ extends to a bounded self-adjoint operator. Let $\Sigma_1$ be a closed selected component of $Q_1$ separated from the complementary spectrum by
\[
\delta=\dist(\Sigma_1,\sigma(Q_1)\setminus\Sigma_1)>0.
\]
If $\varepsilon=\norm V<\delta/4$, the corresponding component $\Sigma_2$ of $Q_2$ remains isolated. If $E_j$ is the spectral projector onto $\Sigma_j$, then
\begin{equation}\label{eq:query_perturbation}
\norm{E_1-E_2}
\le
4\frac{\norm{Q_1-Q_2}}{\delta}.
\end{equation}
\end{proposition}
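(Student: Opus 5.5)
The plan is to combine a spectral-inclusion step, which gives isolation of $\Sigma_2$, with a Davis--Kahan type $\sin\Theta$ estimate, which gives the projector bound. Write $V=Q_2-Q_1$ and $\varepsilon=\norm V<\delta/4$. Since $V$ is bounded and self-adjoint and $Q_1,Q_2$ share a domain, the standard resolvent argument applies. If $\dist(z,\sigma(Q_1))>\varepsilon$, then $\norm{V(Q_1-z)^{-1}}<1$, and the Neumann series for $Q_2-z=(I+V(Q_1-z)^{-1})(Q_1-z)$ shows $z\notin\sigma(Q_2)$. Hence $\sigma(Q_2)$ lies in the closed $\varepsilon$-neighbourhood of $\sigma(Q_1)$, and by symmetry the reverse inclusion also holds.

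Next I define the corresponding component as $\Sigma_2:=\sigma(Q_2)\cap\{\lambda:\dist(\lambda,\Sigma_1)\le\varepsilon\}$, and I check two gaps.
\begin{itemize}
\item Take $\mu\in\sigma(Q_2)\setminus\Sigma_2$. It lies within $\varepsilon$ of some point of $\sigma(Q_1)\setminus\Sigma_1$, so $\dist(\mu,\Sigma_1)\ge\delta-\varepsilon\ge 3\delta/4$.
\item Take $\nu\in\Sigma_2$. Then $\dist(\mu,\nu)\ge\delta-2\varepsilon>\delta/2>0$.
\end{itemize}
The second gap shows that $\Sigma_2$ is a closed, relatively open (isolated) part of $\sigma(Q_2)$, and that $E_2:=E_{Q_2}(\Sigma_2)$ is well defined. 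By symmetry, $\dist(\sigma(Q_1)\setminus\Sigma_1,\Sigma_2)\ge\delta-\varepsilon$ as well.

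For the bound, I use $\norm{E_1-E_2}=\max\{\norm{E_1(I-E_2)},\norm{E_2(I-E_1)}\}$, which holds for any pair of orthogonal projectors. Set $X=E_1(I-E_2)$. Expanding $E_1Q_2(I-E_2)=E_1(Q_1+V)(I-E_2)$ on the common domain gives the Sylvester equation
\[
(Q_1|_{\Ran E_1})X - X(Q_2|_{\Ran(I-E_2)}) = -E_1V(I-E_2).
\]
Here the two operators have spectra in $\Sigma_1$ and $\sigma(Q_2)\setminus\Sigma_2$, which are separated by at least $\delta-\varepsilon$. The Bhatia--Davis--McIntosh theorem for arbitrary separated spectral sets of self-adjoint operators then gives
\[
\norm X\le\frac{\pi}{2}\,\frac{\norm V}{\delta-\varepsilon}\le\frac{\pi}{2}\cdot\frac43\,\frac{\norm V}{\delta}<4\frac{\norm{Q_1-Q_2}}{\delta}.
\]
The same argument, with the roles of $Q_1$ and $Q_2$ exchanged, bounds $E_2(I-E_1)$. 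In the common case where $\Sigma_1$ is bounded, I can instead take $\Gamma$ to be the boundary of the $\delta/2$-neighbourhood of the convex hull of $\Sigma_1$, and use the Riesz formula $E_1-E_2=\frac1{2\pi i}\oint_\Gamma (Q_1-z)^{-1}V(Q_2-z)^{-1}\,dz$ with resolvent bounds $2/\delta$ on $\Gamma$. That route also yields a constant of order $4$, up to the length of the contour. This is why I prefer the Sylvester route, since it is length-free.

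The main obstacle I expect is justifying the Sylvester equation rigorously for unbounded $Q_j$, possibly with $\Sigma_1$ unbounded. I would first pass to the weak form $\ip{Q_1u}{Xw}-\ip{u}{XQ_2w}=-\ip{u}{E_1V(I-E_2)w}$ for $u,w$ in the common domain. The weak form is all that the Bhatia--Davis--McIntosh argument uses, because it proceeds through the Fourier-type integral representation of the solution and only pairs against spectral resolutions. As a fallback, I would truncate with $E_{Q_j}([-n,n])$, apply the bounded-operator estimate with the same gap constant, and let $n\to\infty$ in the strong operator topology. A final remark records that $\norm{E_1-E_2}<1$ whenever $\norm V<\delta/4$, so $\Ran E_1$ and $\Ran E_2$ have equal dimension. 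This is the sense in which $\Sigma_2$ is the component corresponding to $\Sigma_1$.
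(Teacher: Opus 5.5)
Your proposal is correct and follows the same route as the paper's proof: Hausdorff stability of the spectrum under a bounded self-adjoint perturbation of norm $\varepsilon<\delta/4$, followed by a Davis--Kahan-type $\sin\Theta$ estimate for separated spectral sets. It is in fact more explicit than the paper, which asserts $C=4$ only by citation, whereas your Sylvester identity, the formula $\norm{E_1-E_2}=\max\{\norm{E_1(I-E_2)},\norm{E_2(I-E_1)}\}$, the gap $\delta-\varepsilon$, and the Bhatia--Davis--McIntosh constant $\pi/2$ together give the sharper constant $2\pi/3<4$.

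The only thing to drop is the contour aside. When $\Sigma_1$ is not an interval, its convex hull can contain points of $\sigma(Q_1)\setminus\Sigma_1$. Even when it does not, the $Q_2$-resolvent on that contour is bounded only by $1/(\delta/2-\varepsilon)$, not $2/\delta$. Your main Sylvester argument does not rely on it.
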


The estimate is a Davis--Kahan/Riesz-projector stability bound \cite{davis1970,kato1995,stewart1990}. It certifies query dependence only while a no-crossing margin persists. When the selected gap closes, the theorem no longer permits a stable one-component correspondence.

\section{Compatibility and the order boundary}\label{sec:compatibility}

The static law applies to one self-adjoint query on one invariant sector. However, a finite readout certificate may involve several queries, and then order independence requires a common measurement structure. In this section, we retain only the compatibility results needed by the coding layer. It does not reproduce the dissertation's two-temporal flow, moving Hilbert bundle, path action, or aggregate observer construction.

Let $A$ and $B$ be finite-dimensional self-adjoint compressed queries on a common effective sector. Let $P_A^*$ and $P_B^*$ be the selected orthogonal spectral projectors supplied by isolated spectral windows. For an input state $h$, the unnormalized sequential outputs are
\begin{equation}\label{eq:sequential_outputs}
h_{AB}=P_B^*P_A^*h,
\qquad
h_{BA}=P_A^*P_B^*h.
\end{equation}
When nonzero, the observed rays are obtained by normalization.

\begin{proposition}[No order effects under common simple joint selection]\label{prop:no_order_effects}
Suppose the selected projectors arise from a common joint spectral measure and both sequential procedures land in the same one-dimensional joint spectral component. Then their normalized outputs agree up to phase.
\end{proposition}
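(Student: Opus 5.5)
The plan is to make the common joint spectral measure explicit and read both sequential products off it. By hypothesis there is a projection-valued measure $E$ on a product space $\Sigma_A\times\Sigma_B$ (finite-dimensional, so $E$ is a finite sum of orthogonal projectors $E(\{(a,b)\})$ with $\sum E(\{(a,b)\})=I$). The selected projectors are its marginals over the isolated windows $W_A,W_B$:
\[
P_A^*=E(W_A\times\Sigma_B),
\qquad
P_B^*=E(\Sigma_A\times W_B).
\]
The first step is to record that these two projectors commute. This holds because spectral projections of a single projection-valued measure commute and multiply by intersection. It gives
\[
P_B^*P_A^*=P_A^*P_B^*=E(W_A\times W_B)=:P_{AB}^*.
\]

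The second step is the observation that this already yields the strong conclusion $h_{AB}=h_{BA}$ as vectors, not merely as rays. In particular, if one sequential output is nonzero then so is the other. Once both are nonzero, their normalizations coincide exactly. That is agreement up to the trivial phase $1$.

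The third step is to use the hypothesis that both procedures land in the same one-dimensional joint component $E(\{(a_*,b_*)\})$, with range spanned by a unit vector $e$. I would read this as saying that each normalized output lies in $\operatorname{Ran}E(\{(a_*,b_*)\})$, so each equals $c\,e$ with $|c|=1$. This reading makes the statement robust even if one wishes to allow the selected projectors to be chosen differently in the two orders, for example from different windows. In that case the two outputs are $c_1e$ and $c_2e$, which agree up to the phase $c_2/c_1$. I would present this one-dimensional argument as the main proof and the commutation identity as the reason the hypothesis is consistent.

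The only delicate point is interpretive rather than technical. I need to fix precisely what \enquote{arise from a common joint spectral measure} and \enquote{land in} mean. I would state them as $P_A^*,P_B^*\in\{E(B)\}$ and normalized output $\in\operatorname{Ran}E(C)$, with $\dim\operatorname{Ran}E(C)=1$. I would also note that nonvanishing of the outputs is assumed, as in the statement's \enquote{when nonzero}. No use of the isolated-window structure or of the query commutator is needed. Those enter only in the later bound for the noncommuting case.
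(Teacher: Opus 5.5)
Your proposal is correct, and your main argument is the paper's proof: both normalized outputs are unit vectors in the same one-dimensional joint spectral range, so they differ by a unimodular scalar. Your extra observation is also correct: spectral projections of a single joint measure commute, so $h_{AB}=h_{BA}$ already holds as vectors. This matches the paper's remark that commuting orthogonal projectors have order-independent products.
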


The proposition is immediate because both orders terminate in the same one-dimensional range. More generally, pairwise commuting orthogonal projectors have order-independent products for every finite subfamily. This hereditary property becomes condition (C1) of the zero-error certificate.

\begin{proposition}[Commutator-controlled order effect]\label{prop:commutator_order}
Let $A$ and $B$ be finite-dimensional self-adjoint queries. Assume the selected projectors $P_A^*$ and $P_B^*$ are defined by isolated spectral windows whose Riesz contours remain a positive distance from the complementary spectra. Then there is a finite constant $C$, depending only on those contours and resolvent bounds, such that
\begin{equation}\label{eq:commutator_projector}
\norm{P_B^*P_A^*-P_A^*P_B^*}
\le
C\norm{[A,B]}.
\end{equation}
Consequently, for every normalized $h$,
\begin{equation}\label{eq:commutator_state}
\norm{P_B^*P_A^*h-P_A^*P_B^*h}
\le
C\norm{[A,B]}.
\end{equation}
\end{proposition}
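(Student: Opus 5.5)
The plan is to express both selected projectors as Riesz contour integrals and reduce the projector commutator to the query commutator through a resolvent identity. Let $\Gamma_A$ and $\Gamma_B$ be the contours enclosing the selected windows, so that
\[
P_A^*=\frac{1}{2\pi i}\oint_{\Gamma_A}(z-A)^{-1}\,dz,
\qquad
P_B^*=\frac{1}{2\pi i}\oint_{\Gamma_B}(w-B)^{-1}\,dw .
\]
Write $R_A(z)=(z-A)^{-1}$ and $R_B(w)=(w-B)^{-1}$. Then
\[
[P_B^*,P_A^*]=\frac{1}{(2\pi i)^2}\oint_{\Gamma_B}\oint_{\Gamma_A}[R_B(w),R_A(z)]\,dz\,dw .
\]

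The key algebraic step is the commutator identity for resolvents:
\[
[R_B(w),R_A(z)]=R_B(w)R_A(z)\,[B,A]\,R_A(z)R_B(w).
\]
I will check it by computing $[w-B,\,z-A]=[B,A]$ and then multiplying on the left by $R_B R_A$ and on the right by $R_A R_B$. This uses the general fact $[X^{-1},Y^{-1}]=X^{-1}Y^{-1}[Y,X]Y^{-1}X^{-1}$, and I will be careful with the sign. Since norms are taken, the sign does not affect the final bound.

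Next comes the estimate. Self-adjointness gives $\norm{R_A(z)}=1/\dist(z,\sigma(A))$, and the contours are assumed to stay a positive distance $d_A$, $d_B$ from the spectra. Both resolvent norms are therefore uniformly bounded on the compact contours. Integrating gives
\[
\norm{P_B^*P_A^*-P_A^*P_B^*}\le \frac{\abs{\Gamma_A}\,\abs{\Gamma_B}}{4\pi^2\,d_A^2\,d_B^2}\,\norm{[A,B]}.
\]
So one can take $C=\abs{\Gamma_A}\abs{\Gamma_B}/(4\pi^2 d_A^2 d_B^2)$, which depends only on the contours and the resolvent bounds, as the statement claims. Finite dimensionality ensures all operators are bounded, so the contour integrals converge in norm and there are no domain issues. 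The state bound \eqref{eq:commutator_state} then follows immediately, since $\norm{Xh}\le\norm X$ for normalized $h$.

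The main obstacle is the hypothesis on the contours. For the bound to be meaningful, the same $\Gamma_A,\Gamma_B$ must separate the selected windows from the rest of the spectrum with a definite margin $d_A,d_B>0$. The statement grants exactly this, so no perturbative argument is needed. If a more intrinsic constant in terms of the spectral gaps were wanted, I would take circles or rectangles at half the gap distance. Proposition~\ref{prop:query_perturbation}-style gap arguments would only matter if $A$ or $B$ varied, which they do not here.
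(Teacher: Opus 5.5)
Your proposal is correct and follows the paper's proof essentially step for step: Riesz contour representations, the resolvent commutator identity that places $[A,B]$ between resolvents, and a double-contour norm estimate. Your constant $\abs{\Gamma_A}\abs{\Gamma_B}/(4\pi^2 d_A^2 d_B^2)$ is the paper's $\operatorname{len}(\Gamma_A)\operatorname{len}(\Gamma_B)(2\pi)^{-2}\sup\norm{R_A}^2\sup\norm{R_B}^2$ made explicit via $\norm{R_A(z)}=1/\dist(z,\sigma(A))$. One small slip: the general identity should read $[X^{-1},Y^{-1}]=X^{-1}Y^{-1}[X,Y]Y^{-1}X^{-1}$ rather than $[Y,X]$, but your displayed resolvent identity is still correct, and as you note the sign does not affect the norm bound.
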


\begin{resultmap}
Represent each selected projector by a Riesz contour integral. The commutator of the projectors is a double contour integral with one factor $[A,B]$ between resolvents. Uniform contour-resolvent bounds yield \eqref{eq:commutator_projector}. Appendix~\ref{app:proof_order} gives the calculation.
\end{resultmap}

Order effects occur in the smallest nontrivial example. On $\RR^2$, let $P$ project onto $\operatorname{span}(e_1)$ and let $R$ project onto $\operatorname{span}(u)$ with $u=(e_1+e_2)/\sqrt2$. Then
\begin{equation}\label{eq:order_example}
RP e_1=\tfrac12(e_1+e_2),
\qquad
PR e_1=\tfrac12e_1.
\end{equation}
After normalization, the two outputs are the distinct rays $\operatorname{span}(u)$ and $\operatorname{span}(e_1)$. Thus, one-shot zero Rational Entropy at each stage does not imply protocol-level order independence.

We can now cleanly state the boundary between commuting and noncommuting. In a commuting finite code sector, the queries admit a common joint spectral calculus and the selected projector family can be checked as one classical refinement. In a noncommuting protocol, candidate generation, selection, and readout may change after each query. A single static agreement graph no longer represents the entire protocol. Therefore, in this article, we use noncommutation only as a typed obstruction to the finite certificate, and we do not assign a universal capacity formula to the noncommuting regime.

\section{A finite coding theory of interpretation}\label{sec:finite_coding}

We now have a projection-valued static law. The following coding layer uses a finite effective specialization in which selected spectral cells can be treated as atoms and observer maps have finite ranges. Outside that regime, finite graph capacities require an additional discretization or atom-separation assumption and do not follow from the general support-transfer theorem.

\subsection{Finite atom model}\label{subsec:finite_atom_model}

\begin{assumption}[Finite coding regime]\label{assump:finite_coding}
The compressed query $Q_L$ is self-adjoint on a finite effective invariant sector; the coding window is a finite family of pairwise disjoint non-null spectral cells; every invoked gap is positive at the declared scale; the observer maps have finite ranges; and Rational Entropy is the finite conditional-entropy functional on the induced atom law.
\end{assumption}

Let
\begin{equation}\label{eq:atom_family}
\mathcal A_\delta(O,Q)=\{B_a:a\in A\}
\end{equation}
be a finite family of non-null Borel spectral cells with pairwise separation at least $\delta$ when a gap statement is used. Write
\begin{equation}\label{eq:atom_projector}
E_a=E_{Q_L}(B_a),
\qquad
p_I(a)=\norm{E_aI}^2.
\end{equation}
The atom-level support is $\supp_A(I)=\{a:p_I(a)>0\}$. The observer realization is a triple of finite maps
\begin{equation}\label{eq:atom_labels}
\pi_K:A\to Y_K,
\qquad
\pi_U:A\to Y_U,
\qquad
\pi_M:A\to Y_M.
\end{equation}
For the $A$-valued random variable $A_I$ with law $p_I$,
\begin{equation}\label{eq:atomic_re}
\HR(I\mid O,Q)
=
H(A_I\mid\sigma(\pi_K))
+
H(A_I\mid\sigma(\pi_U))
+
H(A_I\mid\sigma(\pi_M)).
\end{equation}
By Proposition~\ref{prop:zero_criterion}, $\HR=0$ exactly when each label map is injective on $\supp_A(I)$.

\begin{definition}[Pairwise confusability, identifiability, and selectivity]\label{def:code_conditions}
A subset $S\subseteq A$ is:
\begin{enumerate}[label=(\roman*)]
\item \emph{pairwise confusable} if every distinct $a,b\in S$ satisfies
\begin{equation}\label{eq:pairwise_confusable_code}
\pi_K(a)=\pi_K(b)
\quad\text{or}\quad
\pi_U(a)=\pi_U(b)
\quad\text{or}\quad
\pi_M(a)=\pi_M(b);
\end{equation}
\item \emph{identifiable} if the joint map
\begin{equation}\label{eq:joint_code_map}
\mathsf s(a)=(\pi_K(a),\pi_U(a),\pi_M(a))
\end{equation}
is injective on $S$;
\item \emph{selectable} if $\pi_U$ is injective on $S$.
\end{enumerate}
\end{definition}

Pairwise confusability forces uniform atomicity on $S$, such that any support containing two atoms has a direction that lumps them and thus has positive Rational Entropy. Identification performs the opposite function after selection, in that it preserves the atom in the joint observer label.

\begin{figure}[H]
\centering
\includegraphics[width=0.8\textwidth]{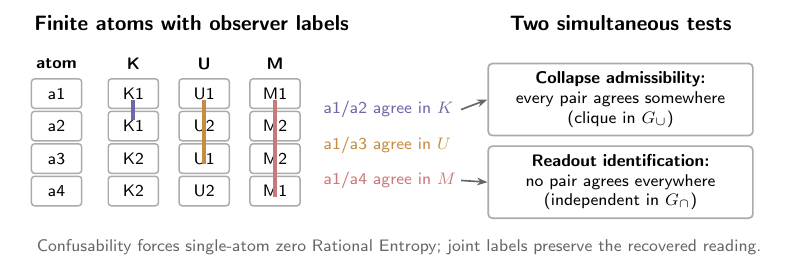}
\caption{Interpretive-code inversion. Collapse admissibility requires every pair to agree in at least one direction; readout identification requires that no pair agree in all directions.}
\label{fig:agreement_graphs}
\rule{.75\textwidth}{0.75pt}
\end{figure}
\vspace{-1em}

\subsection{Agreement graphs and realized capacities}\label{subsec:agreement_graphs}

\begin{definition}[Interpretive code]\label{def:interpretive_code}
An interpretive code is a subset $S\subseteq A$ with the restricted joint map $\mathsf s|_S$. It is \emph{class-admissible} if it is pairwise confusable and identifiable. It is \emph{selective-admissible} if it is additionally selectable.
\end{definition}

For $i\in\{K,U,M\}$, define the agreement graph $G_i=(S,E_i)$ by
\begin{equation}\label{eq:agreement_edges}
\{a,b\}\in E_i
\quad\Longleftrightarrow\quad
a\ne b\text{ and }\pi_i(a)=\pi_i(b).
\end{equation}
Let $G_\cup=(S,E_K\cup E_U\cup E_M)$ and $G_\cap=(S,E_K\cap E_U\cap E_M)$.

\begin{lemma}[Graph dictionary]\label{lem:graph_dictionary}
$S$ is pairwise confusable if and only if it is a clique of $G_\cup$. It is identifiable if and only if it is independent in $G_\cap$. If $S$ is selectable, $G_U$ has no edges on $S$, so pairwise confusability must be supplied by $G_K\cup G_M$.
\end{lemma}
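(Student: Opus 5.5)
The plan is to unwind the definitions directly, treating each of the three claims as a translation between the labelling conditions of Definition~\ref{def:code_conditions} and the edge sets in \eqref{eq:agreement_edges}. No earlier theorem is needed beyond these definitions. The only work is to keep track of which union or intersection of edge sets corresponds to which quantifier pattern, ``some direction agrees'' versus ``all directions agree.''

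For the first claim, I would fix distinct $a,b\in S$. By \eqref{eq:agreement_edges}, $\{a,b\}\in E_K\cup E_U\cup E_M$ holds exactly when $\pi_i(a)=\pi_i(b)$ for some $i\in\{K,U,M\}$, which is condition \eqref{eq:pairwise_confusable_code}. Pairwise confusability asserts this for every distinct pair in $S$, and being a clique of $G_\cup$ asserts that every distinct pair is an edge of $G_\cup$. So the two statements coincide.

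For the second claim, $\{a,b\}\in E_K\cap E_U\cap E_M$ holds exactly when all three labels agree, that is, when $\mathsf s(a)=\mathsf s(b)$ by \eqref{eq:joint_code_map}. Injectivity of $\mathsf s$ on $S$ says no distinct pair has equal joint labels, which is precisely the statement that no pair in $S$ spans an edge of $G_\cap$, i.e.\ that $S$ is independent in $G_\cap$.

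For the third claim, selectability means $\pi_U$ is injective on $S$, so $E_U$ restricted to $S$ is empty. Then $E_K\cup E_U\cup E_M=E_K\cup E_M$ on $S$, and by the first claim confusability is equivalent to $S$ being a clique of $G_K\cup G_M$. There is no real obstacle here. The one point requiring care is that the graphs $G_i$ are defined on the vertex set $S$ itself, so the clique and independence statements are made within $S$ rather than in an ambient graph on $A$. If one instead reads the $G_i$ as graphs on $A$, the same argument applies verbatim to induced subgraphs.
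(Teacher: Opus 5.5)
Your proposal is correct and follows essentially the same route as the paper's proof. Both unwind the edge definitions: $G_\cup$-adjacency is agreement in some direction, $G_\cap$-adjacency is a collision of the joint label $\mathsf s$, and injectivity of $\pi_U$ leaves no utility-agreement edges on $S$.
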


Define the realized direction budgets
\begin{equation}\label{eq:direction_budgets}
y_i(S)=\abs{\pi_i(S)},
\qquad
\operatorname{cap}_i(S)=\log_2y_i(S).
\end{equation}
The realized class capacity is
\begin{equation}\label{eq:realized_capacity}
N_{\mathrm{class}}(O,Q;\delta)
=
\max\{\abs S:S\subseteq A_\delta(O,Q),\ S\text{ class-admissible}\},
\end{equation}
with $N_{\mathrm{sel}}$ defined analogously.

The free-design class capacity retains only budgets:
\begin{equation}\label{eq:free_capacity}
N^*_{\mathrm{class}}(y_K,y_U,y_M)
=
\max\left\{\abs F:
\begin{array}{l}
F\subseteq Y_K\times Y_U\times Y_M\text{ is injective,}\\
\text{and every distinct pair agrees in at least one coordinate}
\end{array}
\right\}.
\end{equation}
Every realized code is a feasible free-design family after relabeling the used direction ranges. Hence
\begin{equation}\label{eq:free_envelope}
\abs S\le N^*_{\mathrm{class}}(y_K(S),y_U(S),y_M(S)).
\end{equation}
This is an upper envelope for realized capacity. It does not assert that arbitrary abstract labels can be realized by the access geometry, utility, medium, or spectrum of a particular observer.

\subsection{Elementary regimes and Fano bound}\label{subsec:elementary_fano}

\begin{proposition}[Elementary regimes]\label{prop:elementary_regimes}
For positive finite budgets:
\begin{enumerate}[label=(\roman*)]
\item one direction gives $N^*=1$;
\item two directions give $N^*(y_1,y_2)=\max(y_1,y_2)$;
\item the three-direction selective capacity is
\begin{equation}\label{eq:selective_capacity}
N^*_{\mathrm{sel}}(y_K,y_U,y_M)=y_U.
\end{equation}
\end{enumerate}
\end{proposition}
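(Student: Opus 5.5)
The plan is to argue directly from the definition \eqref{eq:free_capacity} of the free-design capacity. In each regime we prove a matching lower bound by an explicit family and an upper bound by a counting argument. The three parts are elementary instances of the general formula \eqref{eq:intro_cylinder}. We handle them separately, since part (iii) adds the selectability constraint rather than being a specialization.

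For (i), with a single direction $Y_1$, an injective family $F\subseteq Y_1$ whose distinct pairs agree in the one coordinate cannot contain two distinct points. Two distinct points would agree in that coordinate and hence coincide. So $\abs F\le1$, and any singleton attains this, giving $N^*=1$.

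For (ii), let $F\subseteq Y_1\times Y_2$ be injective and pairwise agreeing. Suppose $F$ contains $(a,b)$ and $(a',b')$ with $a\ne a'$ and $b\ne b'$. Then this pair agrees in neither coordinate, which is a contradiction. So any two elements share a first coordinate or share a second coordinate. We then show that $F$ lies in a single row $\{a\}\times Y_2$ or a single column $Y_1\times\{b\}$. Take $x=(a,b)\ne x'=(a,b')$ in the same row, and let $z=(c,d)$ be any third point with $c\ne a$. Then $z$ must share its second coordinate with both $x$ and $x'$, so $d=b=b'$, contradicting $x\ne x'$. Hence if two points share a row, all points lie in that row; symmetrically for columns. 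If no two points share a row or column, then $\abs F\le1$. This gives $\abs F\le\max(y_1,y_2)$, and a full row or column attains the bound.

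For (iii), selectability requires $\pi_U$ to be injective on the code, so $\abs F\le y_U$ immediately. This case needs two further points. Pairwise confusability must now be carried by the $K$ and $M$ coordinates, by Lemma~\ref{lem:graph_dictionary}. Identifiability is automatic from the injectivity of $\pi_U$. The construction fixes $k_0\in Y_K$ and $m_0\in Y_M$ and takes $F=\{(k_0,u,m_0):u\in Y_U\}$. This family is injective, selectable, and every pair agrees in both $K$ and $M$, so $N^*_{\mathrm{sel}}=y_U$.

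The only step that is not immediate is the upper bound in (ii). Its row/column trichotomy is where a careless argument could miss the mixed configurations, so the plan is to spell that case analysis out carefully. Positivity of the budgets is used only to guarantee that the extremal families are nonempty.
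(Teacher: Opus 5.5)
Your proposal is correct and follows essentially the same route as the paper. Part (i) uses the observation that agreement in the single coordinate contradicts injectivity; part (ii) shows that a pair sharing one coordinate forces every other word into that cylinder; part (iii) combines the utility-injectivity bound with a constant-coordinate family. The only cosmetic differences are that you make the $|F|\le 1$ case in (ii) explicit, and in (iii) you hold both $K$ and $M$ constant where the paper fixes only $K$.
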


The selective bound states a direct utility bottleneck, wherein an observer cannot uniquely select more readings than it can assign distinct utility labels. Knowledge and medium may supply the agreement needed for atomicity, but they cannot enlarge an injective utility coordinate.

\begin{proposition}[Interpretive Fano inequality]\label{prop:interpretive_fano}
Let $\Lambda$ be uniform on an admissible atom set $S$ with $\abs S=N\ge2$, and let $\widehat\Lambda=g(\mathsf s(\Lambda))$ be any decoder from the joint labels. If $P_e=\mathbb P(\widehat\Lambda\ne\Lambda)$, then
\begin{equation}\label{eq:fano_implicit}
h_2(P_e)+P_e\log_2(N-1)
\ge
\log_2N-
\sum_{i\in\{K,U,M\}}\operatorname{cap}_i(S),
\end{equation}
and therefore
\begin{equation}\label{eq:fano_simple}
P_e
\ge
\max\left\{0,
1-
\frac{\operatorname{cap}_K(S)+\operatorname{cap}_U(S)+\operatorname{cap}_M(S)+1}{\log_2N}
\right\}.
\end{equation}
\end{proposition}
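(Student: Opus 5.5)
The plan is to apply the standard Fano inequality to the Markov chain $\Lambda\to\mathsf s(\Lambda)\to\widehat\Lambda$, and then bound the conditional entropy $H(\Lambda\mid\mathsf s(\Lambda))$ from below using the label budgets.

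\textbf{Step 1: Fano.} Since $\widehat\Lambda=g(\mathsf s(\Lambda))$ and $\Lambda$ takes $N$ values, Fano's inequality gives
\[
h_2(P_e)+P_e\log_2(N-1)\ \ge\ H(\Lambda\mid\widehat\Lambda)\ \ge\ H(\Lambda\mid\mathsf s(\Lambda)).
\]
The second inequality holds because conditioning on a function of $\mathsf s(\Lambda)$ cannot reduce entropy below conditioning on $\mathsf s(\Lambda)$ itself; this is the data-processing form.

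\textbf{Step 2: entropy of the labels.} Write
\[
H(\Lambda\mid\mathsf s(\Lambda))=H(\Lambda)-I(\Lambda;\mathsf s(\Lambda))=\log_2N-H(\mathsf s(\Lambda)),
\]
using that $\Lambda$ is uniform and $\mathsf s$ is a deterministic function, so $H(\mathsf s(\Lambda)\mid\Lambda)=0$. Subadditivity of joint entropy gives
\[
H(\mathsf s(\Lambda))\le\sum_i H(\pi_i(\Lambda)).
\]
Each $\pi_i(\Lambda)$ takes at most $y_i(S)$ values, so $H(\pi_i(\Lambda))\le\log_2y_i(S)=\operatorname{cap}_i(S)$. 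Combining these bounds yields \eqref{eq:fano_implicit}. Neither pairwise confusability nor identifiability of $S$ is needed for this inequality; admissibility only fixes the setting. Under identifiability the right-hand side is at most zero whenever the bound is tight, which is consistent but not required.

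\textbf{Step 3: explicit form.} Use $h_2(P_e)\le1$ and $\log_2(N-1)\le\log_2N$ to get
\[
1+P_e\log_2N\ \ge\ \log_2N-\sum_i\operatorname{cap}_i(S).
\]
Dividing by $\log_2N>0$ (valid since $N\ge2$) and combining with $P_e\ge0$ gives \eqref{eq:fano_simple}.

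The only point needing care is Step 1: one must apply Fano with the estimator alphabet equal to $S$. If $g$ is allowed to take values outside $S$, first compose $g$ with any map sending values outside $S$ into $S$. This can only lower $P_e$, so it suffices to bound this modified decoder. Everything else is routine entropy bookkeeping.
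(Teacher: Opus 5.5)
Your proof is correct and is essentially the paper's argument: Fano's inequality with data processing through $\mathsf s(\Lambda)$, the bound $H(\mathsf s(\Lambda))\le\sum_i\log_2 y_i(S)$, and then $h_2(P_e)\le1$ and $\log_2(N-1)\le\log_2N$. Two small corrections to your side remarks. First, rerouting out-of-$S$ outputs into $S$ lowers $P_e$; this transfers \eqref{eq:fano_simple} but not \eqref{eq:fano_implicit}, whose left side decreases in $P_e$ beyond $(N-1)/N$ (if all labels are constant on $S$, a constant decoder with value outside $S$ violates it), so \eqref{eq:fano_implicit} should be read for $S$-valued decoders, as the paper tacitly does. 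Second, identifiability always gives $\sum_i\operatorname{cap}_i(S)\ge\log_2N$, so for an identifiable $S$ the right side of \eqref{eq:fano_implicit} is nonpositive outright, not only when the bound is tight.
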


The proof is the standard Fano inequality after bounding $I(\Lambda;\mathsf s(\Lambda))$ by the entropy of the three finite labels. It appears in Appendix~\ref{app:proof_fano}.

\subsection{Three-direction cylinder theorem}\label{subsec:cylinder_theorem}

\begin{theorem}[Cylinder optimality]\label{thm:cylinder_optimality}
For all positive three-direction budgets,
\begin{equation}\label{eq:cylinder_capacity}
N^*_{\mathrm{class}}(y_K,y_U,y_M)
=
\max\{y_Ky_U,y_Ky_M,y_Uy_M\}.
\end{equation}
Equivalently, the free-design log-capacity is the sum of the two largest direction log-budgets.
\end{theorem}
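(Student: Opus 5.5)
The plan is to prove the two inequalities separately: an explicit construction for the lower bound, and the Hoffman ratio bound, cited in the introduction, for the upper bound. Since \eqref{eq:cylinder_capacity} is symmetric in the three directions, relabel the budgets as $a\le b\le c$. The target is then $N^*_{\mathrm{class}}=bc$.

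\emph{Lower bound (cylinder construction).} Fix one label $z_0$ in the direction with the smallest budget $a$. Take
\[
F=\{(z_0,y,x):y\in[b],\ x\in[c]\}.
\]
Distinct elements of $F$ are distinct triples, so $F$ is injective. Any two of them agree in the fixed coordinate, so every distinct pair agrees in at least one coordinate. This gives $\abs F=bc$.

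\emph{Upper bound via graph complement.} A family $F\subseteq[a]\times[b]\times[c]$ is automatically injective. The confusability condition says exactly that $F$ contains no pair of triples that differ in \emph{all} coordinates. Let $G$ be the graph on $[a]\times[b]\times[c]$ joining two triples when they differ in every coordinate. Then $G$ is the tensor (categorical) product $K_a\times K_b\times K_c$, and feasible families are precisely the independent sets of $G$.

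First dispose of the degenerate case $a=1$. Then every pair already agrees in the first coordinate, $G$ has no edges, and the bound $\abs F\le bc$ is trivial.

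Now assume $a\ge2$, so all three budgets are at least $2$. The adjacency matrix of $G$ is $(J-I)^{\otimes 3}$. Its eigenvalues are the products $\theta_1\theta_2\theta_3$ with each $\theta_i\in\{n_i-1,-1\}$. The graph is regular of degree $d=(a-1)(b-1)(c-1)$.

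The key computation is the smallest eigenvalue. The negative candidates are $-1$ and $-(n_i-1)(n_j-1)$ for each pair $i\ne j$. The most negative of these is $\lambda_{\min}=-(b-1)(c-1)$, using $a\le b\le c$. Hoffman's bound then gives
\[
\abs F\le \frac{abc\,(b-1)(c-1)}{(a-1)(b-1)(c-1)+(b-1)(c-1)}=\frac{abc}{a}=bc .
\]
Taking logarithms, $\log_2(bc)$ is the sum of the two largest direction log-budgets, which is the equivalent form in the statement.

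I expect the main obstacle to be bookkeeping rather than ideas. One point is checking carefully that $-(b-1)(c-1)$ really is the minimum over all sign patterns, including the all-negative product $-1$. The other is handling the boundary cases where some budget equals $1$, where Hoffman's hypotheses (nonzero degree) fail and the direct argument above is needed instead.

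As a fallback that avoids spectral theory, I would try a fiber argument. Suppose two triples in $F$ share neither their first nor their second coordinate. Then they must share the third, and I would try to double count from there to get $\abs F\le bc$. Because the Hoffman route is standard and short, it remains the primary approach.
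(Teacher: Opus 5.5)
Your proof is correct, but it is not the route the paper takes for this theorem. Your Hoffman-ratio argument on $K_a\times K_b\times K_c$ is exactly the paper's proof of the general $d$-direction result (Theorem~\ref{thm:general_cylinder}), specialized to $d=3$. Your bookkeeping is also right: $\lambda_{\min}=-(b-1)(c-1)$ is more negative than $-1$ and the other single-sign products, and a unit budget leaves the graph edgeless, so you handle it directly.

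For the three-direction case, the paper proves the converse combinatorially. If some two-coordinate projection of $F$ is injective, $|F|$ is at most that pair product. Otherwise take a collision $w=(x,\beta,\gamma)$, $w'=(x',\beta,\gamma)$ with $x\neq x'$. Every $v\in F$ must then have $v_2=\beta$ or $v_3=\gamma$, so $F$ lies in two crossing cylinders. If $F$ is not inside one of them, each word with $v_2=\beta$, $v_3\neq\gamma$ and each word with $v_3=\gamma$, $v_2\neq\beta$ differ in coordinates $2$ and $3$, hence share coordinate $1$. This forces $|F|\le y_1+y_2+y_3-2$, which is dominated by the largest pair product.

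Your route is shorter, needs no case analysis beyond a unit budget, and extends verbatim to any number of directions. Its cost is that it imports Hoffman's bound and the tensor-product spectrum. The paper's route is self-contained and yields extra structure for $d=3$: a feasible family either has an injective two-coordinate projection or is only linear in the budgets. It is also a working version of your spectral-free fallback. The productive starting point is a collision in a two-coordinate projection. After that, your observation that words differing in two coordinates must share the third pins down the common first coordinate.
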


\begin{resultmap}
Achievability holds by fixing a smallest-budget coordinate and enumerating the other two. For the converse, either some two-coordinate projection is injective, immediately giving a pair-product bound, or a projection collision forces the family into two crossing cylinders and yields $\abs F\le y_1+y_2+y_3-2$, which is bounded by the largest pair product. Appendix~\ref{app:proof_cylinder} gives the complete combinatorial proof.
\end{resultmap}

\vspace{-1em}
\begin{figure}[H]
\centering
\includegraphics[width=0.65\textwidth]{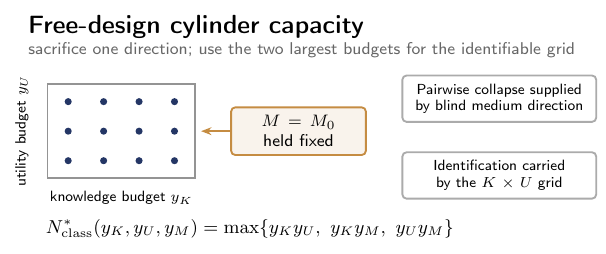}
\caption{Free-design cylinder capacity. A smallest direction is held constant to supply pairwise confusability, while the other two directions carry the identifiable grid. Cylinders are canonical optimizers but not the only optimizers.}
\label{fig:cylinder_capacity}
\rule{.75\textwidth}{0.75pt}
\end{figure}
\vspace{-1em}

For example, with $Y_K=Y_U=Y_M=\{0,1\}$, the parity family
\begin{equation}\label{eq:parity_code}
\{000,011,101,110\}
\end{equation}
has size four and every pair agrees in exactly one coordinate. It attains the cylinder value without any constant coordinate.

The theorem yields the one-, two-, and three-direction sequence
\begin{equation}\label{eq:capacity_trichotomy}
1,
\qquad
\max(y_1,y_2),
\qquad
\max(y_Ky_U,y_Ky_M,y_Uy_M).
\end{equation}
The third direction is the first regime in which class capacity can be multiplicative in two independent observer budgets.

Every realized class-admissible code satisfies the converse
\begin{equation}\label{eq:realized_three_converse}
\abs S
\le
\max\{y_K(S)y_U(S),y_K(S)y_M(S),y_U(S)y_M(S)\},
\end{equation}
and every selective code satisfies $\abs S\le y_U(S)$. Therefore, equality is a realized-achievability question, not a consequence of the abstract envelope.

\subsection{General number of directions}\label{subsec:general_d}

For $d\ge1$ and positive budgets $y=(y_1,\ldots,y_d)$, let $N_d^*(y)$ be the maximum size of an injective family
\[
F\subseteq Y_1\times\cdots\times Y_d,
\qquad
\abs{Y_j}=y_j,
\]
such that every pair of distinct words agrees in at least one coordinate.

\begin{theorem}[General cylinder optimality]\label{thm:general_cylinder}
For every $d\ge1$,
\begin{equation}\label{eq:general_cylinder}
N_d^*(y_1,\ldots,y_d)
=
\max_{1\le i\le d}\prod_{j\ne i}y_j.
\end{equation}
\end{theorem}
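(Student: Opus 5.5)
Achievability is a cylinder construction. Order the budgets so that $y_d=\min_j y_j$. Fix a symbol $c\in Y_d$ and take $F=Y_1\times\cdots\times Y_{d-1}\times\{c\}$. This family is injective, since distinct words already differ in the first $d-1$ coordinates. Every pair agrees in coordinate $d$. Its size is $\prod_{j\ne d}y_j=\max_i\prod_{j\ne i}y_j$, because removing the smallest factor leaves the largest product. The small cases $d=1,2$ match Proposition~\ref{prop:elementary_regimes}; the empty product for $d=1$ gives $1$.

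For the converse, the paper itself points to the intended route: it calls \eqref{eq:intro_cylinder} a Hoffman-ratio consequence for a tensor product of complete graphs. I would use that spectral argument rather than extend the combinatorial induction of Theorem~\ref{thm:cylinder_optimality}.

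Let $G=K_{y_1}\times\cdots\times K_{y_d}$ be the tensor product on $Y_1\times\cdots\times Y_d$. In this graph, two words are adjacent exactly when they differ in every coordinate. An injective pairwise-confusable family is then precisely an independent set of $G$.

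The adjacency matrix is $A=\bigotimes_j(J_j-I_j)$. The eigenvalues of $J_j-I_j$ are $y_j-1$ (once) and $-1$ (multiplicity $y_j-1$). Hence the eigenvalues of $A$ are the products $\prod_j\mu_j$ with each $\mu_j\in\{y_j-1,-1\}$. The graph is regular of degree $k=\prod_j(y_j-1)$.

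Assume first that every $y_j\ge2$, so that $k>0$. Then the Hoffman bound applies:
\[
\alpha(G)\le \frac{n(-\lambda_{\min})}{k-\lambda_{\min}},\qquad n=\prod_j y_j.
\]
The crux is identifying $\lambda_{\min}$. I expect it to be $-\prod_{j\ne d}(y_j-1)$, taking $\mu_d=-1$ at the smallest coordinate and $\mu_j=y_j-1$ elsewhere. Plugging this in gives
\[
\frac{n\prod_{j\ne d}(y_j-1)}{\prod_{j\ne d}(y_j-1)\,\bigl((y_d-1)+1\bigr)}=\frac{n}{y_d},
\]
which is the desired bound.

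The step I expect to be the main obstacle is showing that this really is the most negative eigenvalue. A negative product needs an odd number of $-1$ factors. Among products with a single $-1$ factor, the largest magnitude comes from placing it at the smallest $y_j$. With three or more $-1$ factors, the magnitude is $\prod_{j\notin T}(y_j-1)$, and each extra $-1$ replaces a factor $y_j-1\ge1$ by $1$, so the magnitude does not increase. Writing this out carefully, including ties and factors with $y_j-1=1$, is the real content of the converse.

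Budgets equal to $1$ are degenerate and handled separately. If some $y_i=1$, every pair of words agrees in coordinate $i$, so $N_d^*=\prod_{j\ne i}y_j$, and this is the maximum since $y_i$ is then minimal. The equality in \eqref{eq:general_cylinder} follows in all cases.

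If the Hoffman computation became delicate, my fallback is a direct induction on $d$. I would split $F$ by its last coordinate. If all fibres are singletons in the first $d-1$ coordinates, the projection to those coordinates is injective, so $|F|\le\prod_{j<d}y_j$. Otherwise I would generalize the crossing-cylinder argument of Theorem~\ref{thm:cylinder_optimality}. I expect that route to be messier, which is why the spectral route comes first.
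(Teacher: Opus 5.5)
Your proposal is correct and takes essentially the paper's route: a cylinder for achievability, and for the converse the Hoffman ratio bound on the avoidance graph $K_{y_1}\otimes\cdots\otimes K_{y_d}$ with least eigenvalue $-\prod_{j\ne j_0}(y_j-1)$ at a smallest-budget coordinate, treating budgets equal to $1$ separately. Your justification that three or more $-1$ factors cannot produce a more negative eigenvalue than a single one is slightly more explicit than the paper's one-line version.
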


\begin{resultmap}
A cylinder gives the lower bound. For the upper bound, form the avoidance graph on the full product, connecting words that differ in every coordinate. It is the tensor product $\bigotimes_jK_{y_j}$. Its least eigenvalue is obtained by taking $-1$ in a smallest-budget coordinate and $y_j-1$ elsewhere. The Hoffman ratio bound gives exactly the largest $(d-1)$-coordinate product. Appendix~\ref{app:proof_general_cylinder} records the spectral calculation.
\end{resultmap}

The extremal graph calculation is classical \cite{hoffman1970,delsarte1973,haemers1995,haemers2021hoffman}. The interpretation-theoretic contribution is the correspondence between its independent sets and finite observer labels that simultaneously enforce pairwise lumping and joint separation.

\begin{theorem}[Sectional identity]\label{thm:sectional_identity}
Fix $d\ge2$ and slice along coordinate $1$. Let
\[
Q'=Y_2\times\cdots\times Y_d
\]
and let $G'$ be the avoidance graph on $Q'$. Then
\begin{equation}\label{eq:sectional_identity}
N_d^*(y)
=
\abs{Q'}+
\max_{\substack{I\subseteq Q'\\I\text{ independent in }G'}}
\left((y_1-1)\abs I-\abs{N_{G'}(I)}\right),
\end{equation}
where $N_{G'}(I)$ is the open neighborhood of $I$.
\end{theorem}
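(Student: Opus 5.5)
The plan is to slice an optimal family along coordinate $1$ and reduce everything to the graph $G'$. Given an injective family $F\subseteq Y_1\times Q'$, define for each $c\in Y_1$ the slice
\[
F_c=\{q\in Q':(c,q)\in F\},
\qquad
\abs F=\sum_{c\in Y_1}\abs{F_c}.
\]
The first step translates the pairwise-agreement condition into a condition on slices. Two distinct words with the same first coordinate always agree, so each slice $F_c$ is an unconstrained subset of $Q'$; injectivity is automatic because the words are distinct elements of the product. Two words $(c,q)$ and $(c',q')$ with $c\ne c'$ agree in some coordinate exactly when $q,q'$ agree in some coordinate of $Q'$. Since $G'$ has no loops, this says that $\{q,q'\}$ is not an edge of $G'$, and the case $q=q'$ is allowed. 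Hence $F$ is feasible if and only if there are no $G'$-edges between $F_c$ and $F_{c'}$ whenever $c\ne c'$, and $N_d^*(y)$ is the maximum of $\sum_c\abs{F_c}$ over such slice families.

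\emph{Lower bound.} Fix an independent set $I\subseteq Q'$ and a distinguished label $c_0\in Y_1$. Set $F_{c_0}=Q'\setminus N_{G'}(I)$ and $F_c=I$ for every $c\ne c_0$. The cross conditions hold for two reasons. Between two copies of $I$ there are no edges because $I$ is independent. Between $I$ and $Q'\setminus N_{G'}(I)$ there are no edges by the definition of the open neighborhood. Because $I\cap N_{G'}(I)=\emptyset$, the size is
\[
\abs{Q'}-\abs{N_{G'}(I)}+(y_1-1)\abs I.
\]
Maximizing over $I$ gives $\ge$ in \eqref{eq:sectional_identity}.

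\emph{Upper bound.} Given a feasible slice family, let $m(q)=\#\{c:q\in F_c\}$, let $U=\{q:m(q)\ge1\}$, and let $I=\{q:m(q)\ge2\}$. The two claims are that $I$ is independent and that $N_{G'}(I)\cap U=\emptyset$.

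For independence, suppose $q,q'\in I$ are adjacent. Then $q$ lies in two distinct slices and $q'$ lies in two distinct slices, so we can pick a slice $c$ containing $q$ and a different slice $c'$ containing $q'$. This puts an edge across distinct slices, a contradiction.

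For the second claim, suppose $q\in I$ is adjacent to some $q'\in U$ lying in slice $c$. Since $q$ lies in at least two slices, it lies in one different from $c$, which again contradicts the cross condition.

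It follows that $U\setminus I$ and $N_{G'}(I)$ are disjoint subsets of $Q'\setminus I$. Using $m(q)\le y_1$ on $I$ and $m(q)=1$ on $U\setminus I$,
\[
\sum_c\abs{F_c}=\sum_{q\in U}m(q)\le y_1\abs I+\abs{Q'}-\abs I-\abs{N_{G'}(I)},
\]
which is the right-hand side of \eqref{eq:sectional_identity} evaluated at this $I$. This gives $\le$ and completes the identity.

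The main obstacle is choosing the right $I$ in the converse. The natural choice is the set of words of $Q'$ repeated across slices. The two structural facts about it (independence, and no neighbors anywhere in the union) both come from the same observation: a repeated word can always be placed in a slice different from any given competitor. The hypothesis $d\ge2$ is used only to make $Q'$ and $G'$ well defined.
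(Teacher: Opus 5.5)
Your proof is correct and is essentially the paper's argument. The paper likewise takes the suffixes used in at least two first-coordinate slices, shows this set is independent in $G'$ and that none of its neighbours occurs in any slice, and counts to get $\abs{F}\le\abs{Q'}+(y_1-1)\abs{M}-\abs{N_{G'}(M)}$; it then attains the bound with your construction (an independent $I$ in every slice, $Q'\setminus(I\cup N_{G'}(I))$ in one slice). One small correction to your closing remark: $d\ge2$ is not only a well-definedness condition but is genuinely used where you allow $q=q'$ across distinct slices, since identical suffixes agree only if $Q'$ has at least one coordinate; for $d=1$ the formula would give $y_1$ rather than $N_1^*=1$.
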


The identity separates the gain from reusing suffixes across first-coordinate slices from the avoidance neighborhood that such reuse excludes. It is proved constructively in Appendix~\ref{app:proof_sectional}. When $y_1$ is a smallest budget, Theorem~\ref{thm:general_cylinder} is equivalent to
\begin{equation}\label{eq:sectional_expansion}
\abs{N_{G'}(I)}\ge(y_1-1)\abs I
\end{equation}
for every independent $I\subseteq Q'$.

\section{From generated candidates to zero-error readout}\label{sec:zero_error_readout}

The finite capacity theorem counts admissible labels on one joint atom structure. However, a protocol can generate more candidate distinctions than it can collapse to one stable, communicable reading. The distinction between generation and readout is therefore typed as a pair rather than compressed into one scalar.

\subsection{Generative and readout coordinates}\label{subsec:generative_readout}

\begin{definition}[Measurement protocol]\label{def:measurement_protocol}
A measurement protocol of length $m$ is an ordered tuple
\begin{equation}\label{eq:measurement_protocol}
\pi=(Q_1,\ldots,Q_m)
\end{equation}
of admissible queries, where $Q_{k+1}$ is applied to the post-readout state from $Q_k$ whenever that readout exists. The protocol is commuting on a finite atom window if the relevant spectral projections commute pairwise on every cell used by the model.
\end{definition}

Let $A_\tau$ be a finite active candidate set. A generative coordinate is a finite map
\begin{equation}\label{eq:generative_coordinate}
\gamma_\tau:A_\tau\to Z_\tau
\end{equation}
through which the observer distinguishes candidates before readout. Its generative count is
\begin{equation}\label{eq:generative_count}
G_\gamma(O,\tau;A_\tau)
=
\abs{\gamma_\tau(A_\tau)}.
\end{equation}
The knowledge-only and full-semantic specializations are $\abs{\pi_K(A_\tau)}$ and $\abs{\mathsf s(A_\tau)}$, respectively.

Suppose a protocol-induced finite agreement structure and selection rule have been specified. Define the readout count by the realized class capacity
\begin{equation}\label{eq:protocol_readout_capacity}
R(O,\pi;\delta)=N_{\mathrm{class}}(O,\pi;\delta).
\end{equation}
If no finite agreement structure or selected readout rule has been supplied, the static coding theorem does not define $R$. When every admissible readout code is $\gamma$-injective, define the compatible deficit
\begin{equation}\label{eq:compatible_deficit}
\Delta_\gamma(O,\pi;A_\tau,\delta)
=
G_\gamma(O,\tau;A_\tau)-R(O,\pi;\delta)
\ge0.
\end{equation}
The safe object is the pair $(G_\gamma,R)$. $G_\gamma$ is a finite label count and does not require a joint spectral law. $R$ uses the collapse-and-readout structure and is therefore regime-dependent.

\begin{proposition}[Fixed-support generative monotonicity]\label{prop:generative_monotonicity}
Let $A$ be fixed and suppose $\gamma_1:A\to Z_1$ refines $\gamma_0:A\to Z_0$, so $\gamma_0=r\circ\gamma_1$ for some map $r$. Then
\begin{equation}\label{eq:generative_monotonicity}
G_{\gamma_0}(O,\tau;A)
\le
G_{\gamma_1}(O,\tau;A),
\end{equation}
and, for every law on $A$,
\begin{equation}\label{eq:entropy_refinement}
H(A\mid\sigma(\gamma_1))
\le
H(A\mid\sigma(\gamma_0)).
\end{equation}
\end{proposition}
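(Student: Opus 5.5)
Both claims come from the factorization $\gamma_0=r\circ\gamma_1$, so the plan is to treat them separately: a counting argument for \eqref{eq:generative_monotonicity} and a $\sigma$-algebra inclusion for \eqref{eq:entropy_refinement}.

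\emph{Counting inequality.} Restrict $r$ to $\gamma_1(A)$. Since $\gamma_0(A)=r(\gamma_1(A))$, the restriction $r|_{\gamma_1(A)}:\gamma_1(A)\to\gamma_0(A)$ is surjective. The image of a finite set under a map has at most as many elements as the set, so $\abs{\gamma_0(A)}\le\abs{\gamma_1(A)}$. By the definition \eqref{eq:generative_count} of the generative count, this is exactly \eqref{eq:generative_monotonicity}. No property of the observer $O$ or the time index $\tau$ is used; they only index the fixed candidate set $A$.

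\emph{Entropy inequality.} Fix a law $p$ on $A$ and let $A$ also denote the $A$-valued random variable with that law. Each map on $A$ is finite-valued, so the associated $\sigma$-algebras are generated by finite partitions.
\begin{enumerate}
\item First show $\sigma(\gamma_0)\subseteq\sigma(\gamma_1)$. For any $z\in Z_0$, the fiber of $\gamma_0$ is a union of fibers of $\gamma_1$:
\[
\gamma_0^{-1}(z)=\bigcup_{w\in r^{-1}(z)}\gamma_1^{-1}(w).
\]
Hence the partition of $A$ induced by $\gamma_1$ refines the one induced by $\gamma_0$.
\item Next apply the standard fact that conditioning on a finer $\sigma$-algebra does not increase conditional entropy. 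Concretely, $\gamma_0(A)$ is a function of $\gamma_1(A)$, so $H(A\mid\gamma_1(A))=H(A\mid\gamma_1(A),\gamma_0(A))\le H(A\mid\gamma_0(A))$, as in \cite{coverthomas2006}.
\item As a check, one can also argue cellwise with \eqref{eq:discrete_conditional_entropy}. Write
\[
H(A\mid\sigma(\gamma))=H(A)-H(\gamma(A)),
\]
which holds because $\gamma(A)$ is a deterministic function of $A$. The inequality then reduces to $H(\gamma_0(A))\le H(\gamma_1(A))$, since entropy does not increase under the deterministic map $r$.
\end{enumerate}

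\emph{Degenerate cells.} The only point needing care is that the inequality must hold for every law, including laws that give zero mass to some cells. Cells of zero mass contribute nothing to \eqref{eq:discrete_conditional_entropy}. The identity $H(A\mid\sigma(\gamma))=H(A)-H(\gamma(A))$ still holds in that case, with the usual $0\log0=0$ convention, so no further case split is needed.
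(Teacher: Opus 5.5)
Your proposal is correct and matches the paper's proof: the paper gets the counting inequality from the $\gamma_1$-partition refining the $\gamma_0$-partition (your surjection $r|_{\gamma_1(A)}:\gamma_1(A)\to\gamma_0(A)$ is the same observation), and the entropy inequality from monotonicity of conditional entropy under refinement of the conditioning $\sigma$-algebra. Your extra check via $H(A\mid\sigma(\gamma))=H(A)-H(\gamma(A))$ is also valid, since the candidate set is finite and so $H(A)<\infty$.
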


The statement is fixed-support. Along a moving observer trajectory, support transport must be specified before monotonicity can be inferred. Refining knowledge can enlarge the generated partition without improving readout when utility does not select one candidate, the medium merges selected labels, or the query family is order-sensitive.

\subsection{Code-level zero-error interpretation}\label{subsec:certificate}

To start, work on a finite effective invariant sector $\HV$ with an observer realization $(\pi_K,\pi_U,\pi_M)$. Let $\mathcal Q=\{Q_1,\ldots,Q_m\}$ be a finite family of admissible $L$-covariant queries with selected orthogonal spectral projectors $P_i^*$ on a declared finite candidate sector $H_S$.

\begin{definition}[Code-level zero-error readout]\label{def:code_zero_error}
A code-level zero-error interpretive readout is an observed interpretation $\Iobs$ satisfying:
\begin{enumerate}[label=(Z\arabic*),leftmargin=2.8em]
\item \textbf{Uniform atomic sharpness.} The candidate set is a finite family of isolated non-null joint pure-point atoms with positive gaps; Rational Entropy minimization and utility selection choose one maximizing atom $\lambda_*$; and the corresponding joint eigenspace is one-dimensional.
\item \textbf{Decodable code.} The medium map is injective on the declared code, so the selected atom is determined by the communicated label.
\item \textbf{Hereditary order independence.} For every nonempty query subfamily $J$ and every two orderings $\sigma,\tau$ of $J$,
\begin{equation}\label{eq:hereditary_order}
\prod_{j\in J}^{\sigma}P_j^*|_{H_S}
=
\prod_{j\in J}^{\tau}P_j^*|_{H_S}.
\end{equation}
\end{enumerate}
\end{definition}

\begin{theorem}[Zero-error interpretive readout certificate]\label{thm:zero_error_certificate}
Under the standing finite code regime, a code-level zero-error interpretive readout exists if and only if the following four conditions hold:
\begin{enumerate}[label=(C\arabic*),leftmargin=2.8em]
\item \textbf{Joint measure on the code sector.} The selected projectors commute pairwise on $H_S$. Equivalently, the selected family admits a joint projection-valued measure there. Strong commutation of the compressed queries is sufficient.
\item \textbf{Atomic separated code.} The declared selected spectrum is a finite family of isolated non-null pure-point atoms with positive separating gaps, with no continuous component included in the code sector.
\item \textbf{Unique utility selection.} On the atom-supported zero set, the utility objective has one maximizing atom $\lambda_*$ and the corresponding joint eigenspace is one-dimensional.
\item \textbf{Faithful medium.} The medium label is injective on the declared code.
\end{enumerate}
When (C1)--(C4) hold,
\begin{equation}\label{eq:zero_error_output}
E(\{\lambda_*\})\Iobs=\Iobs,
\qquad
\HR(\Iobs\mid O,\mathcal Q)=0,
\end{equation}
$\Iobs\in\Fix(U)$, and (Z1)--(Z3) hold. For one query, (C1) is vacuous and hereditary order independence is automatic.
\end{theorem}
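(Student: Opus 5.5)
The plan is to prove the two directions separately and match the conditions to the definition pairwise: (C1) to (Z3), (C2)+(C3) to (Z1), and (C4) to (Z2). The standing finite code regime (Assumption~\ref{assump:finite_coding}) lets every object be treated as a finite-dimensional projector family on $H_S$.

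\textbf{Necessity.} Suppose a code-level zero-error readout $\Iobs$ exists.
\begin{itemize}
\item (Z1) already states that the candidate set is a finite family of isolated non-null joint pure-point atoms with positive gaps. This is (C2).
\item (Z1) also states that entropy minimization followed by utility selection picks a single maximizing atom $\lambda_*$ with one-dimensional joint eigenspace. This is (C3), read on the atom-supported zero set.
\item (Z2) is literally (C4).
\item For (C1), apply (Z3) to each two-element subfamily $J=\{i,j\}$. The two orderings give $P_i^*P_j^*=P_j^*P_i^*$ on $H_S$, i.e.\ pairwise commutation. The stated equivalence with a joint projection-valued measure is the finite-dimensional spectral theorem for a commuting family of orthogonal projectors: the joint atoms are the nonzero products of the $P_i^*$ and their complements.
\item The remark that strong commutation of the compressed queries is sufficient follows because commuting spectral measures give commuting spectral projectors.
\end{itemize}

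\textbf{Sufficiency.} Assume (C1)--(C4).
\begin{itemize}
\item Use (C1) to form the joint PVM $E$ on $H_S$. By (C2), $E$ is purely atomic on a finite separated window $S$ with $E(S)=I_{H_S}$.
\item This is exactly the exhaustive finite-window hypothesis of Theorem~\ref{thm:finite_collapse}, applied to the joint spectral calculus in place of a single $Q_L$.
\item Part (i) of that theorem gives $\min\HR=0$ and classifies $\mathcal M_0$.
\item By (C3), $\Lambda_{\max}=\{\lambda_*\}$. Part (iv) then gives $E(\{\lambda_*\})\Iobs=\Iobs$ for every selected state, unique up to phase by the one-dimensionality in (C3).
\item A point mass is injective under every label map, so Proposition~\ref{prop:zero_criterion} gives $\HR(\Iobs\mid O,\mathcal Q)=0$.
\item Membership $\Iobs\in\Fix(U)$ holds because the whole construction lives on $H_S\subseteq\HV$. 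The $L$-covariance of each $Q_i$ together with Lemma~\ref{lem:safe_compression}(ii) makes the restricted spectral projectors act inside $\HV$, so the selected ray stays in the invariant sector.
\item (Z1) now follows from (C2), (C3) and the selection just derived, and (Z2) is (C4).
\item (Z3) follows from (C1) by induction on $|J|$. Pairwise commuting projectors generate a commutative algebra, so any two orderings of a finite product coincide, since adjacent transpositions generate the symmetric group.
\item For a single query, (C1) is vacuous and every $J$ is a singleton, so (Z3) is automatic.
\end{itemize}

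\textbf{Main obstacle.} The delicate step is aligning ``atom'' in the multi-query setting. Theorem~\ref{thm:finite_collapse} is stated for one self-adjoint $Q_L$, whereas (Z1) and (C3) refer to joint atoms of the family. I would handle this by choosing a single self-adjoint operator $Q_\star=\sum_{a}c_a E(\{a\})$ with distinct real $c_a$ indexed by the joint atoms. Its pure-point spectral projectors are exactly the joint atomic projectors, so the theorem transfers verbatim. The observer maps and utility are transported along the bijection $a\leftrightarrow c_a$.

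A secondary point concerns necessity: (Z1) speaks of ``joint'' atoms, so commutation is in a sense presupposed there. (Z3) should still be invoked explicitly so that (C1) is derived rather than assumed.
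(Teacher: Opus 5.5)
Your proposal is correct and follows essentially the same route as the paper. For necessity you apply (Z3) to two-element subfamilies and read (C2)--(C4) off (Z1)--(Z2); for sufficiency you use hereditary commutation for (Z3), the finite collapse classification with the convex-average utility argument to isolate $\lambda_*$, and (C4) for decodability. The one difference is your surrogate operator $Q_\star=\sum_a c_aE(\{a\})$, which makes Theorem~\ref{thm:finite_collapse} literally applicable to joint atoms, whereas the paper computes the expected utility $\sum_{\lambda}\norm{E(\{\lambda\})I}^2u_Q(\lambda)$ directly against the joint projection-valued measure.
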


\begin{resultmap}
Commuting orthogonal projectors give order-independent products. Atomic separation and Theorem~\ref{thm:finite_collapse} reduce the zero set to the declared atom window. Expected utility is a convex average, and the unique maximum selects $\lambda_*$. One-dimensionality fixes a ray; medium injectivity decodes it. Necessity follows by applying hereditary order independence to every two-query subfamily and unpacking the definitions of uniform sharpness and decodability. Appendix~\ref{app:proof_zero_error} gives both directions.
\end{resultmap}

Pairwise confusability is not required by condition (C3) to select the final atom. It remains essential for the capacity layer because it is the exact condition that prevents larger multi-atom supports from being zero-cost readings before utility selection.

\begin{figure}[H]
\centering
\includegraphics[width=0.98\textwidth]{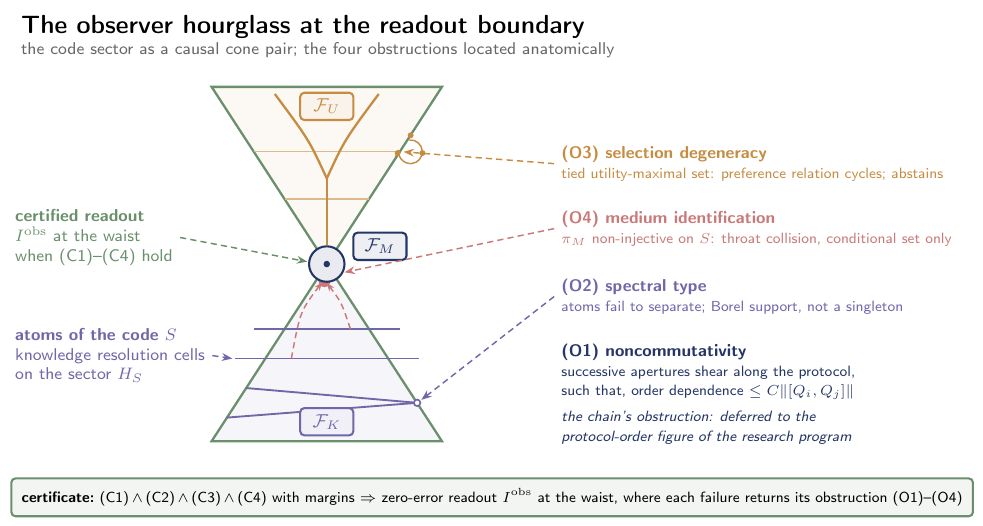}
\caption{The zero-error readout boundary. The static code sector is certified when commutation, atomic separation, unique selection, and medium faithfulness hold. Spectral, selection, and medium failures appear at the observer aperture; noncommutativity is a property of the protocol chain.}
\label{fig:observer_hourglass}
\rule{.75\textwidth}{0.75pt}
\end{figure}
\vspace{-1em}

\begin{corollary}[Complete obstruction list on the finite code window]\label{cor:obstructions}
Failure of the zero-error certificate is equivalent to at least one of:
\begin{enumerate}[label=(O\arabic*),leftmargin=2.8em]
\item \textbf{Noncommutativity.} Some selected projector pair fails to commute, so a two-query subprotocol is order-sensitive.
\item \textbf{Spectral-type failure.} The candidate window is not a finite isolated atom-separated pure-point code; the safe output is then selected Borel support rather than a uniform singleton-atom certificate.
\item \textbf{Selection degeneracy.} More than one atom maximizes utility, the selected joint eigenspace has dimension greater than one, or an order-sensitive preference protocol fails to return a maximal candidate.
\item \textbf{Medium identification failure.} Distinct candidate atoms share a readout label, so communication cannot identify the selected atom within the code.
\end{enumerate}
No fifth obstruction occurs for the declared finite attained code-level object.
\end{corollary}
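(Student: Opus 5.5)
The corollary is the contrapositive of Theorem~\ref{thm:zero_error_certificate}, with each failed certificate condition (C$k$) matched to the obstruction (O$k$). Under the standing finite code regime, the theorem says a code-level zero-error readout exists if and only if (C1)--(C4) all hold. So failure of the certificate is equivalent to the negation of the conjunction, that is, to $\neg$(C1) $\vee$ $\neg$(C2) $\vee$ $\neg$(C3) $\vee$ $\neg$(C4). The work is to show, condition by condition, that $\neg$(C$k$) is equivalent to (O$k$) as stated. The \emph{no fifth obstruction} clause then follows for free, because the certificate has exactly these four conjuncts.

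\textbf{(O1).} $\neg$(C1) means some pair $P_i^*,P_j^*$ fails to commute on $H_S$, which is literally the first clause of (O1). For the claim that the two-query subprotocol is order-sensitive, take $J=\{i,j\}$ in \eqref{eq:hereditary_order}. Then $P_j^*P_i^*|_{H_S}\ne P_i^*P_j^*|_{H_S}$, so (Z3) fails, and the example \eqref{eq:order_example} shows this can occur on actual rays. Conversely, (O1) negates (C1) directly.

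\textbf{(O2) and (O4).} $\neg$(C2) is exactly the failure of the window to be a finite, isolated, atom-separated pure-point code, possibly because a continuous component is present. The remark about the safe output is only a pointer. In that case Theorem~\ref{thm:finite_collapse} is unavailable, and only Proposition~\ref{prop:selected_support_transfer} applies, which yields Borel support. $\neg$(C4) is exactly non-injectivity of $\pi_M$ on the code, that is, two distinct atoms share a medium label. This is (O4), and it is also the failure of (Z2).

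\textbf{(O3), the main obstacle.} I expect the delicate step to be $\neg$(C3) $\Leftrightarrow$ (O3). (C3) has two parts: a unique maximizing atom, and a one-dimensional joint eigenspace. Negating it gives the first two disjuncts of (O3) directly. The third disjunct, an order-sensitive preference protocol failing to return a maximal candidate, needs care, and I would handle it as follows. Within the finite commuting regime, Theorem~\ref{thm:finite_collapse}(iii) characterizes the selected set as laws supported in $\Lambda_{\max}$. So any procedure that fails to return a maximal candidate yields no state in $\mathcal M_{\mathrm{sel}}$, and (C3), read as \enquote{the selection attains a unique maximizer}, fails. Conversely, if (C3) fails, one of the first two disjuncts already holds. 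The third disjunct is therefore subsumed and does not enlarge the list. This keeps the equivalence exact, but only if (C3) is interpreted as including attainment by the declared selection rule. I would state that reading explicitly.

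Finally, I would note that the obstructions need not be exclusive, since several can hold at once. The corollary asserts \enquote{at least one}, so overlaps are harmless.
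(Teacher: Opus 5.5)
Your proposal is correct and follows the paper's proof. It negates the conjunction (C1)--(C4) in Theorem~\ref{thm:zero_error_certificate}, matches each (O$k$) to the negation of (C$k$), allows the obstructions to co-occur, and gets the ``no fifth obstruction'' clause from the four-conjunct structure; your explicit treatment of the third disjunct of (O3) makes visible what the paper leaves implicit when it calls (C3) ``unique one-ray selection,'' namely that (C3) includes attainment of the maximizer by the declared selection rule (alternatively, that disjunct defeats (Z1) directly).
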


On the one-shot code window itself, taking $A_\tau=S$ and $\gamma=\mathsf s_{O,Q}|_S$ makes the generated alphabet exactly the declared code alphabet, so no separate compatible-deficit obstruction is present. Other choices of active alphabet or generative coordinate are not covered by the four-obstruction equivalence. Beyond the declared regime, failure of minimizer attainment, moving support, or protocol-level nonconvergence can become additional problems.

\subsection{Perturbation-stable realized codes}\label{subsec:code_reliability}

A numerical implementation estimates $Q_L$ or its spectral data. A realized code is therefore certified only relative to operator and label margins.

\begin{assumption}[Stable labelled atoms]\label{assump:stable_atoms}
Let $Q_L$ be self-adjoint with finite atom cells $\{B_a:a\in S\}$ and let $\widehat Q$ be a self-adjoint approximation on the same finite-dimensional sector. Assume:
\begin{enumerate}[label=(\alph*)]
\item the cells are separated from one another and the remaining relevant spectrum by at least $\delta>0$;
\item $\norm{\widehat Q-Q_L}\le\varepsilon<\delta/4$;
\item the observer labels are constant on the $2\varepsilon$ spectral neighborhood of each $B_a$.
\end{enumerate}
\end{assumption}

\begin{proposition}[Perturbation stability of a realized code]\label{prop:code_perturbation}
Under Assumption~\ref{assump:stable_atoms}, every atom $B_a$ has a corresponding empirical spectral cell $\widehat B_a$ with the same label triple. Every realized class-admissible or selective-admissible code therefore remains admissible in the empirical model with the same capacity count.
\end{proposition}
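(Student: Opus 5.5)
The argument has two parts: a perturbative spectral step that produces the empirical cells, and a purely combinatorial step that carries the label structure over.

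\emph{Spectral step.} For each $a\in S$, enclose $B_a$ in the closed $2\varepsilon$-neighborhood
\[
N_a=\{\lambda:\dist(\lambda,B_a)\le2\varepsilon\}
\]
and set $\widehat B_a:=\sigma(\widehat Q)\cap N_a$.

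\begin{enumerate}
\item The neighborhoods are pairwise disjoint, and they are disjoint from the $2\varepsilon$-neighborhood of the remaining relevant spectrum. This follows from Assumption~\ref{assump:stable_atoms}(a)--(b), since $4\varepsilon<\delta$.
\item By Weyl's inequality for self-adjoint operators on the finite-dimensional sector, every point of $\sigma(\widehat Q)$ lies within $\varepsilon$ of $\sigma(Q_L)$. Hence $\sigma(\widehat Q)$ splits cleanly into the cells $\widehat B_a$ and a remainder.
\item Apply Proposition~\ref{prop:query_perturbation} to each isolated component $\Sigma_1=B_a$, with $Q_1=Q_L$, $Q_2=\widehat Q$ and the same $\delta$. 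This gives $\norm{E_a-\widehat E_a}\le4\varepsilon/\delta<1$, where $\widehat E_a=E_{\widehat Q}(\widehat B_a)$.
\item Two orthogonal projectors at distance less than $1$ have equal rank. Therefore $\widehat E_a\neq0$, so $\widehat B_a$ is a non-null empirical cell with the same multiplicity as $B_a$.
\end{enumerate}

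\emph{Label step.} Because $\widehat B_a\subseteq N_a$, Assumption~\ref{assump:stable_atoms}(c) says the labels $\pi_K,\pi_U,\pi_M$ are constant on $\widehat B_a$ with the same value as on $B_a$. Consequently the empirical label triple $\widehat{\mathsf s}(a)$ equals $\mathsf s(a)$ for every $a$. The correspondence $a\mapsto\widehat B_a$ is a bijection onto the empirical atom family, since the cells are disjoint and each is non-null.

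All three conditions in Definition~\ref{def:code_conditions} depend only on the label maps restricted to the index set. Pairwise confusability and identifiability are the clique/independence conditions of Lemma~\ref{lem:graph_dictionary}, and selectability is injectivity of $\pi_U$. The agreement graphs $G_K,G_U,G_M$ are therefore literally identical for $S$ and its image. Hence every class-admissible or selective-admissible code $S$ maps to an admissible code of the same size. The separation $\delta-4\varepsilon>0$ between empirical cells keeps them pairwise disjoint, so the image still satisfies Assumption~\ref{assump:finite_coding} at a reduced scale.

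The main obstacle is the spectral step, specifically making sure Proposition~\ref{prop:query_perturbation} applies to each cell individually, with the complement taken to include the other cells. The gap $\delta$ in (a) is stated against other cells and the remaining spectrum jointly, so the component distance is at least $\delta$. I would check that the $\varepsilon<\delta/4$ hypothesis is used consistently. If the cited bound only guarantees isolation and not location within $2\varepsilon$, I would fall back on Weyl's inequality for the location statement and use the Riesz-projector bound only for non-nullness and rank preservation.
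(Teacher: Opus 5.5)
Your proposal is correct and follows the paper's own proof: Proposition~\ref{prop:query_perturbation} supplies a unique nearby empirical cluster with a close spectral projector. The $2\varepsilon$ label margin then transfers each label triple, so all agreement relations are unchanged and every class- or selective-admissible code carries over with the same number of codewords. Your version adds details the paper leaves implicit: you localize $\widehat B_a$ inside the $2\varepsilon$-neighborhood via the Weyl-type inclusion, and you get non-nullness and rank preservation from $\norm{E_a-\widehat E_a}\le 4\varepsilon/\delta<1$.
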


The proof combines the no-crossing and projector stability of Proposition~\ref{prop:query_perturbation} with the label-margin assumption. If an estimator satisfies
\begin{equation}\label{eq:concentration_tail}
\mathbb P\{\norm{\widehat Q-Q_L}>\varepsilon_n\}\le\eta_n,
\end{equation}
with $\varepsilon_n<\delta/4$, then the realized code is capacity-stable with probability at least $1-\eta_n$. Matrix concentration can supply exponential templates under bounded or sub-exponential hypotheses \cite{tropp2015}; without those hypotheses, no universal rate is asserted.

\begin{figure}[H]
\centering
\includegraphics[width=0.98\textwidth]{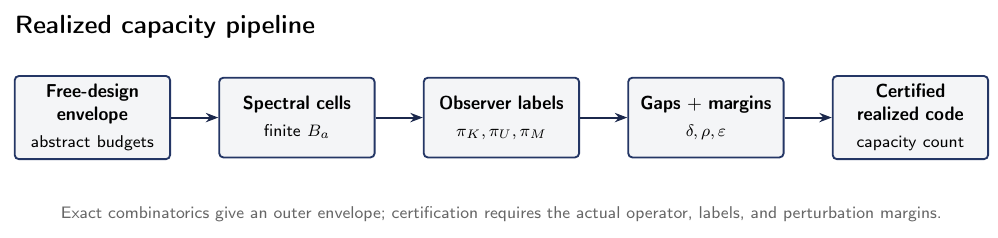}
\caption{Realized capacity pipeline. The free-design envelope is an outer bound. A realized certificate additionally depends on actual spectral atoms, observer labels, separating gaps, and empirical perturbation margins.}
\label{fig:realized_pipeline}
\rule{.75\textwidth}{0.75pt}
\end{figure}
\vspace{-1em}

\subsection{Readout-only converse}\label{subsec:readout_converse}

The fiber criterion has an immediate coding form.

\begin{proposition}[Readout-only capacity converse]\label{prop:readout_only_converse}
Let $S$ be a finite latent candidate family and let $R:S\to Y$ be the observed readout. Any code or decoder using only $R(s)$ can distinguish at most $\abs{R(S)}$ latent classes. If $R(s)=R(s')$ for distinct candidates, no readout-only procedure can identify which of $s,s'$ occurred.
\end{proposition}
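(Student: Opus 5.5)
The plan is to reduce the statement to the fiber criterion, Proposition~\ref{prop:fiber_criterion}, applied to the finite latent family $S$ with $D=S$. Any readout-only code or decoder is, by definition, a procedure whose output is a function of the observed value $R(s)$. So it has the form $\widehat s=g(R(s))$ for some map $g:R(S)\to S$, or $g:R(S)\to\mathcal Z$ if we only ask for a latent class. Both claims will follow from the fact that every such output factors through $R$.

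\emph{Counting bound.} Take the class-valued target $\varphi:S\to\mathcal Z$ that the procedure is meant to recover. Suppose the procedure distinguishes the classes, meaning $g\circ R=\varphi$ on $S$. Then $\varphi(S)=g(R(S))$, and since $g$ is a function on the finite set $R(S)$, we get $\abs{\varphi(S)}\le\abs{R(S)}$. So at most $\abs{R(S)}$ latent classes can be separated. Equivalently, the decoder's image has cardinality at most that of its domain $R(S)$, whatever decoder is chosen. If the procedure is randomized, I will condition on the internal randomness: each realization is a deterministic map of $R(s)$, so the same bound holds for every realization and hence for any zero-error guarantee.

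\emph{Non-identifiability.} Suppose $s\ne s'$ with $R(s)=R(s')$. Any readout-only procedure outputs $g(R(s))=g(R(s'))$, so it returns the same answer on both candidates and cannot equal $s$ in one case and $s'$ in the other. This is exactly the failure of the fiber-constancy condition of Proposition~\ref{prop:fiber_criterion} for the identity target $\varphi=\mathrm{id}_S$: the fiber $\mathcal F_R(R(s))\cap S$ contains two points with distinct targets, so no factorization $\mathrm{id}_S=\widetilde\varphi\circ R|_S$ exists. For randomized procedures, the output distribution is identical under $s$ and $s'$. Hence the error probabilities on the two candidates sum to at least one, and zero-error identification is impossible.

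I expect no step to be a real obstacle. The only point needing care is making precise what ``uses only $R(s)$'' means, namely measurability with respect to $\sigma(R)$ together with independent internal randomness. I would state this convention explicitly at the start so that the deterministic factorization argument covers the randomized case too.
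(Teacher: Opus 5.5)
Your proposal is correct and matches the paper's proof. Every readout-only procedure factors through the finite image $R(S)$, which gives the cardinality bound. The collision case is Proposition~\ref{prop:fiber_criterion} with $D=S$ and the identity target. Your treatment of randomized decoders, by conditioning on internal randomness and using the error-sum argument, is a harmless strengthening that the paper does not spell out.
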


A model can add a conditional section, prior, auxiliary channel, or calibration law. It can thereby produce useful and testable latent pullbacks. It does not change the fact that the distinctions absent from $R(S)$ enter through the additional structure rather than through the aggregate readout itself.

\section{Conclusion}\label{sec:conclusion}

The article develops one reusable dependency chain:
\begin{equation}\label{eq:conclusion_chain}
\begin{aligned}
&\text{target and access}
\longrightarrow
\text{readout fiber}
\longrightarrow
\text{invariant query sector}
\longrightarrow
\text{spectral outcome law}\\
&\qquad\longrightarrow
\text{Rational Entropy zero set}
\longrightarrow
\text{utility selection}
\longrightarrow
\text{finite readout certificate}.
\end{aligned}
\end{equation}
Herein, each arrow has a separate proof obligation. The readout fiber determines which target distinctions are supplied by observation and which enter through restrictions, decoders, auxiliary channels, or calibration. The learning representation determines which structure is stable under the declared learning symmetry. $L$-covariance makes the query restriction measurement-consistent. The spectral theorem supplies the outcome law. Rational Entropy measures residual uncertainty separately under knowledge, utility, and medium. The finite collapse theorem classifies the zero set and separates uniform atomicity from selected atomicity. The coding layer then counts finite label systems that preserve both collapse and readout.

This chain is likewise MTI's method-design contribution. An applied practitioner begins by declaring the target and readout rather than by choosing an estimator, characterizes what the readout leaves unresolved, records which additional information narrows the fiber, validates that information on the object it contributes, and returns the strongest licensed result. The fiber criterion supplies the identification guarantee. The static observation law supplies the query-conditioned support guarantee. Conditions (C1)--(C4) supply the finite zero-error readout guarantee. When one of those layers fails, the method returns the corresponding set, conditional object, sensitivity statement, or typed refusal instead of silently borrowing a stronger claim.

Herein, the main finite inversion is exact. In zero-error communication, confusability is an obstruction. In finite interpretation, pairwise agreement in at least one observer direction excludes multi-atom zero-cost supports, while joint injectivity preserves the selected atom. The corresponding free-design capacity with $d$ directions is the product of all but the smallest direction budget. The combinatorial theorem is classical in its graph-theoretic form, while in this theory, it quantifies the observer-label architecture required by the Rational Entropy zero criterion.

Additionally, we derive a projection-valued static observation law. Here, a finite atom theorem derives the selected support from the zero set and utility. Then, a general self-adjoint theorem transfers a supplied full-mass Borel component into Hilbert-space support. Accordingly, eigenvector language is valid only for a selected pure-point atom, and unique-ray language requires simplicity or a stated gauge convention. The zero-error certificate then adds the readout boundary: an internally sharp interpretation is not yet a decodable, order-independent result. The zero-error certificate's four obstruction classes identify whether the failure lies in commutation, spectral type, selection, or medium faithfulness, while the perturbation result states the margins required for numerical certification.

The companion program develops the two assembled ends of this access spectrum without changing the present theorem chain. The forthcoming full-access GEB-Lite and GEB-Workshop work will instantiate the trace, query, Rational Entropy selection, certificate, and refusal interface for learned systems, with interactive materials supplying executable demonstrations. The GAS--GNN and ECB work will extend the dissertation's results to further instantiate the aggregation-mediated branch through validated aggregate readouts, model-conditional pullbacks, and explicit uncertainty accounting. Those papers and open-source packages can cite this article for the event-level law and method-design protocol, while the published dissertation remains the archival source for the extended dynamic, multi-observer, and application development.

This abridgment of MTI's core theory proves the access-structured static law and finite coding/readout theory carried by the displayed chain. We note that continuous-spectrum collapse requires a theorem that produces selected support; genuinely noncommuting protocols require event- and order-dependent readout mathematics; and aggregate latent inference requires restrictions or auxiliary channels that remain visible in the returned claim, each of which is a result outside the scope of this abridgment. However, within the finite compatible window, the method shown herein is complete enough to use the mental model introduced through \emph{The Meeting} to understand how the idea of interpretation is shifted from a silent source artifact to a construction where we get either one rule-faithful action or an explicit reason that the action has not yet been earned.

\paragraph*{Acknowledgments and disclosures.}
The underlying dissertation and related research program were supported by the author, Conjecture Labs, and Conjecture Labs investors. Some algorithmic programs referenced in this article are derived from patented or proprietary intellectual property of the author and Conjecture Labs; those references point to companion implementations and do not provide evidence for the mathematical results proved here. Frontier artificial-intelligence systems were used, under the author's supervision and direction, to assist in preparing and editing this abridgment. The author is solely responsible for the manuscript, its mathematical claims, and any remaining errors.

\newpage
\appendix

\section{Assumption and claim ledgers}\label{app:ledgers}

The tables below collect the recurring hypotheses of the abridged article. They do not replace theorem-local assumptions; they identify the dependency groups that recur across results.

\subsection{Assumption ledger}\label{app:assumption_ledger}
\vspace{-1em}

{\normalsize
\begin{longtable}{@{}L{0.06\textwidth}L{0.25\textwidth}L{0.62\textwidth}@{}}
\caption{Assumption ledger for the abridged theory.}\label{tab:assumption_ledger}\\
\toprule
\textbf{ID} & \textbf{Group} & \textbf{Content controlled} \\
\midrule
\endfirsthead
\toprule
\textbf{ID} & \textbf{Group} & \textbf{Content controlled} \\
\midrule
\endhead
A1 & Statistical geometry & $\lM$ is a regular statistical manifold on the admissible region; the Fisher--Rao metric exists and is positive on the relevant tangent bundle; Hessian-potential notation is chart-local. \\
A2 & Hilbert realization & $d\mu_g$ is the measure used for $L^2(\lM,d\mu_g)$; Riesz, compactness, and density-operator language are invoked only under their stated hypotheses. \\
A3 & Access structure & $\PO$ is an orthogonal hard-support projector; $\WO$ is a positive multiplication weight on the hard domain; soft weighting cannot enlarge hard support; aggregation-mediated observation factors through its readout map. \\
A4 & Learning representation & The declared learning flow preserves the observer domain and induces a strongly continuous unitary representation when the mean-ergodic theorem is used. Semigroup or non-preserving regimes require separate hypotheses. \\
A5 & Invariant projection & $\PiLH$ exists as the mean-ergodic projector onto $\HV=\Fix(U)$. Stable observation is posed on $\HV$ by Axiom~\ref{ax:learning}; this is a domain restriction, not an entropy consequence. \\
A6 & Self-adjoint query & Every spectral statement uses a specified self-adjoint realization. Unless stated otherwise, the query is $L$-covariant, so $\PiLH$ reduces it and the invariant-sector spectral measure is the restriction of the ambient one. \\
A7 & Spectral law and observer maps & $\mu_I(B)=\ip{I}{E_Q(B)I}$ is well defined; the maps $\pi_K,\pi_U,\pi_M$ are measurable; conditional entropies are defined; finite zero criteria use finite ranges and atom separation where required. \\
A8 & Variational problem & The admissible set, topology, lower semicontinuity, compactness/coercivity, and differentiability are stated locally. The unit-sphere multiplier is $\eta_O^{\mathrm{norm}}$ and is not a query spectral value. \\
A9 & Collapse route & In the finite regime, the zero set and selected support are derived from finite conditional entropy and utility. In the general regime, a full-mass Borel component is a theorem-local bridge before support transfer. Eigenvector shorthand requires a selected pure-point atom. \\
A10 & Spectral stability & Perturbation results require a common-domain bounded self-adjoint difference, a positive selected gap, and the no-crossing condition $\varepsilon<\delta/4$. Realized-code stability additionally requires label margins. \\
A11 & Compatibility & Multi-query order-independence is asserted only for selected projectors admitting a common joint measure on the code sector. Commutator bounds require isolated spectral windows and controlled resolvents. \\
A12 & Finite coding & Capacity theorems use a finite atom-separated model and finite observer-label ranges. Free-design capacity is an abstract envelope; realized capacity may be smaller because of geometry, spectrum, utility, or medium constraints. \\
A13 & Zero-error readout & The code-level certificate requires commuting selected projectors, an isolated finite pure-point code, one utility-maximizing atom with a one-dimensional joint eigenspace, and a medium injective on the code. Hereditary order independence is required for every subprotocol. \\
\bottomrule
\end{longtable}
}
\vspace{-1em}
\vfill

\subsection{Theorem dependency map}\label{app:dependency_map}
\vspace{-1em}
\begin{table}[H]
\centering
\small
\caption{Principal theorem dependencies.}
\label{tab:dependency_map}
\begin{tabularx}{\textwidth}{@{}L{0.25\textwidth}L{0.18\textwidth}Y@{}}
\toprule
\textbf{Result} & \textbf{Assumptions} & \textbf{Dependency note} \\
\midrule
Fiber criterion & Set-theoretic map data & No probabilistic or spectral hypothesis. \\
Mean-ergodic sector & A2, A4--A5 & Provides the stable observation domain. \\
Safe query restriction & A5--A6 & Reduction preserves the spectral outcome law. \\
Rational Entropy zero set & A7--A8 & Finite conditional entropy yields exact injectivity criterion. \\
Finite collapse classification & A5--A9 & Uses full finite spectral-law realization, zero criterion, and utility selection. \\
Static observation law & A5--A9 & Finite derived route or conditional general support route. \\
Query/order stability & A10--A11 & Requires positive gaps and no spectral crossing. \\
Cylinder capacity & A12 & Exact abstract label envelope; realized converse follows by relabeling. \\
Zero-error certificate & A7--A13 & Adds protocol commutation, unique selection, and medium faithfulness. \\
Empirical code stability & A10, A12--A13 & Exact code persists under operator and label margins. \\
\bottomrule
\end{tabularx}
\end{table}
\vspace{-1em}

\subsection{Claim ledger}\label{app:claim_ledger}
\vspace{-1em}
\begin{table}[H]
\centering
\small
\caption{Claim classes used in the abridged article.}
\label{tab:claim_classes}
\begin{tabularx}{\textwidth}{@{}L{0.15\textwidth}Y L{0.4\textwidth}@{}}
\toprule
\textbf{Class} & \textbf{Use} & \textbf{Boundary} \\
\midrule
Definitions, axioms, and assumptions & Constructs the object, primitive rule, or local hypothesis. & Does not prove a downstream consequence without a stated result. \\
Formal results & Formal statement with explicit regime and proof route. & Carries only the hypotheses written locally or imported through a cited assumption. \\
Finite realization & Specializes the projection-valued law to atoms and finite labels. & Does not imply a global finite spectrum or a universal discretization. \\
Dictionary/diagram & Translates the construction into communication, measurement, or meeting language. & Cannot carry proof burden. \\
Free-design capacity & Optimizes over abstract finite labels with fixed budgets. & Is an outer envelope for realized observers, not an automatic achievability theorem. \\
Numerical certificate & Transfers exact finite structure through operator and label margins. & Conditional on the estimator, gap, label stability, and concentration assumptions. \\
Open boundary & Continuous-spectrum selection, noncommuting protocol capacity, and dynamic return. & Not inferred from the static support-transfer identity or finite agreement graph. \\
\bottomrule
\end{tabularx}
\end{table}
\vspace{-1em}

\begin{table}[H]
\centering
\small
\caption{Major claim ledger.}
\label{tab:major_claims}
\begin{tabularx}{\textwidth}{@{}L{0.25\textwidth}L{0.18\textwidth}Y@{}}
\toprule
\textbf{Claim} & \textbf{Class} & \textbf{Boundary} \\
\midrule
Static observation law & Theorem family & Finite support is derived; general Borel support is conditional on a full-mass bridge. \\
Finite collapse classification & Theorem & Pairwise confusability characterizes uniform atomicity; unique utility maximization characterizes selected atomicity. \\
Finite coding capacity & Theorem family & Exact on abstract finite label products; realized codes obey the converse but need not attain it. \\
Zero-error interpretive readout & Finite iff certificate & Complete only on the declared attained finite atom-separated code window. \\
Compatibility boundary & Proposition family & Commuting selected projectors remove order effects; noncommuting protocols need additional dynamic structure. \\
Access-structured method design & Proposition/protocol & Separates readout information from assumptions and decoder-supplied resolution; it is not a universal estimator. \\
\bottomrule
\end{tabularx}
\end{table}
\vspace{-1em}
\paperdivider

\section{Expanded proofs}\label{app:proofs}

This appendix collects the proof details removed from the main text. Standard functional-analytic and spectral perturbation theorems are invoked with their native hypotheses rather than reproved from first principles.

\subsection{Access and operator setup}\label{app:proof_access_operator}

\subsubsection{Proof of the fiber criterion}\label{app:proof_fiber}

\begin{proof}[Proof of Proposition~\ref{prop:fiber_criterion}]
Suppose $\varphi|_D=\widetilde\varphi\circ R|_D$. If $x,x'\in D$ and $R(x)=R(x')$, then
\[
\varphi(x)=\widetilde\varphi(R(x))
=\widetilde\varphi(R(x'))
=\varphi(x'),
\]
so $\varphi$ is constant on each nonempty fiber intersected with $D$.

Conversely, assume fiber constancy. For $y\in R(D)$, choose any $x_y\in D$ with $R(x_y)=y$ and define
\[
\widetilde\varphi(y)=\varphi(x_y).
\]
If $x'_y$ is another representative, then $R(x_y)=R(x'_y)$ and fiber constancy gives $\varphi(x_y)=\varphi(x'_y)$, so the definition is independent of the choice. For every $x\in D$, choosing $y=R(x)$ gives $\widetilde\varphi(R(x))=\varphi(x)$. Uniqueness follows because any factor map must take $y$ to the common value of $\varphi$ on $\mathcal F_R(y)\cap D$.
\end{proof}

\subsubsection{Proof of access specialization}\label{app:proof_access}

\begin{proof}[Proof of Proposition~\ref{prop:access_specialization}]
Since $\PO$ is multiplication by $\ind{\CO}$ on $L^2(\lM,d\mu_g)$, it is self-adjoint and idempotent. Its range consists exactly of equivalence classes supported on $\CO$, and that range is closed. Thus the hard-admissible observer space is $\PO\HH$.

If $\PO=0$ on a target region, every vector supported there is mapped to zero and no unit vector can lie in the corresponding range. The soft operator $\WO$ is multiplication by $\kappa_O$ on the same ambient space. Wherever $\PO=0$, $\kappa_O=0$ by definition of $\CO$, so $\WO$ cannot create nonzero support outside $\Ran\PO$.

If the observed object is $y=R(x)$ or an observed law is $R_\#\mu$, then its measurable structure is defined on the codomain of $R$. A query on the latent domain requires an additional map or model that pulls the observed object back to $\mathcal X$. This is a type requirement and does not follow from the existence of $R$ alone.
\end{proof}

\subsubsection{Mean-ergodic projection and safe compression}\label{app:proof_compression_covariance}

\begin{proof}[Proof of Proposition~\ref{prop:mean_ergodic}]
For a strongly continuous unitary representation $U_t$ on a Hilbert space, the continuous-time mean ergodic theorem states that the Ces\`aro averages
\[
A_Tf=\frac1T\int_0^T U_tf\,dt
\]
converge strongly to the orthogonal projection onto the fixed-point subspace. The fixed-point subspace is closed because it is the intersection of the kernels of the bounded operators $U_t-I$. Hence the limit is an orthogonal projector and its range is $\Fix(U)$ \cite{conway1990,reed_simon1980}.
\end{proof}

\begin{proof}[Proof of Lemma~\ref{lem:safe_compression}]
For (i), if $T$ is bounded and self-adjoint, then $PTP$ is bounded and
\[
(PTP)^*=PT^*P=PTP.
\]
It maps $P\HH$ into itself, so its restriction is bounded self-adjoint there.

For (ii), if $P$ reduces $T$, both $P\HH$ and $(I-P)\HH$ are reducing subspaces and
\[
T=T_1\oplus T_2
\]
with $T_1=T|_{P\HH\cap\Dom(T)}$ and $T_2=T|_{(I-P)\HH\cap\Dom(T)}$. A direct sum of closed symmetric restrictions is self-adjoint precisely when each summand is self-adjoint; reduction of a self-adjoint operator supplies that property. Equivalently, the functional calculus commutes with $P$, and for every Borel set $B$,
\[
E_{T_1}(B)=E_T(B)|_{P\HH}.
\]

Part (iii) records the absence of an automatic theorem. In the unbounded nonreducing case, $PTP$ may fail to be densely defined or self-adjoint on $P\HH$. A closed semibounded form can define a self-adjoint operator through the representation theorem, but that realization is additional data.
\end{proof}

\begin{proof}[Covariance implies reduction]
Let $B$ be a bounded operator commuting with every $U_t$. Boundedness allows $B$ to pass through the Bochner integral and the strong limit:
\[
B\PiLH f
=
\lim_{T\to\infty}\frac1T\int_0^T BU_tf\,dt
=
\lim_{T\to\infty}\frac1T\int_0^T U_tBf\,dt
=
\PiLH Bf.
\]
For an $L$-covariant query, every spectral projection $E_Q(C)$ is bounded and commutes with every $U_t$, hence with $\PiLH$. Thus $\PiLH$ reduces $Q$, and Lemma~\ref{lem:safe_compression}(ii) gives the self-adjoint restriction and restricted spectral measure.
\end{proof}

\subsubsection{Spectral support transfer}\label{app:proof_support_projection}

\begin{proof}[Proof of Lemma~\ref{lem:support_projection}]
Let $P=E_{Q_L}(A)$. Since $P$ is an orthogonal projector and $\norm I=1$,
\[
\norm{(I-P)I}^2
=
\ip{I}{(I-P)I}
=
1-\ip{I}{PI}
=
1-\mu_I(A).
\]
Therefore $\mu_I(A)=1$ if and only if $(I-P)I=0$, equivalently $PI=I$. If $A=\{\lambda\}$ is a pure-point atom, $\Ran E_{Q_L}(\{\lambda\})=\ker(Q_L-\lambda I)$, so the projector identity is equivalent to $Q_LI=\lambda I$.
\end{proof}

\subsection{Rational Entropy and the static theorem chain}\label{app:proof_static_chain}

\subsubsection{Zero criterion, existence, and stationarity}\label{app:proof_zero_criterion}

\begin{proof}[Proof of Proposition~\ref{prop:zero_criterion}]
Fix one finite coarse-graining $\pi$ with cells $\{C\}$. The conditional entropy decomposes as
\[
H(\Lambda\mid\sigma(\pi\circ\Lambda))
=
\sum_C\mu_I(C)H(\mu_{I,C}),
\]
where $\mu_{I,C}$ is the normalized within-cell conditional law when $\mu_I(C)>0$. Every term is nonnegative. The sum is zero if and only if every positive-mass conditional law is a point mass, which is equivalent to each cell containing at most one supported outcome. This is exactly injectivity of $\pi$ on $\supp\mu_I$. Applying the argument to $\pi_K,\pi_U,\pi_M$ and summing the three nonnegative terms proves the result.
\end{proof}

\begin{proof}[Existence of minimizers]
Let $(I_n)$ be a minimizing sequence in a compact admissible set. A subsequence converges to some admissible $I_*$. Lower semicontinuity gives
\[
\HR(I_*\mid O,Q)
\le
\liminf_n\HR(I_n\mid O,Q)
=
\inf\HR,
\]
so $I_*$ is a minimizer.
\end{proof}

\begin{proof}[Proof of Theorem~\ref{thm:stationarity}]
Treat the complex Hilbert space as a real Hilbert space with inner product $\operatorname{Re}\ip{\cdot}{\cdot}$. The tangent space of the unit sphere at $I$ is
\[
T_I\SV=\{h:\operatorname{Re}\ip{I}{h}=0\}.
\]
For every $C^1$ curve $I(s)$ in $\SV$ with $I(0)=\Iobs$ and $\dot I(0)=h\in T_{\Iobs}\SV$, minimality implies
\[
0
=
\left.\frac{d}{ds}\right|_{s=0}\HR(I(s)\mid O,Q)
=
\operatorname{Re}\ip{\nabla_I\HR(\Iobs\mid O,Q)}{h}.
\]
Hence the gradient lies in the real orthogonal complement of $T_{\Iobs}\SV$, which is $\operatorname{span}_{\RR}\{\Iobs\}$. Therefore $\nabla_I\HR=\eta_O^{\mathrm{norm}}\Iobs$ for a real scalar $\eta_O^{\mathrm{norm}}$.
\end{proof}

\subsubsection{Invariant-sector losslessness}\label{app:proof_invariant_lossless}

\begin{proof}[Proof of Proposition~\ref{prop:lossless_invariant}]
Under $L$-covariance, $\PiLH$ reduces the query. Let $\lambda$ be an admissible eigenvalue of $Q_L$ and choose a unit vector $v\in\Ran E_{Q_L}(\{\lambda\})\subseteq\HV$. The spectral law of $v$ is $\delta_\lambda$. Each observer coarse-graining of a point mass is deterministic, so all three conditional entropies vanish and $\HR(v)=0$. Since $\HR\ge0$, the minima over both $\SV$ and $\SO$ are zero, and the invariant-sector minimum is attained by $v$.
\end{proof}

The law invariance in \eqref{eq:law_invariance} follows directly from covariance:
\[
\mu_{U_tI}(B)
=
\ip{U_tI}{E_Q(B)U_tI}
=
\ip{I}{U_t^*E_Q(B)U_tI}
=
\ip{I}{E_Q(B)I}.
\]
Every Rational Entropy term is a functional of this law through fixed observer maps, hence is invariant.

\subsubsection{Finite collapse classification}\label{app:proof_finite_collapse}

\begin{proof}[Proof of Theorem~\ref{thm:finite_collapse}]
For every atom $\lambda_j\in S$, choose a unit eigenvector $v_j\in\Ran E_{Q_L}(\{\lambda_j\})$. More generally, for every probability vector $p=(p_1,\ldots,p_N)$, the state
\[
I_p=\sum_{j=1}^N\sqrt{p_j}\,v_j
\]
is normalized and has spectral law $p$, because the eigenspaces of distinct eigenvalues are orthogonal. In particular, every point mass is realized and has Rational Entropy zero. Proposition~\ref{prop:zero_criterion} therefore implies that the minimum is zero and gives the exact zero-set description in part (i).

Assume pairwise confusability. If $I\in\mathcal M_0$ had two distinct atoms $\lambda_a,\lambda_b$ in its support, at least one observer map would assign them the same label. That map would fail to be injective on the support, contradicting Proposition~\ref{prop:zero_criterion}. Thus every zero minimizer has singleton support, and it lies on one of the eigenspheres in \eqref{eq:uniform_atomicity}.

Conversely, suppose pairwise confusability fails for distinct atoms $\lambda_a,\lambda_b$. Then all three maps separate that pair. Choose orthogonal unit eigenvectors $v_a,v_b$ and set
\[
I_{ab}=\frac{v_a+v_b}{\sqrt2}.
\]
Its spectral support is $\{\lambda_a,\lambda_b\}$, and each observer map is injective on that support. Proposition~\ref{prop:zero_criterion} gives $\HR(I_{ab})=0$. Hence not every minimizer is atom-supported. This proves the equivalences in part (ii).

For every $I\in\mathcal M_0$, expected utility is
\[
\mathbb E_{\mu_I}[u_Q\circ\Lambda]
=
\sum_{\lambda\in S}u_Q(\lambda)\mu_I(\{\lambda\})
\le
\max_{\lambda\in S}u_Q(\lambda).
\]
Equality holds exactly when all mass is supported on $\Lambda_{\max}$. Since point masses on maximizing atoms belong to $\mathcal M_0$, the bound is attained and part (iii) follows.

If $\Lambda_{\max}=\{\lambda^*\}$, every selected law is $\delta_{\lambda^*}$. Lemma~\ref{lem:support_projection} gives
\[
E_{Q_L}(\{\lambda^*\})\Iobs=\Iobs,
\]
and the pure-point spectral theorem gives $Q_L\Iobs=\lambda^*\Iobs$. If the eigenspace is one-dimensional, all normalized vectors in it differ only by phase. Since selection is performed on $\SV\subseteq\HV=\Fix(U)$, the selected state is learning-invariant.
\end{proof}

\subsubsection{Static observation law and consequences}\label{app:proof_static_law}

\begin{proof}[Proof of Proposition~\ref{prop:selected_support_transfer}]
Apply Lemma~\ref{lem:support_projection} to the supplied set $A_*$. The singleton conclusion is its pure-point specialization.
\end{proof}

\begin{proof}[Proof of Theorem~\ref{thm:static_observation_law}]
In the finite route, Theorem~\ref{thm:finite_collapse} gives $\supp\mu_{\Iobs}\subseteq\Lambda_{\max}$, so $\mu_{\Iobs}(A_*)=1$ for $A_*=\Lambda_{\max}$. In the general route, the same full-mass statement is assumed theorem-locally. Proposition~\ref{prop:selected_support_transfer} gives $E_{Q_L}(A_*)\Iobs=\Iobs$ in either case. If $A_*$ is a singleton pure-point atom, the range is the corresponding eigenspace and \eqref{eq:eigen_shorthand} follows.
\end{proof}

\begin{proof}[Proof of Corollary~\ref{cor:discrete_support}]
In finite dimension, every projector has finite-dimensional range. If $Q_L$ has compact resolvent, the self-adjoint spectral theorem gives isolated real eigenvalues of finite multiplicity and no finite accumulation point. A bounded isolated component contains finitely many eigenvalues, so its spectral projector is a finite orthogonal sum of finite-dimensional eigenspaces \cite{reed_simon1980,kato1995}.
\end{proof}

\begin{proof}[Proof of Proposition~\ref{prop:query_perturbation}]
A bounded self-adjoint perturbation of norm $\varepsilon$ moves the spectrum by at most $\varepsilon$ in Hausdorff distance. Because $\varepsilon<\delta/4$, the spectral portion arising from $\Sigma_1$ remains inside the open $\delta/2$ neighborhood of $\Sigma_1$, while the complementary spectral portion remains outside. Thus the corresponding component $\Sigma_2$ persists without crossing. The Davis--Kahan sin-$\Theta$ theorem for separated self-adjoint spectral sets gives
\[
\norm{E_1-E_2}
\le C\frac{\norm{Q_1-Q_2}}{\delta}.
\]
Under the stated no-crossing normalization one may take $C=4$ \cite{davis1970,kato1995,stewart1990}.
\end{proof}

\subsection{Compatibility and order}\label{app:proof_order}

\begin{proof}[Proof of Proposition~\ref{prop:no_order_effects}]
Both sequential procedures terminate in the same one-dimensional joint spectral range. Any two normalized nonzero vectors in that range differ by a scalar of modulus one, so the observed rays agree.
\end{proof}

\begin{proof}[Proof of Proposition~\ref{prop:commutator_order}]
Choose positively oriented Riesz contours $\Gamma_A$ and $\Gamma_B$ around the selected spectral windows. Then
\[
P_A^*=\frac{1}{2\pi i}\int_{\Gamma_A}(zI-A)^{-1}\,dz,
\qquad
P_B^*=\frac{1}{2\pi i}\int_{\Gamma_B}(wI-B)^{-1}\,dw.
\]
Let $R_A(z)=(zI-A)^{-1}$ and $R_B(w)=(wI-B)^{-1}$. The inverse-commutator identity gives
\[
[R_A(z),R_B(w)]
=
R_A(z)R_B(w)[A,B]R_B(w)R_A(z).
\]
Therefore
\[
[P_A^*,P_B^*]
=
\frac{1}{(2\pi i)^2}
\int_{\Gamma_A}\int_{\Gamma_B}
R_A(z)R_B(w)[A,B]R_B(w)R_A(z)
\,dw\,dz.
\]
Taking norms yields
\[
\norm{[P_A^*,P_B^*]}
\le
\frac{\operatorname{len}(\Gamma_A)\operatorname{len}(\Gamma_B)}{(2\pi)^2}
\left(\sup_{z\in\Gamma_A}\norm{R_A(z)}^2\right)
\left(\sup_{w\in\Gamma_B}\norm{R_B(w)}^2\right)
\norm{[A,B]}.
\]
The prefactor is the finite constant $C$. Applying the operator to a normalized state gives \eqref{eq:commutator_state}.
\end{proof}

\subsection{Finite coding proofs}\label{app:proof_coding}

\subsubsection{Graph dictionary and elementary regimes}\label{app:proof_graph_elementary}

\begin{proof}[Proof of Lemma~\ref{lem:graph_dictionary}]
A pair is adjacent in $G_\cup$ exactly when it agrees in at least one observer direction. Thus every pair in $S$ is adjacent in $G_\cup$ exactly when $S$ is pairwise confusable. A pair is adjacent in $G_\cap$ exactly when all three labels agree, which is exactly a collision of the joint map. Hence joint injectivity is equivalent to independence in $G_\cap$. If $\pi_U$ is injective, no distinct pair has a utility-agreement edge.
\end{proof}

\begin{proof}[Proof of Proposition~\ref{prop:elementary_regimes}]
With one direction, two distinct codewords must agree to be pairwise confusable, but then they are not jointly identifiable. Thus $N^*=1$.

With two directions, let $F\subseteq Y_1\times Y_2$ be pairwise agreeing and injective. If two words agree in coordinate $1$ and differ in coordinate $2$, every third word must have the same coordinate-$1$ value; otherwise it would have to equal both distinct coordinate-$2$ values to agree with both words. Hence $F$ lies in a coordinate-$1$ cylinder and has size at most $y_2$. The symmetric case gives at most $y_1$. A full constant-coordinate cylinder attains $\max(y_1,y_2)$.

In the selective three-direction model, injectivity of the utility coordinate gives $\abs F\le y_U$. Conversely, hold the knowledge coordinate constant and choose $y_U$ distinct utility labels. The family is pairwise confusable through knowledge and identifiable through utility, attaining $y_U$.
\end{proof}

\subsubsection{Fano inequality}\label{app:proof_fano}

\begin{proof}[Proof of Proposition~\ref{prop:interpretive_fano}]
Because $\widehat\Lambda$ is a function of the joint label,
\[
I(\Lambda;\widehat\Lambda)
\le
I(\Lambda;\mathsf s(\Lambda))
\le
H(\mathsf s(\Lambda))
\le
\sum_{i\in\{K,U,M\}}\log_2\abs{\pi_i(S)}.
\]
Since $\Lambda$ is uniform on $N$ atoms,
\[
H(\Lambda\mid\widehat\Lambda)
=
\log_2N-I(\Lambda;\widehat\Lambda)
\ge
\log_2N-
\sum_i\operatorname{cap}_i(S).
\]
Fano's inequality gives
\[
H(\Lambda\mid\widehat\Lambda)
\le
h_2(P_e)+P_e\log_2(N-1),
\]
which proves \eqref{eq:fano_implicit}. Using $h_2(P_e)\le1$ and $\log_2(N-1)\le\log_2N$ yields \eqref{eq:fano_simple}.
\end{proof}

\subsubsection{Three-direction cylinder theorem}\label{app:proof_cylinder}

\begin{proof}[Proof of Theorem~\ref{thm:cylinder_optimality}]
For the lower bound, hold a smallest-budget coordinate constant and enumerate all pairs in the other two coordinates. Every pair agrees in the fixed coordinate, and the remaining pair identifies the word. This attains the largest pair product.

For the upper bound, relabel coordinates as $1,2,3$ and let $F\subseteq Y_1\times Y_2\times Y_3$ be injective and pairwise agreeing. If any two-coordinate projection is injective, $\abs F$ is at most the corresponding budget product.

Otherwise, after permuting coordinates, the $(2,3)$ projection has a collision:
\[
w=(a,b,c),
\qquad
w'=(a',b,c),
\qquad
a\ne a'.
\]
No $v\in F$ can satisfy both $v_2\ne b$ and $v_3\ne c$, because to agree with both $w$ and $w'$ it would then need $v_1=a=a'$. Hence
\[
F=R\cup C,
\qquad
R=\{v:v_2=b\},
\qquad
C=\{v:v_3=c\}.
\]
If $F=R$ or $F=C$, it is a cylinder and obeys a pair-product bound. Otherwise choose $r\in R\setminus C$ and $s\in C\setminus R$. They differ in coordinates $2$ and $3$, so they agree in coordinate $1$. Comparing every element across the two sides shows that all elements of $R\setminus C$ and $C\setminus R$ share one first-coordinate value. Therefore
\[
\abs{R\cap C}\le y_1,
\qquad
\abs{R\setminus C}\le y_3-1,
\qquad
\abs{C\setminus R}\le y_2-1,
\]
and
\[
\abs F\le y_1+y_2+y_3-2.
\]
Sort the budgets as $m\ge s\ge n$. Then
\[
ms-(m+s+n-2)
=(m-1)(s-1)-(n-1)\ge0,
\]
because $m\ge2$ and $s\ge n$ unless all budgets equal one, a trivial case. Thus $\abs F\le ms$, the largest pair product.
\end{proof}

\subsubsection{General cylinder theorem}\label{app:proof_general_cylinder}

\begin{proof}[Proof of Theorem~\ref{thm:general_cylinder}]
Fixing coordinate $i$ and enumerating all remaining coordinates gives a feasible family of size $\prod_{j\ne i}y_j$, so
\[
N_d^*(y)\ge\max_i\prod_{j\ne i}y_j.
\]
If some $y_j=1$, every pair automatically agrees in coordinate $j$, so the full product is feasible and the formula follows.

Assume every $y_j\ge2$. Let $G$ be the avoidance graph on $Y_1\times\cdots\times Y_d$, with two words adjacent when they differ in every coordinate. A feasible interpretive family is exactly an independent set of $G$. The graph is the tensor product
\[
G=K_{y_1}\otimes\cdots\otimes K_{y_d}.
\]
It has
\[
n=\prod_{j=1}^dy_j,
\qquad
r=\prod_{j=1}^d(y_j-1)
\]
vertices and degree. The adjacency eigenvalues are products of choices from $\{y_j-1,-1\}$. Let $j_0$ be a smallest-budget coordinate. The least eigenvalue is
\[
\lambda_{\min}
=-\prod_{j\ne j_0}(y_j-1).
\]
Indeed, a negative product uses an odd number of $-1$ choices; its magnitude is maximized by omitting only one positive factor, and omitting a smallest factor $y_{j_0}-1$ gives the largest magnitude.

The Hoffman ratio bound for an independent set $I$ in an $r$-regular graph gives
\[
\abs I
\le
\frac{n(-\lambda_{\min})}{r-\lambda_{\min}}.
\]
Now
\[
r-\lambda_{\min}
=
\prod_{j\ne j_0}(y_j-1)\bigl((y_{j_0}-1)+1\bigr)
=
y_{j_0}\prod_{j\ne j_0}(y_j-1),
\]
so
\[
\abs I\le\frac{n}{y_{j_0}}
=
\prod_{j\ne j_0}y_j,
\]
which is the largest $(d-1)$-coordinate product. This matches the cylinder lower bound.
\end{proof}

\subsubsection{Sectional identity}\label{app:proof_sectional}

\begin{proof}[Proof of Theorem~\ref{thm:sectional_identity}]
For a feasible family $F$, decompose by first-coordinate slices:
\[
S_a=\{q\in Q':(a,q)\in F\},
\qquad a\in Y_1.
\]
Words in one slice already agree in coordinate $1$. Cross-slice pairs must therefore avoid adjacency in $G'$.

Let $M\subseteq Q'$ be the suffixes used in at least two slices. If two suffixes in $M$ were adjacent in $G'$, choose occurrences in different first-coordinate slices; the resulting words would differ in coordinate $1$ and in every suffix coordinate, contradicting feasibility. Hence $M$ is independent.

No neighbor of $M$ can occur in any slice. To see this, let $q\in M$ occur in two distinct first-coordinate slices and let $q'\in N_{G'}(q)$. Whatever first-coordinate label is assigned to $q'$, one of the two occurrences of $q$ has a different first coordinate; because $q$ and $q'$ differ in every suffix coordinate, those two words would disagree everywhere. Every suffix outside $M\cup N_{G'}(M)$ can occur in at most one slice by definition of $M$. Therefore
\[
\abs F
\le
y_1\abs M+
\abs{Q'}-
\abs M-
\abs{N_{G'}(M)}
=
\abs{Q'}+(y_1-1)\abs M-\abs{N_{G'}(M)}.
\]
Maximizing over independent $M$ gives the upper bound.

Conversely, fix any independent $I\subseteq Q'$. Place every suffix in $I$ in all $y_1$ slices. Place every suffix in $Q'\setminus(I\cup N_{G'}(I))$ in one chosen slice. No cross-slice all-coordinate disagreement is possible: reused suffixes form an independent set, and all other used suffixes avoid its neighborhood. The construction attains
\[
\abs{Q'}+(y_1-1)\abs I-\abs{N_{G'}(I)},
\]
proving equality.
\end{proof}

\subsection{Generative and zero-error readout proofs}\label{app:proof_zero_error}

\begin{proof}[Proof of Proposition~\ref{prop:generative_monotonicity}]
If $\gamma_0=r\circ\gamma_1$, every fiber of $\gamma_1$ is contained in a fiber of $\gamma_0$. Hence the partition induced by $\gamma_1$ refines that induced by $\gamma_0$ and has at least as many nonempty cells, proving \eqref{eq:generative_monotonicity}. Conditional entropy is monotone under refinement of the conditioning $\sigma$-algebra, proving \eqref{eq:entropy_refinement}.
\end{proof}

\begin{proof}[Proof of Theorem~\ref{thm:zero_error_certificate}: sufficiency]
Assume (C1)--(C4). Pairwise commuting orthogonal projectors have order-independent products for every finite subfamily, so (C1) gives hereditary order independence (Z3). By (C2), the declared sector is a finite atomic model. Every one-atom state has zero Rational Entropy. On the atom-supported zero set, expected utility is
\[
\sum_{\lambda\in S}
\norm{E(\{\lambda\})I}^2u_Q(\lambda),
\]
a convex combination of the atom utilities. By (C3), only $\lambda_*$ maximizes this quantity, and its one-dimensional joint eigenspace leaves one projective ray. Thus (Z1) holds and \eqref{eq:zero_error_output} follows. The entire construction lies in $\HV$, so $\Iobs\in\Fix(U)$. Condition (C4) is exactly decodability (Z2).
\end{proof}

\begin{proof}[Proof of Theorem~\ref{thm:zero_error_certificate}: necessity]
Assume (Z1)--(Z3). Apply (Z3) to every two-query subfamily $J=\{i,j\}$. Equality of the two orderings gives
\[
P_i^*P_j^*=P_j^*P_i^*
\quad\text{on }H_S,
\]
so (C1) holds. Condition (Z2) is precisely injectivity of the medium label on the code, giving (C4). Condition (Z1) states that the code window is finite, atom-separated, pure point, uniquely utility-selected, and one-dimensional at the selected atom; these are (C2) and (C3). Therefore (C1)--(C4) are necessary.
\end{proof}

\begin{proof}[Proof of Corollary~\ref{cor:obstructions}]
Theorem~\ref{thm:zero_error_certificate} gives an equivalence between zero-error readout and the conjunction of (C1)--(C4). Negating the conjunction gives the disjunction of the four failures. The descriptions (O1)--(O4) are exactly the negations of commutation, atomic separated type, unique one-ray selection, and medium injectivity. They may co-occur.
\end{proof}

\begin{proof}[Proof of Proposition~\ref{prop:code_perturbation}]
By Assumption~\ref{assump:stable_atoms}, each exact atom is isolated from every other atom and the complementary spectrum by at least $\delta$. The operator perturbation has norm less than $\delta/4$, so Proposition~\ref{prop:query_perturbation} supplies a unique corresponding empirical cluster and a close Riesz projector. The $2\varepsilon$ label-margin condition prevents any empirical cluster from crossing a knowledge, utility, or medium label boundary. Thus all pairwise agreement relations, joint collisions, and utility-label relations are unchanged. Every class-admissible or selective-admissible code remains so, with the same number of codewords.
\end{proof}

\begin{proof}[Proof of Proposition~\ref{prop:readout_only_converse}]
Every procedure using only the observed value factors through the finite image $R(S)$. Hence it can return at most one distinct deterministic label for each element of $R(S)$, so no more than $\abs{R(S)}$ latent classes can be distinguished. If $R(s)=R(s')$, every readout-only function has the same input for $s$ and $s'$ and therefore cannot identify which occurred. This is Proposition~\ref{prop:fiber_criterion} with $D=S$ and target equal to the latent identity.
\end{proof}

\paperdivider

\section{Notation glossary}\label{app:notation}
\vspace{-1em}
\small
\begin{longtable}{@{}p{0.21\textwidth}p{0.72\textwidth}@{}}
\caption{Principal notation.}\label{tab:notation}\\
\toprule
\textbf{Symbol} & \textbf{Meaning} \\
\midrule
\endfirsthead
\toprule
\textbf{Symbol} & \textbf{Meaning} \\
\midrule
\endhead
$M$ & Ambient space of informational configurations. \\
$L_t$, $\Phi_t$ & Learning mechanism and the induced flow on the learned realization. \\
$\lM$ & Learned information manifold fixed at the observation regime. \\
$g$, $d\mu_g$ & Fisher--Rao metric and its induced measure. \\
$\HH=L^2(\lM,d\mu_g)$ & Information Hilbert realization. \\
$O=(\lO,m_O,U_O,c_O)$ & Observer: learned state, medium, utility, and active context. \\
$\kappa_O$ & Measurable accessibility kernel. \\
$\CO$ & Hard admissible cone $\{x:\kappa_O(x)>0\}$. \\
$\PO$ & Orthogonal hard-support projector, multiplication by $\ind{\CO}$. \\
$\WO$ & Soft accessibility weight, multiplication by $\kappa_O$. \\
$\HO=\PO\HH$ & Observer-accessible Hilbert space. \\
$U_t$ & Koopman representation of the declared learning flow on observables. \\
$\HV=\Fix(U)$ & Learning-invariant sector. \\
$\PiLH$ & Mean-ergodic projector from $\HO$ onto $\HV$. \\
$\PiLhk$ & Orthogonal complement $I-\PiLH$; the housekeeping projector. \\
$q$, $Q$, $Q_L$ & Symbolic query, specified observer-space self-adjoint realization, and learning-invariant restriction or specified compression. \\
$E_Q$, $E_{Q_L}$ & Projection-valued spectral measures. \\
$\mu_I(B)=\ip{I}{E_{Q_L}(B)I}$ & Query-induced spectral outcome law for state $I$. \\
$\Lambda_{O,Q}$ & Canonical spectral outcome random variable. \\
$\pi_K,\pi_U,\pi_M$ & Observer coarse-grainings for knowledge, utility, and medium. \\
$\FK,\FU,\FM$ & Sub-$\sigma$-algebras generated by the three coarse-grainings. \\
$\mathsf s_{O,Q}$ & Joint semantic label $(\pi_K,\pi_U,\pi_M)$. \\
$\HR$ & Rational Entropy, the sum of the three conditional entropies. \\
$\SO$, $\SV$ & Unit spheres in $\HO$ and $\HV$. \\
$\eta_O^{\mathrm{norm}}$ & Normalization-constraint multiplier in the variational stationarity equation. It is not a query spectral value. \\
$\Iobs$ & State selected by Rational Entropy minimization and the declared selection rule. \\
$\lobs$ & Observed spectral readout, used only for a selected pure-point value of $Q_L$. \\
$A_*$ & Selected Borel spectral component in the projection-valued observation law. \\
$\Lambda_{\max}$ & Set of utility-maximizing atoms in a finite spectral window. \\
$R:\mathcal X\to\mathcal Y$ & Readout or aggregation map. \\
$\mathcal F_R(y)$ & Readout fiber $R^{-1}(y)$. \\
$\varphi$ & Target functional on the latent space. \\
$\mathcal I_\varphi(y;\mathcal A)$ & Assumption-indexed identified set on the admissible fiber. \\
$A$, $B_a$, $E_a$ & Finite atom index set, spectral cells, and their projectors. \\
$p_I(a)=\norm{E_aI}^2$ & Finite atom law. \\
$G_K,G_U,G_M$ & Agreement graphs for the three observer directions. \\
$G_\cup,G_\cap$ & Union and intersection agreement graphs. \\
$y_i(S)$ & Number of realized labels in observer direction $i$. \\
$N_{\mathrm{class}}$, $N_{\mathrm{sel}}$ & Realized class and selective capacities. \\
$N_d^*(y)$ & Free-design capacity for $d$ direction budgets. \\
$\gamma_\tau$, $G_\gamma$ & Generative coordinate and number of generated distinctions. \\
$R(O,\pi;\delta)$ & Realized readout count for a declared finite protocol structure. \\
$\Delta_\gamma$ & Compatible generative--readout deficit $G_\gamma-R$. \\
$P_i^*$ & Selected spectral projector for query $Q_i$ on the code sector. \\
(C1)--(C4) & Joint measure, atomic separated type, unique utility selection, and faithful medium. \\
(O1)--(O4) & Noncommutativity, spectral-type, selection, and medium-identification obstructions. \\
\bottomrule
\end{longtable}
\normalsize

\paperdivider

\newpage
\printbibliography

@article{shannon1948,
  author = {Shannon, Claude E.},
  title = {A Mathematical Theory of Communication},
  journal = {Bell System Technical Journal},
  volume = {27},
  number = {3--4},
  pages = {379--423, 623--656},
  year = {1948},
  note = {Published in two installments},
}

@book{coverthomas2006,
  author = {Cover, Thomas M. and Thomas, Joy A.},
  title = {Elements of Information Theory},
  edition = {2},
  publisher = {Wiley-Interscience},
  address = {Hoboken, NJ},
  year = {2006},
}

@article{shannon1956zeroerror,
  author = {Shannon, Claude E.},
  title = {The Zero Error Capacity of a Noisy Channel},
  journal = {IRE Transactions on Information Theory},
  volume = {2},
  number = {3},
  pages = {8--19},
  year = {1956},
  doi = {10.1109/TIT.1956.1056798}
}

@incollection{hoffman1970,
  author = {Hoffman, Alan J.},
  title = {On Eigenvalues and Colorings of Graphs},
  booktitle = {Graph Theory and Its Applications},
  editor = {Harris, Bernard},
  publisher = {Academic Press},
  location = {New York},
  pages = {79--91},
  year = {1970},
}

@techreport{delsarte1973,
  author      = {Delsarte, P.},
  title       = {An Algebraic Approach to the Association Schemes of Coding Theory},
  institution = {Philips Research Laboratories},
  type        = {Philips Research Reports Supplements},
  number      = {10},
  year        = {1973},
  pages       = {1--97}
}

@article{haemers2021hoffman,
  author  = {Haemers, Willem H.},
  title   = {Hoffman's Ratio Bound},
  journal = {Linear Algebra and its Applications},
  year    = {2021},
  volume  = {617},
  pages   = {215--219},
  doi     = {10.1016/j.laa.2021.02.010},
}

@article{rao1945,
  author = {Rao, C. Radhakrishna},
  title = {Information and Accuracy Attainable in the Estimation of Statistical Parameters},
  journal = {Bulletin of the Calcutta Mathematical Society},
  volume = {37},
  number = {3},
  pages = {81--91},
  year = {1945},
}

@book{cencov1982,
  author = {Cencov, N. N.},
  title = {Statistical Decision Rules and Optimal Inference},
  series = {Translations of Mathematical Monographs},
  volume = {53},
  publisher = {American Mathematical Society},
  address = {Providence, RI},
  year = {1982},
}

@book{amari2000,
  author = {Amari, Shun-ichi and Nagaoka, Hiroshi},
  title = {Methods of Information Geometry},
  publisher = {American Mathematical Society and Oxford University Press},
  address = {Providence, RI},
  year = {2000},
}

@book{amari2016,
  author = {Amari, Shun-ichi},
  title = {Information Geometry and Its Applications},
  publisher = {Springer},
  address = {Tokyo},
  year = {2016},
  doi = {10.1007/978-4-431-55978-8},
}

@book{conway1990,
  author = {Conway, John B.},
  title = {A Course in Functional Analysis},
  edition = {2},
  publisher = {Springer},
  address = {New York},
  year = {1990},
}

@book{reed_simon1980,
  author = {Reed, Michael and Simon, Barry},
  title = {Methods of Modern Mathematical Physics, Vol.~I: Functional Analysis},
  publisher = {Academic Press},
  address = {New York},
  year = {1980},
}

@article{davis1970,
  author = {Davis, Chandler and Kahan, William M.},
  title = {The Rotation of Eigenvectors by a Perturbation. {III}},
  journal = {SIAM Journal on Numerical Analysis},
  volume = {7},
  number = {1},
  pages = {1--46},
  year = {1970},
  doi = {10.1137/0707001}
}

@book{kato1995,
  author = {Kato, Tosio},
  title = {Perturbation Theory for Linear Operators},
  edition = {2},
  publisher = {Springer},
  address = {Berlin},
  year = {1995},
}

@book{stewart1990,
  author = {Stewart, G. W. and Sun, Ji-guang},
  title = {Matrix Perturbation Theory},
  publisher = {Academic Press},
  address = {Boston},
  year = {1990},
}

@article{haemers1995,
  author = {Haemers, Willem H.},
  title = {Interlacing Eigenvalues and Graphs},
  journal = {Linear Algebra and its Applications},
  volume = {226},
  pages = {593--616},
  year = {1995},
  doi = {10.1016/0024-3795(95)00199-2},
  note = {Published in the combined volumes 226--228},
}

@article{tropp2015,
  author = {Tropp, Joel A.},
  title = {An Introduction to Matrix Concentration Inequalities},
  journal = {Foundations and Trends in Machine Learning},
  volume = {8},
  number = {1--2},
  pages = {1--230},
  year = {2015},
  doi = {10.1561/2200000048},
}

@online{elhage2021transformerCircuits,
  author  = {Elhage, Nelson and Nanda, Neel and Olsson, Catherine and Henighan, Tom and Joseph, Nicholas and Mann, Ben and Askell, Amanda and Bai, Yuntao and Chen, Anna and Conerly, Tom and DasSarma, Nova and Drain, Dawn and Ganguli, Deep and Hatfield-Dodds, Zac and Hernandez, Danny and Jones, Andy and Kernion, Jackson and Lovitt, Liane and Ndousse, Kamal and Amodei, Dario and Brown, Tom and Clark, Jack and Kaplan, Jared and McCandlish, Sam and Olah, Christopher},
  title   = {A Mathematical Framework for Transformer Circuits},
  year    = {2021},
  url     = {https://transformer-circuits.pub/2021/framework/index.html},
  urldate = {2026-07-23},
}

@misc{bricken2023monosemantic,
  author = {Bricken, Trenton and Templeton, Adly and Batson, Joshua and Chen, Brian and Jermyn, Adam and Conerly, Tom and Turner, Nick and Anil, Cem and Denison, Carson and Askell, Amanda and Lasenby, Robert and Wu, Yifan and Kravec, Shauna and Schiefer, Nicholas and Maxwell, Tim and Joseph, Nicholas and Hatfield-Dodds, Zac and Tamkin, Alex and Nguyen, Karina and McLean, Brayden and Burke, Josiah E. and Hume, Tristan and Carter, Shan and Henighan, Tom and Olah, Christopher},
  title = {Towards Monosemanticity: Decomposing Language Models With Dictionary Learning},
  year = {2023},
  url = {https://transformer-circuits.pub/2023/monosemantic-features/index.html},
  note = {Transformer Circuits Thread},
  howpublished = {Technical report, Anthropic},
}

@inproceedings{conmy2023,
  author = {Conmy, Arthur and Mavor-Parker, Augustine N. and Lynch, Aengus and Heimersheim, Stefan and Garriga-Alonso, Adri{\`a}},
  title = {Towards Automated Circuit Discovery for Mechanistic Interpretability},
  booktitle = {Advances in Neural Information Processing Systems},
  volume = {36},
  pages = {16318--16352},
  year = {2023},
  eprint = {2304.14997},
  archiveprefix = {arXiv},
  primaryclass = {cs.LG},
  url = {https://arxiv.org/abs/2304.14997},
}

@article{belinkov2022,
  author = {Belinkov, Yonatan},
  title = {Probing Classifiers: Promises, Shortcomings, and Advances},
  journal = {Computational Linguistics},
  volume = {48},
  number = {1},
  pages = {207--219},
  year = {2022},
  doi = {10.1162/coli_a_00422},
}

@article{haufe2026,
  author = {Haufe, Stefan and Wilming, Rick and Clark, Benedict and Zhumagambetov, Rustam and Boubekki, Ahc{\`e}ne and Martin, J{\"o}rg and Panknin, Danny},
  title = {Explainable {AI} Needs Formalization},
  journal = {npj Artificial Intelligence},
  volume = {2},
  pages = {42},
  year = {2026},
  doi = {10.1038/s44387-026-00095-1},
}

@misc{joshi2026,
  author = {Joshi, Shruti and Mueller, Aaron and Klindt, David and Brendel, Wieland and Reizinger, Patrik and Sridhar, Dhanya},
  title = {Causality is Key for Interpretability Claims to Generalise},
  year = {2026},
  eprint = {2602.16698},
  archiveprefix = {arXiv},
  primaryclass = {cs.LG},
  doi = {10.48550/arXiv.2602.16698},
  url = {https://arxiv.org/abs/2602.16698},
}

@phdthesis{reynolds2026dissertation,
author={Reynolds,Blake},
year={2026},
title={A Mathematical Theory of Interpretation},
journal={ProQuest Dissertations and Theses},
pages={344},
note={Copyright - Database copyright ProQuest LLC; ProQuest does not claim copyright in the individual underlying works; Last updated - 2026-08-18},
addendum={ISBN: 9798186368939},
language={English},
url={https://ezproxy.library.wisc.edu/login?url=https://www.proquest.com/dissertations-theses/mathematical-theory-interpretation/docview/3375410333/se-2},
}

\end{document}